\newtheorem{definition}{Definition}
\newtheorem{theorem}{Theorem}
\newtheorem*{theorem*}{Theorem}
\newtheorem{lemma}{Lemma}
\newtheorem{corollary}{Corollary}
\newtheorem{assumption}{Assumption}
\newtheorem{claim}{Claim}
\newtheorem{proposition}{Proposition}
\DeclareMathOperator*{\argmin}{argmin}
\DeclareMathOperator*{\argmax}{argmax}
\DeclareMathOperator{\poly}{poly}
\DeclareMathOperator{\DualGap}{DualityGap}
\DeclareMathOperator{\KL}{KL}
\newcommand{\notshow}[1]{{}}
\newcommand{\AutoAdjust}[3]{{ \mathchoice{ \left #1 #2  \right #3}{#1 #2 #3}{#1 #2 #3}{#1 #2 #3} }}
\newcommand{\Xcomment}[1]{{}}
\newcommand{\InParentheses}[1]{\AutoAdjust{(}{#1}{)}}
\newcommand{\InBrackets}[1]{\AutoAdjust{[}{#1}{]}}
\newcommand{\InAngles}[1]{\AutoAdjust{\langle}{#1}{\rangle}}
\newcommand{\InNorms}[1]{\AutoAdjust{\|}{#1}{\|}}
\renewcommand{\part}[2]{\frac{\partial #1}{\partial #2}}
\newcommand{\R}{\mathbbm{R}}
\newcommand{\X}{\mathcal{X}}
\newcommand{\Y}{\mathcal{Y}}
\newcommand{\hx}{\widehat{x}}
\newcommand{\hy}{\widehat{y}}
\newcommand{\hz}{\widehat{z}}
\newcommand{\tx}{\tilde{x}}
\newcommand{\ty}{\tilde{y}}
\newcommand{\supp}{\mathrm{supp}}
\newcommand{\defeq}{\coloneqq}
\newcommand{\rmp}{RM$^+$}
\def\+#1{\mathcal{#1}}
\def\-#1{\mathbb{#1}}
\title{Fast Last-Iterate Convergence of Learning in Games Requires Forgetful Algorithms}
\author{%
  Yang Cai\\
  Yale\\
  \texttt{yang.cai@yale.edu} \\
  % examples of more authors
  \And
  Gabriele Farina\\
  MIT\\
  \texttt{gfarina@mit.edu} \\
  \And
  Julien Grand-Clément\\
  HEC Paris\\
  \texttt{grand-clement@hec.fr} \\
  \And
  Christian Kroer\\
  Columbia\\
  \texttt{ck2945@columbia.edu}\\
  \And
  Chung-Wei Lee\\
  USC\\
  \texttt{leechung@usc.edu}\\
  \And
  Haipeng Luo\\
  USC\\
  \texttt{haipengl@usc.edu}\\
  \And
  Weiqiang Zheng\\
  Yale\\
  \texttt{weiqiang.zheng@yale.edu}
  % \AND
  % Coauthor \\
  % Affiliation \\
  % Address \\
  % \texttt{email} \\
  % \And
  % Coauthor \\
  % Affiliation \\
  % Address \\
  % \texttt{email} \\
  % \And
  % Coauthor \\
  % Affiliation \\
  % Address \\
  % \texttt{email} \\
}
\begin{document}

\maketitle

\begin{abstract}
    Self-play via online learning is one of the premier ways to solve large-scale two-player zero-sum games, both in theory and practice. Particularly popular algorithms include optimistic multiplicative weights update (OMWU) and optimistic gradient-descent-ascent (OGDA). While both algorithms enjoy $O(1/T)$ ergodic convergence to Nash equilibrium in two-player zero-sum games, OMWU offers several advantages including logarithmic dependence on the size of the payoff matrix and $\Tilde{O}(1/T)$ convergence to coarse correlated equilibria even in general-sum games.
    However, in terms of last-iterate convergence in two-player zero-sum games, an increasingly popular topic in this area, OGDA guarantees that the duality gap shrinks at a rate of $(1/\sqrt{T})$, while the best existing last-iterate convergence for OMWU depends on some game-dependent constant that could be arbitrarily large.
    This begs the question: is this potentially slow last-iterate convergence an inherent disadvantage of OMWU, or is the current analysis too loose?
    Somewhat surprisingly, we show that the former is true.
    More generally, we prove that a broad class of algorithms that do not forget the past quickly all suffer the same issue: for any arbitrarily small $\delta>0$, there exists a $2\times 2$ matrix game such that the algorithm admits a constant duality gap even after $1/\delta$ rounds.
    This class of algorithms includes OMWU and other standard optimistic follow-the-regularized-leader algorithms.
\end{abstract}

% \tableofcontents

\section{Introduction}
Self-play via online learning is one of the premier ways to solve large-scale two-player zero-sum games. Major examples include super-human AIs for Go, Poker~\citep{brown2018superhuman}, and human-level AI for Stratego~\citep{perolat2022mastering} and alignment of large language models~\citep{munos2023nash}.
%In this paper, we study the convergence of online learning dynamics in zero-sum games. We consider the standard setting where in each iteration $t \ge 1$, the $x$-player (the $y$-player, respectively) chooses strategy $x^t \in \+X$ ($y^t \in \+Y$, respectively) and then receive losses $\ell^t_x$ ($\ell^t_y$, respectively). Classical results guarantee the average iterate $(\frac{1}{T}\sum_{t=1}^T x^t, \frac{1}{T}\sum_{t=1}^T y^t)$ of any no-regret learning dynamics converges to the set of Nash equilibria~\citep{freund1999adaptive}. However, the last-iterate $(x^T, y^T)$ diverges for many online algorithms, including Online Mirror Descent (OMD) and Follow-the-Regularized-Leader (FTRL), even in very simple zero-sum games.
In particular, Optimistic Multiplicative Weights Update (OMWU) and Optimistic Gradient Descent-Ascent (OGDA) are two of the most well-known online learning algorithms.
When applied to learning a two-player zero-sum game via self-play for $T$ rounds, the {\em average} iterates of both algorithms are known to be an $O(1/T)$-approximate Nash equilibrium~\citep{rakhlin2013optimization, syrgkanis2015fast}, while other algorithms, such as vanilla Multiplicative Weights Update (MWU) and vanilla Gradient Descent-Ascent (GDA), have a slower ergodic convergence rate of $O(1/\sqrt{T})$.

For multiple practical reasons, there is growing interest in studying the {\em last-iterate} convergence of these learning dynamics~\citep{daskalakis2019last, golowich2020last, wei2021linear,lee2021last}.
In this regard, existing results seemingly exhibit a gap between OGDA and OMWU --- the duality gap of the last iterate of OGDA is known to decrease at a rate of $O(1/\sqrt{T})$~\citep{cai2022finite, gorbunov2022last}, with no dependence on constants beyond the dimension and the smoothness of the players' utility functions of the game.\footnote{In finite two-player zero-sum games, the dependence is polynomial in the number of actions and the largest absolute value in the payoff matrix.} In contrast, the existing convergence rate for OMWU depends on some game-dependent constant that could be arbitrarily large,  even after fixing the dimension and the smoothness constant of the game~\citep{wei2021linear}.\footnote{We note that there are also linear-rate last-iterate results for OGDA when we allow dependence on such constants; see~\citep{wei2021linear}.}
Given the fundamental role of OMWU in online learning and its other advantages over OGDA (such as its logarithmic dependence on the number of actions), it is natural to ask the following question:
\begin{equation}
    \textit{Is the potentially slow last-iterate convergence an inherent disadvantage of OMWU?} \tag{*}
\end{equation}

\paragraph{Main Results.}
In this work, we show that the answer to this question is yes,
contrary to a common belief that better analysis and better last-iterate convergence results similar to those of OGDA are possible for OMWU. More specifically, we show the following.

\begin{theorem*}[Informal]
    %\label{theorem: no last-iterate rate}
    For OMWU with constant step size, there is no function $f$ such that the corresponding learning dynamics $\{(x^t,y^t)\}_{t\geq 1}$ in two-player zero-sum games $[0,1]^{d_1 \times d_2}$ has a last-iterate convergence rate of $f(d_1,d_2, T)$.\footnote{Under the same condition, OGDA has a last-iterate convergence rate of $\frac{\poly(d_1,d_2)}{\sqrt{T}}$.} More specifically, no function $f$ can satisfy
    \begin{itemize}
        \item[1.] $\DualGap(x^T, y^T) \le f(d_1,d_2, T)$ for all matrices $[0,1]^{d_1 \times d_2}$ and $T \ge 1$.
        \item[2.] $\lim_{T\rightarrow \infty}f(d_1,d_2, T) \rightarrow 0$.
    \end{itemize}
\end{theorem*}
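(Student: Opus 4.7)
The plan is to show, for any fixed step size $\eta > 0$ used by OMWU and any time horizon $T$, that one can construct a $2\times 2$ matrix game $A_T \in [0,1]^{2\times 2}$ such that OMWU's iterate $(x^T, y^T)$ satisfies $\DualGap(x^T, y^T) \geq c$ for a universal constant $c > 0$. This directly contradicts the existence of any $f$ satisfying both (1) and (2): taking supremum over matrices in (1), the function $T \mapsto \sup_A \DualGap(x^T, y^T)$ cannot tend to zero, contradicting (2).

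A natural one-parameter family to consider is $A_\alpha = \begin{pmatrix} 1 - \alpha & 0 \\ 0 & \alpha \end{pmatrix}$ for small $\alpha > 0$. Two facts motivate this choice. First, the duality gap at the uniform initialization equals $(1-\alpha)/2 - \alpha/2 = 1/2 - \alpha$, which is $\Omega(1)$ as long as $\alpha$ is bounded away from $1/2$. Second, the unique Nash $(x^\star, y^\star) = ((\alpha, 1-\alpha), (\alpha, 1-\alpha))$ is exponentially close to the boundary for small $\alpha$, so reaching it from uniform requires shifting the log-odds $\log(x_1^t/x_2^t)$ and $\log(y_1^t/y_2^t)$ from $0$ to approximately $\log \alpha$, i.e., a distance of $|\log\alpha|$.

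Bounding OMWU's speed is the easy step. Since entries of $A_\alpha$ lie in $[0,1]$, the (optimistic) multiplicative update changes each log-odds by at most $2\eta$ per round, so after $T$ steps $|\log(x_1^T/x_2^T)|, |\log(y_1^T/y_2^T)| \le 2\eta T$. Choosing $\alpha_T = \exp(-C\eta T)$ for a sufficiently large absolute constant $C > 2$ places the Nash strictly outside the region OMWU can reach in $T$ iterations, so $(x^T, y^T)$ is forced to remain far from $(x^\star, y^\star)$ in log-odds.

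The main obstacle is converting ``log-odds far from Nash'' into ``duality gap $\Omega(1)$.'' The difficulty is that OMWU's initial gradient on $A_\alpha$ at uniform pushes the row player toward row $1$ and the column player toward column $2$, i.e., toward the non-equilibrium corner $((1,0),(0,1))$, where the duality gap on $A_\alpha$ collapses to $O(\alpha)$. A naive lower bound along this trajectory therefore fails, and this is where the crux of the proof lies. To address it, I see two possible routes. One is to refine the construction so that every reachable point, not only uniform, has duality gap $\Omega(1)$: for instance, combining two matrices of the above form whose ``false-equilibrium corners'' are diagonally opposite, so OMWU is forced to oscillate in the interior of the simplex rather than settle at any corner. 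A cleaner alternative is a spectral/local-convergence argument: linearize OMWU around $(x^\star, y^\star)$ for $A_\alpha$ and show that the spectral radius of the induced update operator approaches $1$ polynomially in $\alpha$, yielding a quantitative lower bound on the number of iterations needed to halve the duality gap that grows without bound as $\alpha \to 0$. Either route reduces the problem to a careful trajectory analysis of the discrete OMWU dynamics on a near-degenerate $2\times 2$ game, which I expect to be the main technical work.
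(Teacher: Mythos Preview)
You correctly identify the obstacle with your diagonal game $A_\alpha$: the dynamics head toward a corner with duality gap $O(\alpha)$, so ``far in log-odds from Nash'' does not imply ``large duality gap.'' But neither proposed fix closes this. The spectral route controls contraction of \emph{distance} to equilibrium, not duality gap, and on $A_\alpha$ these decouple---the corner you identified is $\Omega(1)$ from $(x^\star,y^\star)$ yet has gap $O(\alpha)$, so a trajectory could sit there indefinitely, consistent with any bound on the local contraction rate, while having tiny gap throughout. More tellingly, OGDA on near-degenerate games also has a problem-dependent linear rate that degrades as the game degenerates, yet it achieves $O(1/\sqrt{T})$ last-iterate duality gap uniformly over $[0,1]^{2\times 2}$; slow local contraction therefore cannot be the mechanism that separates OMWU from OGDA. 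The ``combine two matrices'' idea is too vague, and you give no argument that OMWU would oscillate rather than settle at a low-gap corner of the combined game.

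The paper's argument is a global three-stage trajectory analysis that exploits something FTRL-specific: the iterate depends on the \emph{cumulative} loss $E^t$. Their (non-diagonal) game $A_\delta$ is chosen so that $e_y^t \ge -\delta$ for all $t$, regardless of $x^t$. In Stage~I, $E_y^t$ grows to a positive constant; in Stage~II it can only decrease at rate $\delta$ per step, so returning near zero takes $\Omega(1/(\eta\delta))$ iterations, during which the $x$-player accumulates $E_x \le -\Omega(1/(\eta\delta))$. The FTRL-specific step is Stage~III: since $x^t[1] = F_{\eta,R}(E_x^{t-1}+e_x^{t-1})$, the $x$-player is pinned at $x^t[1] \ge 1/(1+\delta)$ until that huge negative sum is undone, taking another $\Omega(1/(\eta\delta))$ steps, and during this window $y^t[1]$ overshoots equilibrium to at least $1/2 + c_2$. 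One then directly checks $\DualGap(x^t,y^t) \ge c_2$ at some $t = \Omega(1/(\eta\delta))$. OGDA escapes precisely because its update does not depend on the full history $E_x^t$; your proposal never invokes this accumulation/overshoot mechanism, which is the crux.
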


Our findings show that, despite the significantly superior \emph{regret} properties of OMWU compared to OGDA, its \emph{last-iterate convergence} properties are remarkably worse. In turn, this counters the viewpoint that ``Follow-the-Regularized-Leader (FTRL) is better than Online Mirror Descent (OMD)''~\citep{erven2021why}:
% \gabri{Christian you had a reference for this right? CK: yes, though it's a blog post. I added it.}
crucially, while OMWU is an instance of (optimistic) FTRL, OGDA is an instance of optimistic OMD that cannot be expressed in the FTRL formalism.

We further show that similar negative results extend to several other standard online learning algorithms, including a close variant of OGDA.
More concretely, our main results are as follows.

\begin{figure}[t]
    \includegraphics[scale=.72]{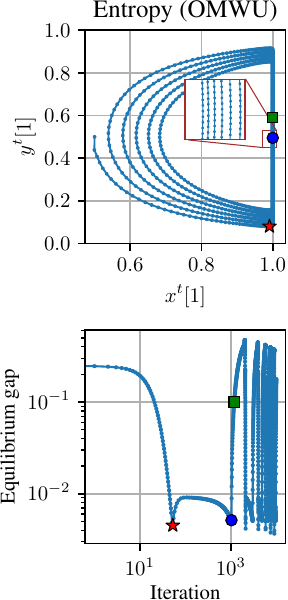}\hfill%
    \includegraphics[scale=.72]{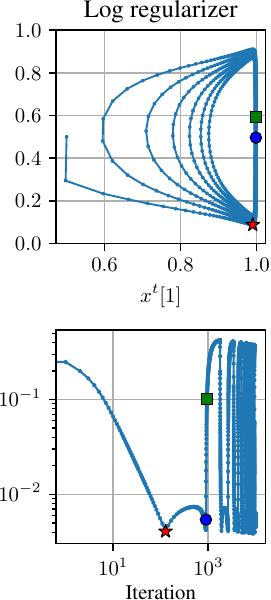}\hfill%
    \includegraphics[scale=.72]{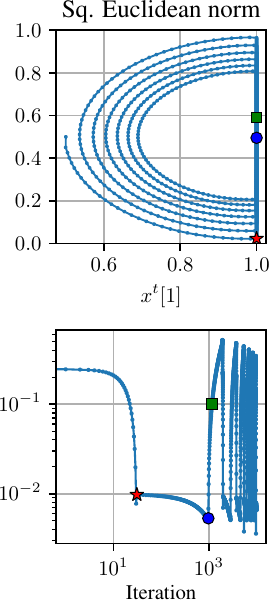}\hfill%
    \includegraphics[scale=.72]{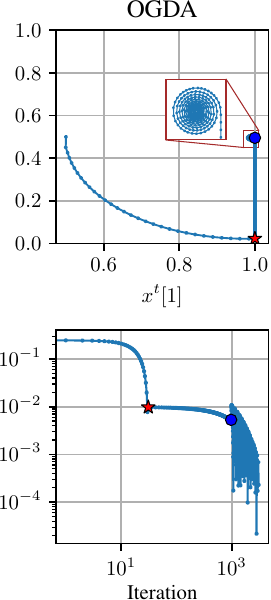}%
    \caption{Comparison of the dynamics produced by three variants of OFTRL with different regularizers (negative entropy, logarithmic regularizer, and squared Euclidean norm) and OGDA in the same game $A_\delta$ defined in \eqref{eq:A delta} for $\delta := 10^{-2}$. The bottom row shows the duality gap achieved by the last iterates.
    The OFTRL variants exhibit poor performance due to their lack of \emph{forgetfulness}, while OGDA converges quickly to the Nash equilibrium.
    Since the regularizers in the first two plots are Legendre, the dynamics are equivalent to the ones produced by optimistic OMD with the respective Bregman divergences.
    In the plot for OMWU we observe that $x^t[1]$ can get extremely close to the boundary (\emph{e.g.,} in the range $1 - e^{-50} < x^t[1] < 1$).
    To correctly simulate the dynamics, we used
    1000 digits of precision. The red star, blue dot, and green square illustrate the key times $T_1$, $T_2$, $T_3$ defined in our analysis in Section~\ref{sec:analysis}.
    %\cwl{I think it may be more consistent to label Euclidean FTRL as (squared) Euclidean norm}
    }
    \label{fig:intro plots}
\end{figure}

\begin{figure}[htp]
    \includegraphics[width=\linewidth]{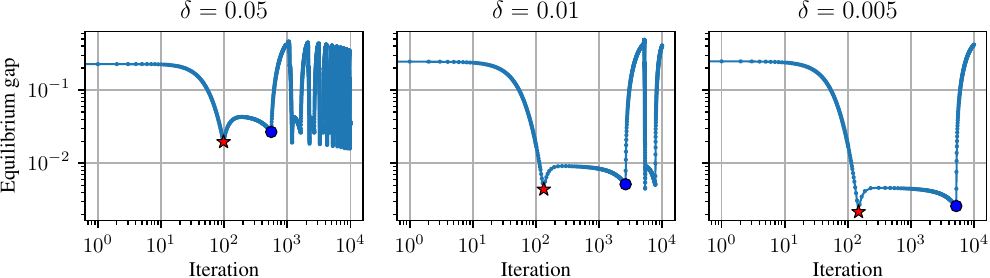}
    \caption{Performance of OMWU on the game $A_\delta$ defined in \cref{eq:A delta} for three choices of $\delta$. In all plots, the learning rate was set to $\eta = 0.1$. As predicted by our analysis, the length of the ``flat region'' between iteration $T_1$ (red star) and $T_2$ (blue dot) scales inversely proportionally with $\delta$.}
    \label{fig:delta effect}
\end{figure}

\begin{itemize}[leftmargin=*]
    \item We identify a broad family of Optimistic FTRL (OFTRL) algorithms that do not forget about the past quickly. We prove that, for any sufficiently small $\delta>0$, there exists a $2\times 2$ two-player zero-sum game such that, even after $1/\delta$ iterations, the duality gap of the iterate output by these algorithms is still a constant (\Cref{theorem: main}).
          This excludes the possibility of showing a game-independent last-iterate convergence rate similar to that of OGDA.

    \item We prove that many standard online learning algorithms, such as OFTRL with the entropy regularizer (equivalently, OMWU), the Tsallis entropy family of regularizers,
          the log regularizer, and the squared Euclidean norm regularizer, all fall into this family of non-forgetful algorithms and thus all suffer from the same slow convergence.
          Also note that Optimistic OMD (OOMD), another well-known family of algorithms, is equivalent to OFTRL when given a Legendre regularizer. Therefore, OOMD with the entropy, Tsallis entropy, and log regularizer also suffer the same issue.\footnote{We focus on optimistic variants of these algorithms since it is well-known that their vanilla version does not converge in the last iterate at all, see e.g.~\citep{mertikopoulos2018cycles, daskalakis2018limit,bailey2018multiplicative,cheung2019vortices}.}

    \item Finally, we also generalize our negative results from $2\times 2$ games to $2n\times 2n$ games for any positive integer $n$, strengthening our message that forgetfulness is generally needed in order to achieve fast last-iterate convergence.
\end{itemize}

\paragraph{Main Ideas.} Intuitively, we trace the poor last-iterate convergence properties of OFTRL to its \emph{lack of forgetfulness}.
The high-level idea of our hard $2\times 2$ game instance, parametrized by $\delta>0$, is as follows. First, it has a unique Nash equilibrium at which one player is $O(\delta)$ close to the boundary of the simplex. We refer to the first row of plots in \Cref{fig:intro plots}, where the equilibrium is noted by a blue dot (note that we can plot only $x[1],y[1]$ for each player, since $x[2]=1-x[1]$ and $y[2]=1-y[2]$). As can be seen, the iterates of OGDA and all three OFTRL variants initially have a two-phase structure. In the first phase, they converge to the lower-right area denoted by a red star in \Cref{fig:intro plots}. Then, from there all algorithms start moving towards the equilibrium. In particular, $y[1]$ increases. However, once they enter the vicinity of the equilibrium, the behavior depends on the algorithms. For OGDA, the dynamics start spiraling closer and closer to the equilibrium. On the other hand, for the OFTRL algorithm, the $x$ player has built up a lot of ``memory" of $x[1]$ being better than $x[2]$, and for this reason, $x[1]$ will stay very close to $1$ for a long time. During the time when $x[1]$ is close to $1$, $y[1]$ keeps increasing since the $y$ player receives gradients that indicates $y[1]$ is better than $y[2]$. As a result, the dynamics cannot ``stop'' near the equilibrium but start to move away from the equilibrium. The dynamics reach a point (denoted by a green square) whose duality gap is a constant and enter a new cycle where they move out towards the starting point of the learning process. This cycle repeats in smaller and smaller semi-ellipses that slowly converge to equilibrium. Note that the semi-ellipses correspond to the seesaw pattern in the equilibrium gap (second row of plots). OFTRL overshoots the equilibrium as it has built up a lot of "memory" of $x[1]$ being better than $x[2]$ along the phase from the red star to the blue circle, and it requires many iterations to "forget" this fact. We show that as we make $\delta$, the parameter defining the nearness to the boundary, smaller and smaller, it takes longer and longer for these semi-ellipses to get close to the equilibrium along the entire path, as illustrated in~\Cref{fig:delta effect}.

%On the other hand, for the OFTRL algorithms, the $y$ player has built up a lot of momentum in the direction of increasing $y[1]$, and for this reason they cannot ``stop'' near the equilibrium. Instead, they start to move away from the equilibrium, and enter a new cycle where they move out towards the starting point of the learning process. This cycle repeats in smaller and smaller semi-ellipses that slowly converge to equilibrium. Note that the semi-ellipses correspond to the seesaw pattern in the equilibrium gap (second row of plots).
% OFTRL overshoots the equilibrium as it has built up a lot of "memory" of $y[1]$ being better than $y[2]$ along the phase from the red star to the blue circle, and it requires many iterations to "forget" this fact. We show that as we make $\delta$, the parameter defining the nearness to the boundary, smaller and smaller, it takes longer and longer for these semi-ellipses to get close to the equilibrium along the entire path, as illustrated in~\Cref{fig:delta effect}.

% \textcolor{purple}{[*** Gabri: Maybe mention that this behavior means that basically one cannot really tell if OFTRL has converged: it will move super slowly for a while, only to then wake up suddenly and go for another spin. This is very annoying in practice of course, since we cannot even tell. ***]}

Our results are related to numerical observations made in the literature on solving large-scale extensive-form games. There, algorithms based on the regret-matching$^+$ (\rmp) algorithm~\citep{tammelin2015solving}, combined with the counterfactual regret minimization~\citep{zinkevich2007regret}, perform by far the best in practice. In contrast, the classical regret matching algorithm~\citep{hart2000simple} performs much worse, in spite of similar regret guarantees. It was later discovered that \rmp{} corresponds to OGD, while RM corresponds to FTRL~\citep{farina2021faster,flaspohler2021online}. It was hypothesized that RM builds up too much negative regret at times, and thus is slow to adapt to changes in the learning dynamics related to the strategy of the other player.
These numerical results and the hypothesis are consistent with our theoretical findings: FTRL (and thus RM) is not able to ``forget,'' whereas OGD and OGDA can forget and thereby quickly adapt to changes in which actions should be played.

%Most existing results suggest that the slow last-iterate convergence is because the current analysis is too loose and an $O(\frac{1}{\sqrt{T}})$ rate is possible for OMWU. Our main result

%When the problem dependent constants are very large (say $C_1, C_2 = \Omega(\exp(d))$), the upper bound on duality gap is meaningless until $T > \Omega(\exp(d))$ iterations.

\subsection{Related Work}

The literature on last-iterate convergence of online learning methods in games is vast. In this section, we will cover key contributions focusing on the case of interest for this paper: discrete-time dynamics for two-player zero-sum normal-form games.

\emph{Convergence of OGDA.}
Average-iterate convergence of OGDA has been studied for minimax optimization problems in both the unconstrained~\citep{mokhtari2020convergence} and constrained settings~\citep{hsieh2019convergence}. Last-iterate convergence of OGDA in \emph{unconstrained} saddle-point problems has been shown in \citep{daskalakis2018training, golowich2020tight}.
In the (constrained) game setting,
\citet{wei2021linear, anagnostides2022last} showed \emph{best}-iterate convergence to the set of Nash equilibria in any two-player zero-sum game with payoff matrix $A$ at a rate of $O(\textrm{poly}(d_1, d_2, \max_{i,j} |A_{i,j}|)/\sqrt T)$
%\yang{Shall we change it to $\textrm{poly}(n, m, \max_{i,j} |A_{i,j}|)/\sqrt T$? }
using constant learning rate, where $d_1$ and $d_2$ are the number of actions of the players. A stronger result was shown by \citet{cai2022finite}, who showed
% \gabri{what is the right version to cite?}
that the same rate applies to the \emph{last} iterate.

\emph{Convergence of OMWU.} Optimistic multiplicative weights update (also known as optimistic hedge) is often regarded as the premier algorithm for learning in games. Unlike OGDA, it guarantees sublinear regret with a \emph{logarithmic} dependence on the number of actions, and it is known to guarantee only polylogarithmic regret per player when used in self play even for general-sum games~\citep{daskalakis2021near-optimal}. It can be applied with similar strong properties beyond normal-form games in several important combinatorial settings~\citep{takimoto2003path,koolen2010hedging,farina22:kernelized}. The work by \citet{daskalakis2019last} established \emph{asymptotic} last-iterate convergence for OMWU in games using a small learning rate under the assumption of a unique Nash equilibrium. Similar asymptotic results without the unique equilibrium assumption were also given by \citet{mertikopoulos2019optimistic,hsieh2021adaptive}. \citet{wei2021linear} were the first to provide \emph{nonasymptotic} learning rates for OMWU. Specifically, they showed a linear rate of convergence in games with a unique equilibrium, albeit with a dependence on a condition number-like quantity that could be arbitrarily large given fixed $d_1$, $d_2$, and $\max_{i,j}|A_{i,j}|$.%that can be exponential in the representation of the game. 
This result was later extended by \citet{lee2021last} to extensive-form games. Unlike OGDA, no last-iterate convergence result for OMWU with a polynomial dependence on only the natural parameters of the game (\textit{i.e.}, $d_1$, $d_2$, and $\max_{i,j}|A_{i,j}|$) is known. As we show in this paper, perhaps surprisingly, this is no coincidence: in general, OMWU does not exhibit a last-iterate convergence rate that solely depends on these parameters, whether polynomial or not.

\emph{FTRL vs. OMD.}
While the last-iterate convergence of instantiations of Optimistic Online Mirror Descent has been observed before, the properties of Follow-the-Regularized-Leader dynamics remain mostly elusive. The present paper partly explains this vacuum: all standard instantiations of optimistic FTRL \emph{cannot hope} to converge in iterates with only a polynomial dependence on the natural parameters of the game, unlike optimistic OMD.
Complications in obtaining last-iterate convergence results for continuous-time FTRL instantiations were already reported by \citet{vlatakis2020no}, who showed the necessity of \emph{strict} Nash equilibria. %Our paper shows that such a condition is not sufficient: the games we introduce have a unique Nash equilibrium.

\emph{Exploiting a no-regret learner.}
The forgetfulness property that we identify is closely related to the concept of \emph{mean-based} learning algorithms from \citet{braverman2018selling}. Intuitively, mean-based algorithms are ones such that if the mean reward for action $a$ is significantly greater than the mean reward for action $b$, then the algorithm selects $b$ with negligible probability.
They show that MWU is mean-based, along with Follow-the-Perturbed-Leader and the Exp3 bandit algorithm.
\citet{braverman2018selling} shows that "mean-based" algorithms are exploitable when learning to bid in first-price auctions, whereas \citet{kumar2024strategically} shows that OGD does not suffer from this exploitability issue.

% \begin{itemize}
%     \item OGDA best iterate
%     \item OGDA last iterate
%     \item OMWU asymptotic (Panageas + Panayotis?)
%     \item OMWU last iterate for unique equilibrium case: dependence potentially exponential
% \end{itemize}

\section{Preliminaries and Problem Setup}
\label{sec: pre}
We consider the standard setting of no-regret learning in a zero-sum game $A \in [0,1]^{d_1 \times d_2}$. In each iteration $t \ge 1$, the $x$-player chooses $x^t \in \+X := \Delta^{d_1}$ while the $y$-player chooses $y^t \in \+Y:= \Delta^{d_2}$.
%We write $\+Z:= \+X \times \+Y$. 
Then the $x$-player receives loss vector $\ell^t_x = Ay^t$ while the $y$-player receives loss vector $\ell^t_y = -A^\top x^t$.
The goal is to find or approximate a \emph{Nash equilibrium} $(x^*,y^*)$ to the game such that $x^*\in\argmin_{x\in\+X}\max_{y\in\+Y}x^\top Ay$ and $y^*\in\argmax_{y\in\+Y}\min_{x\in\+X}x^\top Ay$.
The approximation error of a strategy pair $(x, y)$ is measured by its duality gap, defined as $\DualGap(x,y) = \max_{y'\in \+Y} x^\top Ay' - \min_{x' \in \+X} {x'}^\top A y$, which is always non-negative.

Popular no-regret algorithms for solving the game include the Optimistic Follow-the-Regularized-Leader (OFTRL) algorithm and the Optimistic Online Mirror Descent (OOMD) algorithm, both defined in terms of a certain regularizer $R: \Delta^d \rightarrow \-R$ (for some general dimension $d$). The corresponding Bregman divergence of $R$ is $D_R(x,x') = R(x) - R(x') - \InAngles{\nabla R(x'), x- x'}$, and the regularizer is $1$-strongly convex if $D_R(x,x') \ge \frac{1}{2}\InNorms{x -x'}_2^2$ for all $x, x' \in \Delta^d$.

\paragraph{Optimistic Online Mirror Descent (OOMD)}
Starting from an initial point $ (x^1, y^1) =  (\hx^1, \hy^1)$, the OOMD algorithm with regularizer $R$ and steps size $\eta > 0$ updates in each iteration $t \ge 2$,
\begin{equation}
    \label{OOMD}
    \tag{OOMD}
    \begin{aligned}
        \hx^{t} & = \argmin_{x \in \X} \{\eta \InAngles{x, \ell^{t-1}_x} + D_R(x, \hx^{t-1})\}, \quad  x^{t} = \argmin_{x \in \X} \{\eta \InAngles{x, \ell^{t-1}_x} + D_R(x, \hx^{t})\},  \\
        \hy^{t} & = \argmin_{y \in \Y} \{\eta \InAngles{y, \ell^{t-1}_y} + D_R(y, \hy^{t-1})\}, \quad  y^{t} = \argmin_{y \in \Y} \{\eta \InAngles{y, \ell^{t-1}_y} + D_R(y, \hy^{t})\}.
    \end{aligned}
\end{equation}
In particular, we call OOMD with a squared Euclidean norm regularizer, that is, $R(x)=\frac{1}{2}\sum_{i=1}^dx[i]^2$ \emph{optimistic gradient-descent-ascent} (OGDA).
When $R$ is the negative entropy, that is, $R(x)=\sum_{i=1}^{d}x[i]\log x[i]$, we call the resulting OOMD algorithm \emph{optimistic multiplicative weights update} (OMWU).
OGDA and OMWU have been extensively studied in the literature regarding their last-iterate convergence properties in zero-sum games.
Specifically, both OMWU and OGDA guarantee that $(x^t,y^t)$ approaches to a Nash equilibrium as $t\to \infty.$

\paragraph{Optimistic Follow-the-Regularized-Leader (OFTRL)}
Define the cumulative loss vectors $L^t_x := \sum_{k=1}^t \ell^k_x$ and $L^t_y := \sum_{k=1}^t \ell^k_y$. %Starting from an initial point $ (x^1, y^1)$, t
The update rule of OFTRL with regularizer $R$ is for each $t \ge 1$,
\begin{equation}
    \label{OFTRL}
    \tag{OFTRL}
    \begin{aligned}
        x^t & = \argmin_{x \in \X} \left\{  \InAngles{x, L^{t-1}_x + \ell^{t-1}_x } + \frac{1}{\eta} R(x)\right\}, \\
        y^t & = \argmin_{y \in \Y} \left\{  \InAngles{y, L^{t-1}_y + \ell^{t-1}_y } + \frac{1}{\eta} R(y)\right\}.
    \end{aligned}
\end{equation}
Throughout the paper, we consider the following regularizers:
\begin{itemize}[leftmargin=*]
    \item Negative entropy ($R(x)=\sum_{i=1}^{d}x[i]\log x[i]$): the resulting OFTRL algorithm coincides with OMWU defined by the OOMD framework previously.
    \item Squared Euclidean norm ($R(x)=\frac{1}{2}\sum_{i=1}^dx[i]^2$): note that the resulting algorithm is different from OGDA since the squared Euclidean norm is not a Legendre regularizer. As we will show, the two algorithms behave very differently in terms of last-iterate convergence.
    \item Log barrier ($R(x)=\sum_{i=1}^d-\log (x[i])$): we also call it the log regularizer.
    \item Negative Tsallis entropy regularizers ($R(x) = \frac{1 - \sum_{i=1}^d (x[i])^\beta}{1-\beta}$ parameterized by $\beta \in (0,1)$).
\end{itemize}

%Here we assume the regularizer is \emph{separable} in the sense that $R(x) = \sum_{i=1}^d R_i(x_i)$. \yc{Where do we use separability?} 

\paragraph{The 2-dimension case}
We denote $x \in \R^2$ as $x = [x[1], x[2]]^\top$. For $d_1=2$, finding $x^t$ of OFTRL reduces to the following 1-dimensional optimization problem:
\begin{equation*}
    x^t[1] = \argmin_{x \in [0,1]} \left\{ x \cdot (L^{t-1}_x[1] + \ell^{t-1}_x[1] - L^{t-1}_x[2] - \ell^{t-1}_x[2]) + \frac{1}{\eta} R(x)  \right\}, \quad x^t[2] = 1 - x^t[1],
\end{equation*}
where we slightly abuse the notation and denote $R(x) = R([x, 1-x])$ for $x \in [0,1]$. We introduce two notations (the case for the $y$-player is similar):
let $e^t_x = \ell^t_x[1] - \ell^t_x[2]$ be the difference between the losses of the two actions, and $E^t_x = \sum_{k=1}^t e^k_x$ be the cumulative difference between the losses of the two actions.
%Here we further denote the difference between the cumulative loss of the two actions as $e^{t-1}_x = L^{t-1}_x[1] - L^{t-1}_x[2]$. 
For OFTRL, it is clear that the update of $x^t$ only depends on the differences $E^{t-1}_x, e^{t-1}_x$, the step size $\eta$, and the regularizer $R$. For this reason, we define $F_{\eta, R} : \-R \rightarrow [0,1]$ as follows:
\begin{align}
    \label{function:F}
    F_{\eta, R}(e) := \argmin_{x \in [0,1]}\left\{ x \cdot e + \frac{1}{\eta} R(x) \right\}.
\end{align}
We assume the function $F_{\eta, R}$ is well-defined, \textit{i.e.}, the above optimization problem admits a unique solution in $[0,1]$. This is a condition easily satisfied, for example, when the regularizer $R$ is strongly convex. Then the OFTRL algorithm can be written as
\begin{align*}
    x^t[1] = F_{\eta, R}(E^{t-1}_x + e^{t-1}_x), \quad x^t[2] = 1 - x^t[1].
\end{align*}
%For $x \in (0,1)$, we assume the existence of an inverse function $F^{-1}: (0,1) \rightarrow \-R$.
The following lemma shows that the function $F_{\eta, R}$ is non-increasing (we defer missing proofs in the section to \Cref{app:pre}).
\begin{lemma}[Monotonicity of $F_{\eta, R}$]
    \label{lemma: F monotone}
    The function $F_{\eta, R}(\cdot): \-R \rightarrow [0,1]$ defined in \eqref{function:F} is non-increasing.
\end{lemma}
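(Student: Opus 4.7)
The plan is to prove monotonicity by a standard two-point (swap) argument based on the optimality of the minimizers, which avoids any smoothness assumption on $R$ beyond the well-definedness of $F_{\eta,R}$.

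First, I would fix arbitrary reals $e_1 < e_2$ and let $x_1 \defeq F_{\eta,R}(e_1)$ and $x_2 \defeq F_{\eta,R}(e_2)$, both in $[0,1]$. My goal is to show $x_1 \geq x_2$. Since $x_1$ is the unique minimizer of $x \mapsto x e_1 + R(x)/\eta$ on $[0,1]$ and $x_2 \in [0,1]$ is a feasible competitor, I get
\[
x_1 e_1 + \tfrac{1}{\eta} R(x_1) \;\leq\; x_2 e_1 + \tfrac{1}{\eta} R(x_2),
\]
and symmetrically, from the optimality of $x_2$ for the objective parameterized by $e_2$,
\[
x_2 e_2 + \tfrac{1}{\eta} R(x_2) \;\leq\; x_1 e_2 + \tfrac{1}{\eta} R(x_1).
\]

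Next, I would add these two inequalities. The regularizer terms cancel, leaving $x_1 e_1 + x_2 e_2 \leq x_2 e_1 + x_1 e_2$, which rearranges to
\[
(x_1 - x_2)(e_1 - e_2) \;\leq\; 0.
\]
Since by assumption $e_1 - e_2 < 0$, this forces $x_1 - x_2 \geq 0$, i.e., $F_{\eta,R}(e_1) \geq F_{\eta,R}(e_2)$, which is exactly the non-increasing property claimed.

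There is no real obstacle here: this is the classical ``four-point'' monotonicity argument, and it only relies on $F_{\eta,R}$ being well-defined (so that the inequalities from optimality hold). In particular, I never need to differentiate $R$ or appeal to any strong convexity — the proof works verbatim for all regularizers considered in the paper (negative entropy, negative Tsallis entropy, log barrier, squared Euclidean norm), as long as the minimizer is unique so that $F_{\eta,R}$ is a function. If one wanted strict monotonicity on the interior (where the optimum is not pinned to $\{0,1\}$), strict convexity of $R$ would upgrade the weak inequality to a strict one, but for the lemma as stated (non-increasing) the weak form suffices.
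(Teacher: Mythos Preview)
Your proof is correct and takes essentially the same approach as the paper: both use the two optimality inequalities at $e_1$ and $e_2$ and combine them to obtain $(x_1-x_2)(e_1-e_2)\le 0$, from which $x_1\ge x_2$ follows. The only cosmetic difference is that the paper chains the two inequalities through the term $\tfrac{1}{\eta}(R(x_1)-R(x_2))$ rather than adding them, but the underlying argument is identical.
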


We present some blanket assumptions on the regularizer, which are satisfied by all the regularizers introduced before.
\begin{assumption}
    \label{assumption:standard}
    We assume that the regularizer $R$ satisfies the following properties: the function $F_{\eta, R}: \-R \rightarrow [0,1]$ defined in \eqref{function:F} is, %\yc{We also need to assume that $F_{\eta,R}$ is well-defined. If we assume $R$ to be strongly convex, then this and item 3 are both implied.}
    \begin{itemize}
        \item[1.] \textbf{Unbiased:} $F_{\eta, R}(0) = \frac{1}{2}$.
        \item[2.] \textbf{Rational:} $\lim_{E \rightarrow -\infty}F_{\eta, R}(E) = 1$ and $\lim_{E \rightarrow +\infty}F_{\eta, R}(E) = 0$.
        \item[3.] \textbf{Lipschitz continuous:} There exists $L \ge 0$ such that $F_{1, R}$ is $L$-Lipschitz.

            %\hl{this is a consequence of the regularizer being strongly convex, so I wonder if we should just directly assume a strongly-convex regularizer, which is standard anyway.}\hl{adding back a previous comment that got deleted: why do we need monotonicity of $L$?}\weiqiangnote{[WZ: We do not need that. I have removed the monotonicity assumption]}
    \end{itemize}
\end{assumption}
Item 1 in \Cref{assumption:standard} shows that the initial strategy is the uniform distribution over the two actions, which is standard in practice. The rational assumption (item 2 in \Cref{assumption:standard}) is natural since otherwise, the algorithm could not even converge to a pure Nash equilibrium. The Lipschitzness (item 3 in \Cref{assumption:standard}) is implied when the regularizer is strongly convex over $[0,1]^2$ (see \Cref{lemma:Lipschitz}),
and it further implies Lipschitzness of $F_{\eta, R}$ for any $\eta$ as shown in the following proposition.
\begin{proposition}
    \label{prop: lip}
    The function $F_{\eta, R}$ satisfies $F_{\eta, R}\InParentheses{E / \eta} = F_{1, R}(E)$. If $F_{1, R}$ is $L$-Lipschitz, then $F_{\eta, R}$ is $\eta L$-Lipschitz for  any $\eta > 0$.
\end{proposition}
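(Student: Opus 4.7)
The plan is to prove both claims by a direct rescaling argument on the definition of $F_{\eta,R}$. Recall
\[
F_{\eta, R}(e) = \argmin_{x \in [0,1]} \left\{ x \cdot e + \tfrac{1}{\eta} R(x) \right\}.
\]
For the first claim, I would substitute $e = E/\eta$ into the definition, giving the objective $x \cdot (E/\eta) + \tfrac{1}{\eta} R(x)$. Since $\eta > 0$, multiplying the objective by the positive constant $\eta$ does not change the argmin, so the minimizer of $x \cdot (E/\eta) + \tfrac{1}{\eta} R(x)$ over $[0,1]$ equals the minimizer of $x \cdot E + R(x)$ over $[0,1]$, which by definition is $F_{1, R}(E)$. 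The assumption that $F_{\eta,R}$ is well-defined (a unique argmin exists) ensures the minimizer is indeed a single point, and the scaling argument preserves uniqueness. This yields $F_{\eta, R}(E/\eta) = F_{1, R}(E)$.

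For the Lipschitz claim, I would use the identity just established to transfer the modulus from $F_{1,R}$ to $F_{\eta,R}$. Specifically, for any $e_1, e_2 \in \mathbb{R}$, setting $E_i = \eta e_i$ gives $F_{\eta, R}(e_i) = F_{1, R}(\eta e_i)$. Therefore
\[
\left| F_{\eta, R}(e_1) - F_{\eta, R}(e_2) \right|
= \left| F_{1, R}(\eta e_1) - F_{1, R}(\eta e_2) \right|
\le L \cdot |\eta e_1 - \eta e_2|
= \eta L \cdot |e_1 - e_2|,
\]
which establishes that $F_{\eta, R}$ is $\eta L$-Lipschitz.

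There is no real obstacle here; both claims reduce to the homogeneity of the argmin under positive rescaling of the objective, combined with a change of variables. The only thing to be slightly careful about is noting that $\eta > 0$ is required so that scaling the objective preserves the argmin (rather than flipping it to an argmax) and that uniqueness of the minimizer is preserved under the rescaling, which is immediate from the assumption that $F_{\eta,R}$ and $F_{1,R}$ are both well-defined.
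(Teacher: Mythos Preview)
Your proof is correct and follows essentially the same approach as the paper: both establish the identity by rescaling the objective by $\eta>0$ (which preserves the argmin), and then derive the Lipschitz bound directly from $F_{\eta,R}(e)=F_{1,R}(\eta e)$. The paper's version is simply more terse, stating the chain of equalities and noting that the second claim ``follows directly.''
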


\section{Slow Convergence of OFTRL: A Hard Game Instance}\label{sec:analysis}

% \subsection{Description of the Game Matrix}
% Throughout the paper, we focus on
We give negative results on the last-iterate convergence properties of OFTRL by studying its behavior on a surprisingly simple $2 \times 2$ two-player zero-sum games. The game's loss matrix $A_{\delta}$ is parameterized by $\delta\in(0,\frac{1}{2})$ and is defined as follows:
\begin{equation}\label{eq:A delta}
    A_\delta \defeq \begin{bmatrix}
        \frac{1}{2} + \delta & \frac{1}{2} \\[2mm]
        0                    & 1
    \end{bmatrix}.
\end{equation}

\subsection{Basic Properties}
%The Nash equilibrium of the game $A_\delta$ is a solution to $\min_{x \in \Delta^2} \max_{y \in \Delta^2} x^\top A_\delta y$. 
We summarize some useful properties of $A_\delta$ in the following proposition (missing proofs of this section can be found in \Cref{app:analysis}).
\begin{proposition}\label{proposition:properties of A}
    The matrix game $A_\delta$ satisfies:
    \begin{itemize}
        \item[1.] $A_\delta$ has a unique Nash equilibrium $x^* = [  \frac{1}{1+\delta}, \frac{\delta}{1 + \delta}]$ and $y^* = [ \frac{1}{2(1+\delta)}, \frac{1 + 2\delta}{2(1 + \delta)}]$.
        \item[2.] For a strategy pair $(x^t, y^t)$, the loss vectors (\textit{i.e.}, gradients) for the two players are respectively:
            \begin{align}
                \ell^t_x = A_\delta y^t =
                \begin{bmatrix}
                    \frac{1}{2} + \delta y^t[1] \\
                    1 - y^t[1]
                \end{bmatrix}
                \quad
                \ell^t_y = -A_\delta^\top x^t =
                -\begin{bmatrix}
                     (\frac{1}{2} + \delta)x^t[1] \\
                     1 - \frac{1}{2}x^t[1]
                 \end{bmatrix}.
            \end{align}
            Moreover,
            \begin{align*}
                 & e^t_x = \ell^t_x[1] -\ell^t_x[2] = -\frac{1}{2} + (1+\delta)y^t[1] \in [-\frac{1}{2}, \frac{1}{2} + \delta] \\
                 & e^t_y = \ell^t_y[1] -\ell^t_y[2] =  1 -  (1+\delta) x^t[1] \in [-\delta, 1].
            \end{align*}
    \end{itemize}
\end{proposition}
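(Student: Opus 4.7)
Both items are pure calculations, so the plan is simply to organize the arithmetic carefully and handle uniqueness with a short case analysis.

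For item 2, I would start with the easier piece. Expanding the matrix-vector products gives $A_\delta y^t = \bigl[(\tfrac12+\delta)y^t[1] + \tfrac12 y^t[2],\, y^t[2]\bigr]^\top$, and substituting $y^t[2] = 1 - y^t[1]$ yields the stated formula for $\ell^t_x$. Likewise $A_\delta^\top x^t = \bigl[(\tfrac12+\delta)x^t[1],\, \tfrac12 x^t[1] + x^t[2]\bigr]^\top$, and substituting $x^t[2] = 1 - x^t[1]$ gives the stated formula for $\ell^t_y = -A_\delta^\top x^t$. Subtracting the two coordinates produces $e^t_x = -\tfrac12 + (1+\delta) y^t[1]$ and $e^t_y = 1 - (1+\delta) x^t[1]$; the claimed intervals follow from $y^t[1], x^t[1] \in [0,1]$.

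For item 1, my plan has two halves: verification and uniqueness. For verification, I would directly compute $A_\delta y^*$ using the formula from item 2 with $y^t = y^*$: this gives $\ell^t_x[1] = \tfrac12 + \delta \cdot \tfrac{1}{2(1+\delta)} = \tfrac{1+2\delta}{2(1+\delta)}$ and $\ell^t_x[2] = 1 - \tfrac{1}{2(1+\delta)} = \tfrac{1+2\delta}{2(1+\delta)}$, so $A_\delta y^*$ is a constant vector and every $x \in \Delta^2$ (in particular $x^*$) is a best response. Symmetrically, $A_\delta^\top x^*$ has both coordinates equal to $\tfrac{1+2\delta}{2(1+\delta)}$, so $y^*$ is a best response. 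Hence $(x^*, y^*)$ is a Nash equilibrium.

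For uniqueness, I would first rule out pure-strategy equilibria by inspecting all four entries of $A_\delta$: in each case one player has a strictly profitable unilateral deviation (e.g., from $(1,1)$ the $x$-player moves to row 2 to reduce its loss from $\tfrac12+\delta$ to $0$; the other three cases are analogous). Since no pure equilibrium exists in this $2\times 2$ zero-sum game, any equilibrium must have both players fully mixing, and then the standard indifference conditions apply: $x^*$ must equalize the two coordinates of $-A_\delta^\top x^*$, giving $(\tfrac12+\delta) x^*[1] = 1 - \tfrac12 x^*[1]$, whose unique solution in $[0,1]$ is $x^*[1] = \tfrac{1}{1+\delta}$, and $y^*$ must equalize the two coordinates of $A_\delta y^*$, giving $\tfrac12 + \delta y^*[1] = 1 - y^*[1]$, whose unique solution is $y^*[1] = \tfrac{1}{2(1+\delta)}$. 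There is no real obstacle here; the only point requiring mild care is remembering that the uniqueness argument via indifference is contingent on ruling out pure equilibria, which is why I sketch the pure-equilibria case analysis explicitly.
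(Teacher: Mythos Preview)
Your proposal is correct. The paper actually does not supply a proof of this proposition at all (the appendix only proves the later lemma and theorems of the section), treating the computations and the $2\times 2$ uniqueness argument as routine; your write-up carries out precisely the standard verification and indifference-condition argument one would expect.
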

In particular, we notice that $e^t_y \ge -\delta$. It implies that if the cumulative differences between the losses of the two actions $E^t_y$ is large, then it takes $\Omega(\frac{1}{\delta})$ iterations to make $E^t_y$ small (close to $0$). This has important implications for non-forgetful algorithms like OFTRL that look at the whole history of losses. Since OFTRL chooses the strategy $y^t$ based on $E^t_y$, it could be trapped in a bad action for a long time even if the current gradients suggest that the other action is better. This is the key observation for our main negative results on the slow last-iterate convergence rates of OFTRL.

%The game matrix $A_\delta$ has the useful property that both the duality gap and the gradient can be described succinctly when $x[1]\approx 1$:
% \begin{itemize}
%     \item  The duality gap of $(x, y)$, when $x[1] \approx 1$, is
%           \[
%               \gap((x, y)) \approx \begin{cases}
%                   \delta (x[1] - y[1]) & \quad y[1] \le  \frac{1}{2(1+\delta)}, \\
%                   y[1] - \frac{1}{2}   & \quad y[1] > \frac{1}{2(1+\delta)}.
%               \end{cases}
%           \]
%     \item The gradient of $(x, y)$, when $x[1] \approx 1$, is
%           \[
%               A_\delta y =
%               \begin{bmatrix}
%                   \frac{1}{2} + \delta y[1] \\
%                   1 - y[1]
%               \end{bmatrix} \quad
%               -A_\delta^\top x \approx - \begin{bmatrix}
%                   \frac{1}{2} + \delta \\
%                   \frac{1}{2}
%               \end{bmatrix}, \]
% \end{itemize}
%In \Cref{theorem: main}, we show the slow last-iterate convergence of OFTRL based on different starting points. 
%The proof strategy relies on showing that after many iterations, the iterates visit a region where the duality gap is large. In particular, we use the following lemma. 
The following lemma shows that in a particular region of $(x, y)$, the duality gap is a constant. %(proof in \Cref{app:duality gap}). 
%This lemma is useful when we prove the slow last-iterate convergence rate of OFTRL.
\begin{lemma}\label{lem:duality gap is large}
    Let $\delta, \epsilon \in (0, \frac{1}{2})$. For any $x,y \in \Delta^2$ such that $x[1] \geq \frac{1}{1+\delta}$ and $y[1] \geq \frac{1}{2}+\epsilon$, the duality gap of $(x, y)$ for game $A_\delta$ (defined in \eqref{eq:A delta}) satisfies
    $\DualGap(x,y) \geq \epsilon.$
\end{lemma}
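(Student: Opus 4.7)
The plan is to compute $\DualGap(x,y) = \max_{y'\in\Delta^2} x^\top A_\delta y' - \min_{x'\in\Delta^2} {x'}^\top A_\delta y$ directly. On the $2$-simplex, each inner optimization is a linear program whose optimum is attained at a vertex, so it reduces to comparing two scalar quantities on each side. The hypotheses on $x[1]$ and $y[1]$ are tuned precisely to pin down which vertex wins in each case, after which the bound follows by arithmetic.

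For the min side, I invoke the expression for $A_\delta y$ already recorded in \Cref{proposition:properties of A}: $(A_\delta y)[1] = \tfrac{1}{2} + \delta y[1]$ and $(A_\delta y)[2] = 1 - y[1]$. Under the assumption $y[1] \ge \tfrac{1}{2}+\epsilon$, the first coordinate is strictly larger than $\tfrac{1}{2}$ while the second is at most $\tfrac{1}{2}-\epsilon$. Hence the best response for the $x$-player is $e_2$, and $\min_{x'} {x'}^\top A_\delta y = 1 - y[1] \le \tfrac{1}{2}-\epsilon$.

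For the max side, I will compute $x^\top A_\delta = \bigl[(\tfrac{1}{2}+\delta)x[1],\; 1 - \tfrac{1}{2}x[1]\bigr]$ directly from the definition of $A_\delta$ in \eqref{eq:A delta}. The first coordinate dominates the second exactly when $(1+\delta)x[1] \ge 1$, i.e. $x[1] \ge \tfrac{1}{1+\delta}$ --- which is precisely our hypothesis (and exactly the coordinate $x^*[1]$ of the Nash equilibrium from \Cref{proposition:properties of A}). So the $y$-player's best response is $e_1$, and $\max_{y'} x^\top A_\delta y' = (\tfrac{1}{2}+\delta)x[1] \ge \frac{\tfrac{1}{2}+\delta}{1+\delta} = \frac{1+2\delta}{2(1+\delta)}$. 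Subtracting the two bounds yields $\DualGap(x,y) \ge \frac{1+2\delta}{2(1+\delta)} - \bigl(\tfrac{1}{2}-\epsilon\bigr) = \frac{\delta}{2(1+\delta)} + \epsilon \ge \epsilon$, as desired (in fact slightly stronger than stated).

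There is no genuine obstacle in this argument: it is essentially one matrix-vector multiplication plus a two-way comparison on each side. The only point that deserves care is checking that the two hypothesis thresholds $x[1] \ge \tfrac{1}{1+\delta}$ and $y[1] \ge \tfrac{1}{2}+\epsilon$ really do select the ``equilibrium-side'' vertex in each inner optimization. This is immediate once the formulas for $A_\delta y$ and $x^\top A_\delta$ are written out, and the threshold $\tfrac{1}{1+\delta}$ matches the equilibrium coordinate $x^*[1]$, which is not a coincidence but rather the reason the lemma can be phrased without any slack in the $x$-coordinate.
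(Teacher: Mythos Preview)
Your proof is correct and follows essentially the same approach as the paper's own proof: both reduce the inner linear programs to vertex evaluations, identify $\max_i (A_\delta^\top x)[i] = (\tfrac{1}{2}+\delta)x[1]$ and $\min_i (A_\delta y)[i] = 1 - y[1]$ under the given hypotheses, and then bound the difference by $\tfrac{1+2\delta}{2(1+\delta)} - (\tfrac{1}{2}-\epsilon) \ge \epsilon$. Your write-up is simply more explicit about why each vertex is selected.
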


\subsection{Slow Last-Iterate Convergence}
We further require the following assumption on the regularizer $R$ (and thus the function $F_{1,R}$).
\begin{assumption}\label{assumption:main}
    Let $L$ be the Lipschitness constant of $F_{1,R}$ in \Cref{assumption:standard}. Denote constant $c_1 = \frac{1}{2} - F_{1, R}(\frac{1}{20L})$. There exist universal constants $\delta', c_2 > 0$ and $c_3 \in (0,\frac{1}{2}]$ such that for any $ 0 < \delta \le \delta'$,
    %There exists $\delta', c_2 > 0$ such that for any $\delta < \delta'$, \yc{Do $\delta'$ and $c_2$ depend on $\eta$?}
    \begin{itemize}
        %\item[1.] \textbf{Constant difference between losses leads to constant difference between the probability of actions:} $F_{\eta, R}(\frac{1}{5}) = \frac{1}{2} - c_1 < \frac{1}{2(1+\delta)}$.
        \item[1.] %\textbf{If a difference $E$ leads to $x[1] = \frac{1}{1+\delta}$, then $E - \Theta(\frac{1}{\delta})$ leads to $x[1] \ge \frac{4}{4+\delta}$:} 
            For any $E$ that satisfies $F_{1, R}(E) \ge \frac{1}{1+\delta}$, we have $F_{1, R}(-\frac{c_1^2}{30 L\delta} + E) \ge \frac{1 + c_3}{1 + c_3 + \delta}$
        \item[2.] %\textbf{If a difference $E$ leads to $x[1] = \frac{1}{2(1+\delta)}$, then $E - \Theta(1)$ leads to $x[1] \ge \frac{1}{2} + \Theta(1)$:} If $F_{1, R}(E) \ge \frac{1}{2(1+\delta)}$, then $F_{1, R}(-\frac{c_1^2}{100L} + E) \ge \frac{1}{2} + c_2$.
             For any $E$ that satisfies $F_{1, R}(E) \ge \frac{1}{2(1+\delta)}$, we have $F_{1, R}(-\frac{c_3 c_1^2}{120L} + \frac{\delta}{4L} + E) \ge \frac{1}{2} + c_2$.
    \end{itemize}
\end{assumption}
Although \Cref{assumption:main} is technical, the idea is simple. Item 1 in \Cref{assumption:main} states that if a loss difference $E < 0$ already makes $F_{1,R}(E) \ge \frac{1}{1+\delta}$, then the loss difference $E' = E - \Omega(\frac{1}{\delta})$ is able to make $F_{1,R}(E')$ greater than $F_{1,R}(E)$ by a margin of $\Omega(\delta)$. Item 2 in \Cref{assumption:main} states that if a loss difference $E$ already makes $F_{1,R}(E) \ge \frac{1}{2(1+\delta)} \approx \frac{1}{2}$, then the loss difference $E' = E - \Omega(1)$ is able to make $F_{1,R}(E')$ greater than $\frac{1}{2}$ by a constant margin $c_2$. In \Cref{app:verification}, we verify that \Cref{assumption:main} holds for the negative entropy, squared Euclidean norm, the log barrier, and the negative Tsallis entropy regularizers.

Now we present the main result of the section showing that even after $\Omega(1/\delta)$ iterations, the duality gap of the iterate output by OFTRL is still a constant.
\begin{theorem}
    \label{theorem: main}
    Assume the regularizer $R$ satisfies \Cref{assumption:standard} and \Cref{assumption:main}. For any $\delta \in (0,\hat{\delta})$, where $\hat{\delta}$ is a constant depending only on the constants $c_1$ and $\delta'$ defined in \Cref{assumption:main},
    the OFTRL dynamics on $A_\delta$ (defined in \eqref{eq:A delta}) with any step size $\eta \le \frac{1}{4L}$ satisfies the following: there exists an iteration $t \ge \frac{c_1}{3\eta L \delta}$ with a duality gap of at least $c_2$,  a strictly positive constant defined in \Cref{assumption:main}.
    %\hl{I remember we discussed about the form of the statement and decided that we should do something more like what's said in the abstract, that is, for any $\delta$, we can find a game so that the gap is still a constant after $1/\delta$ rounds.} \weiqiangnote{WQ: I agree. Have updated it in that form.}
\end{theorem}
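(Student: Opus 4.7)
The plan is to trace the OFTRL dynamics on the hard game $A_\delta$ through three carefully identified milestone rounds $T_1 < T_2 < T_3$, corresponding to the red star, blue dot, and green square in \Cref{fig:intro plots}, and to show that at $T_3 = \Omega\bigl(1/(\eta L\delta)\bigr)$ the duality gap is at least $c_2$. In Phase~I (the ``collapse,'' rounds $1$ through $T_1$), I would show that the iterates, starting from the uniform distribution, quickly reach a state in which $x^{T_1}[1] \ge 1/(1+\delta)$ while $y^{T_1}[1]$ is small, of order at most $1/(2(1+\delta))$. At initialization, $e_x^1 = \delta/2>0$ and $e_y^1 = 1/2-\delta/2 > 0$ (by \Cref{proposition:properties of A}), so both coordinates decrease at first; but the moment $y^t[1]$ falls below $1/(2(1+\delta))$, the sign of $e_x^t$ flips and $x^t[1]$ surges upward. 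In Phase~II (the ``climb,'' $T_1$ to $T_2$) I would show that $y^t[1]$ monotonically increases up to $\ge 1/(2(1+\delta))$ at $T_2$, while keeping $x^t[1]\ge 1/(1+\delta)$. In Phase~III (the ``overshoot,'' $T_2$ to $T_3$), $y^t[1]$ continues to climb past $1/2$ to reach $1/2+c_2$ at $T_3$, with $x^{T_3}[1]\ge 1/(1+\delta)$ still intact. Applying \Cref{lem:duality gap is large} at $T_3$ immediately yields $\DualGap(x^{T_3},y^{T_3})\ge c_2$.

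The mechanism driving the $\Omega(1/\delta)$ lower bound is the asymmetry highlighted in \Cref{proposition:properties of A}: although $e_y^t = 1-(1+\delta)x^t[1]$ is bounded below by $-\delta$, the coordinate $y^t[1]$ evolves as $F_{1,R}(\eta E_y^{t-1}+\eta e_y^{t-1})$ by \Cref{prop: lip}. Consequently, each iteration in which $x^t[1]$ is near $1$ contributes at most $-\eta\delta$ to $\eta E_y^t$, so shifting $\eta E_y^t$ by an order-one amount, which is what is needed to move $y^t[1]$ from well below $1/2$ to $1/2+c_2$, takes $\Omega\bigl(1/(\eta L\delta)\bigr)$ rounds. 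Conversely, during Phase~I when $y^t[1]$ is small, $e_x^t$ is strongly negative, so $\eta E_x^t$ drops by a quantity of order $1/(\eta L \delta)\cdot \eta\cdot (1/2)$ (after the necessary steps), and item~1 of \Cref{assumption:main} converts this reservoir into a quantitative margin, guaranteeing $x^t[1]\ge (1+c_3)/(1+c_3+\delta)$ and not just the threshold $1/(1+\delta)$. This extra slack is what will be spent during Phases~II and~III as $E_x^t$ slowly creeps back up.

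The workhorse calculation is thus the combined bookkeeping of $E_x^t$ and $E_y^t$ through Phases~II and~III. To certify that $y^{T_3}[1] \ge 1/2+c_2$, I would feed the total decrement of $\eta E_y^t$ accumulated over the $\Omega(1/(\eta L\delta))$ rounds in which $x^t[1]\ge (1+c_3)/(1+c_3+\delta)$ into item~2 of \Cref{assumption:main}; the precise threshold $-c_3 c_1^2/(120L)+\delta/(4L)$ in that assumption is chosen exactly so that this decrement suffices. Symmetrically, to keep $x^t[1]\ge 1/(1+\delta)$ throughout, I would upper-bound the per-round increase of $\eta E_x^t$ by $\eta(1/2+\delta)$ and plug the total growth into item~1 of \Cref{assumption:main}, using the Phase~I buffer established above. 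The main obstacle, as I see it, is performing this bookkeeping \emph{simultaneously}: one must guarantee that the $\Omega(1/\delta)$ rounds needed for the $y$-player to overshoot do not give the $x$-player enough time to erode its ``memory'' and drop $x^t[1]$ below $1/(1+\delta)$. The calibrated constants $c_1,c_2,c_3$ and the specific slack parameters in \Cref{assumption:main} appear to be engineered precisely so that this balance closes; verifying the detailed inductive invariants along Phases~II and~III, and combining them with the Phase~I analysis to yield an iteration index $t\ge c_1/(3\eta L\delta)$ at which the gap is at least $c_2$, is the bulk of the work.
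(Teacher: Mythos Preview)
Your three-phase outline matches the paper's, and you correctly identify the driving asymmetry: $e_y^t\ge -\delta$ forces $\Omega(1/(\eta L\delta))$ rounds for any order-one shift in $\eta E_y^t$. However, you have misplaced where the $E_x$ reservoir is built, and this is not merely a labeling issue.

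You claim that ``during Phase~I \ldots\ $\eta E_x^t$ drops by a quantity of order $1/(\eta L\delta)\cdot\eta\cdot(1/2)$,'' and that this buffer is then ``spent during Phases~II and~III as $E_x^t$ slowly creeps back up.'' But Phase~I, by your own definition, ends the moment $x^{T_1}[1]\ge 1/(1+\delta)$, which means $\eta(E_x^{T_1-1}+e_x^{T_1-1})$ sits exactly at the threshold $F_{1,R}^{-1}(1/(1+\delta))$: there is no surplus yet, and item~1 of \Cref{assumption:main} cannot be invoked because it demands an \emph{additional} decrement of $c_1^2/(30L\delta)$ beyond that threshold. Moreover, throughout Phase~II one still has $y^t[1]<1/(2(1+\delta))$ and hence $e_x^t<0$, so $E_x^t$ continues to \emph{decrease} there, not creep up. The paper's sequencing is: Phase~II (not Phase~I) is the long $\Omega(c_1/(\eta L\delta))$-round stretch during which $y^t[1]\le \tfrac12-\tfrac{c_1}{2}$ keeps $e_x^t\le -c_1/4$, driving $\eta E_x^t$ down by $\Theta(c_1^2/(L\delta))$ below the threshold; item~1 of \Cref{assumption:main} then yields $x^t[1]\ge(1+c_3)/(1+c_3+\delta)$, and this surplus is spent only in Phase~III, where $e_x^t$ finally turns positive.

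A related weakness: your Phase~I conclusion that $y^{T_1}[1]$ is ``of order at most $1/(2(1+\delta))$'' is too weak---since $1/(2(1+\delta))\approx 1/2$, this gives no separation and would make $T_2=T_1$, collapsing Phase~II entirely. The paper instead establishes the quantitatively stronger bound $y^{T_1}[1]\le 1/2-c_1$: Stage~I must last at least $1/(2\eta L)$ rounds (by Lipschitzness, since $x^t[1]$ must move from $1/2$ to $\ge 3/4$), during which $x^t[1]<3/4$ keeps $e_y^t\ge 1/5$, so $\eta E_y^t\ge 1/(20L)$ and hence $y^t[1]\le F_{1,R}(1/(20L))=1/2-c_1$. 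This constant-$c_1$ gap is precisely what forces Phase~II to last $\Omega(c_1/(\eta L\delta))$ rounds. With these two corrections your sketch becomes the paper's proof.
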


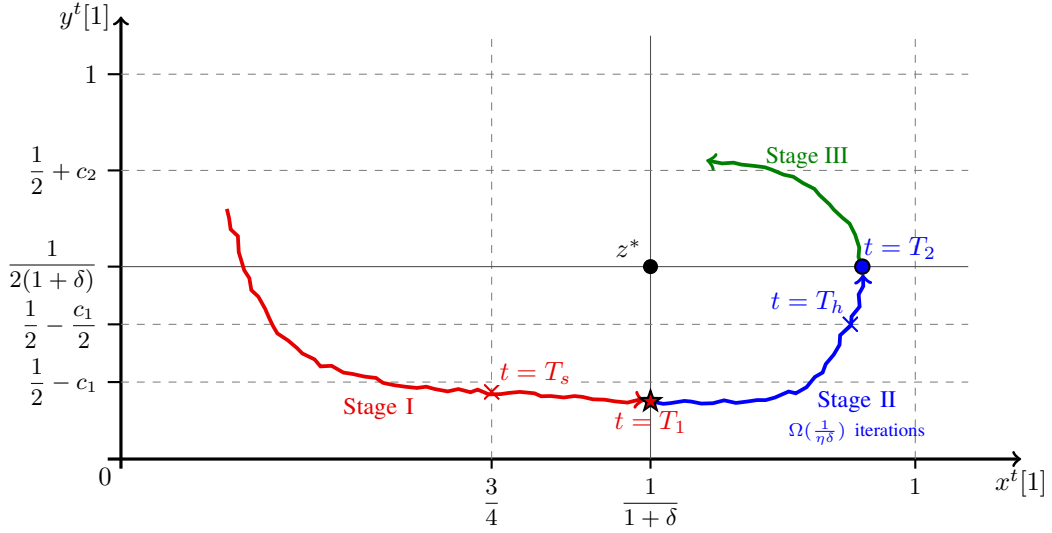
\begin{figure}[t]
    \centering
    \begin{tikzpicture}[x=3.52cm,y=2.56cm,decoration={random steps,segment length=4pt,amplitude=1pt}, scale=1]
        \tikzset{cross/.style={path picture={
                            \draw[thick]
                            (path picture bounding box.south east) -- (path picture bounding box.north west) (path picture bounding box.south west) -- (path picture bounding box.north east);
                        }}}
        \draw[very thick,->] (0,-.05) -- (0,2.3) node[left] {$y^t[1]$};
        \draw[very thick,->] (-.05,0) -- (3.4,0) node[below] {$x^t[1]$};
        \path (0,0) node[below left] {$0$};
        \draw[thick] (2,0) -- +(0,-.05) node[below] {\small$\displaystyle\frac{1}{1+\delta}$};
        \draw[thick] (3,0) -- +(0,-.05) node[below] {\small$1$};
        \draw[thick] (1.4,0) -- +(0,-.05) node[below] {\small$\displaystyle\frac 3 4$};
        \draw[thick] (0,1) -- +(-.05,0) node[left] {\small$\displaystyle\frac{1}{2(1+\delta)}$};
        \draw[thick] (0,.7) -- +(-.05,0) node[left] {\small$\displaystyle\frac{1}{2}-\frac{c_1}{2}$};
        \draw[thick] (0,.4) -- +(-.05,0) node[left] {\small$\displaystyle\frac{1}{2}-c_1$};
        \draw[thick] (0,1.5) -- +(-.05,0) node[left] {\small$\displaystyle\frac{1}{2}+c_2$};
        \draw[thick] (0,2) -- +(-.05,0) node[left] {$\displaystyle 1$};
        \draw[thin,dashed,gray] (3,0) -- +(0,2.2);
        \draw[black!70] (2,0) -- +(0,2.2);
        \draw[thin,dashed,gray] (1.4,0) -- +(0,2.2);
        \draw[black!70] (0,1) -- +(3.2,0);
        \draw[thin,dashed,gray] (0,2) -- +(3.2,0);
        \draw[thin,dashed,gray] (0,.7) -- +(3.2,0);
        \draw[thin,dashed,gray] (0,.4) -- +(3.2,0);
        \draw[thin,dashed,gray] (0,1.5) -- +(3.2,0);
        \fill[black] (2,1) circle(.1cm) node[above left] {$z^*$};
        % \fill[pattern={Lines[angle=45]},pattern color=gray!70] (0,0) rectangle +(3.2,.4);

        \draw[decorate,line width=.5mm,red!90!black,->,shorten >=.5mm] (.4,1.3) .. controls (.6,.4) .. node[pos=.65,below=1mm,align=center,text width=2.5cm]{\small Stage I} (1.99,.3);
        \draw[decorate,line width=.5mm,blue,->,shorten >=.5mm] (2,.3) .. controls (2.5, .3) .. node[pos=0.9,xshift=6mm,below,text width=2.5cm,align=center]{\small Stage II\\\scriptsize $\Omega(\frac{1}{\eta\delta})$ iterations} (2.65, .45) .. controls (2.79, .80) .. (2.8,.98);
        \draw[decorate,line width=.5mm,green!50!black,->] (2.8,1) to[out=100,in=0] node[pos=.5,above=1mm]{\small Stage III} (2.21,1.55);

        \path[red!90!black] (1.4,.349) node[inner sep=1mm,cross]{} node[right,yshift=2.5mm,fill=white,inner ysep=.0mm,inner xsep=1mm] {$t=T_s$};
        \path[blue] (2.755,.7) node[inner sep=1mm,cross]{} node[left=-.5mm,yshift=3mm] {$t=T_h$};
        \path[red!90!black] (2,.3) node[star,star point ratio=2.5,inner sep=.4mm,fill,draw=black,thick]{} node[xshift=0mm,below] {$t=T_1$};
        \fill[blue,draw=black,thick] (2.8,1) circle(.095cm) node[xshift=5mm,yshift=2.5mm] {$t=T_2$};
    \end{tikzpicture}\\
    \caption{Pictorial depiction of the three stages incurred by the OFTRL dynamics in the game $A_\delta$ defined in \eqref{eq:A delta}. The point $z^*$ denotes the unique Nash equilibrium. The times $T_1$ and $T_2$ are shown for concrete instantiations of OFTRL in \Cref{fig:intro plots} by a red star and a blue dot, respectively. The times $T_s$ and $T_h$ are defined in the proof of \Cref{theorem: main} in \Cref{app:proof main}.}
    \label{fig:three stages}
\end{figure}

%\weiqiang{I have incorporated Haipeng's comments and updated the proof.}%\yc{I'm not sure about the order of $\delta$ and $\eta$. The theorem statement says fix $\delta$ first, then we can take any $\eta$ as long as $\eta L\leq 1/4$. But in the first line of the proof $\delta$ depends on $c_1$, which depends on $\eta$.}

\paragraph{Proof Sketch:} We decompose the analysis into three stages as illustrated in \Cref{fig:three stages}. We describe the three stages and the high-level ideas of our proof below and defer the full proof to \Cref{app:proof main}.
\begin{itemize}[leftmargin=*]
    \item \textbf{Stage I $[1, T_1 -1]$:} Recall that $x^1[1]=y^1[1] = \frac{1}{2}$ by \Cref{assumption:standard}. We show that $x^t[1]$ increases and denote $T_1$ the first iteration that $x^{T_1}[1] \ge \frac{1}{1+\delta}$. During the time $[1, T_1-1]$, since $x^t[1]$ is always smaller than $ \frac{1}{1+\delta}$, we know from \Cref{proposition:properties of A} action $1$ has larger loss than action $2$ for the $y$-player, i.e., $e^t_y = \ell^t_y[1] - \ell^t_y[2] \ge 0$. Thus $y^t[1]$ decreases during stage I and we show that $y^{T_1}[1] \le \frac{1}{2}-c_1$ with $c_1$ defined in \Cref{assumption:main}.
    \item \textbf{Stage II $[T_1, T_2]$:} Recall that $y^{T_1}[1] \le \frac{1}{2}-c_1$. We define $T_2 > T_1$ as the first iteration where $y^{T_2}[1] \ge \frac{1}{2(1+\delta)} > \frac{1}{2} - c_1$. 
    %The existence of $T_2$ can be proved by contradiction again (\Cref{claim: T_2 exists}). 
    We remark that for $y^t[1]$ to increase, the loss vector must satisfy $e^t_y < 0$. However, the game matrix $A_\delta$ guarantees that $e^t_y \ge -\delta$ no matter what the $x$-player's strategy is (\Cref{proposition:properties of A}). Thus by the $\eta L$-Lipschitzness of $F_{\eta, R}$ (\Cref{prop: lip}), the per-iteration increase in $y^t[1]$ is at most $\eta L \delta$. Therefore, we know $T_2 - T_1 = \Omega(\frac{c_1}{\eta L \delta})$. As a result, $e^t_x < 0$ during $[T_1, T_2]$ and the cumulative loss for the $x$-player decreases to $E^{T_2}_x \le E^{T_1}_x - \Omega(\frac{1}{\eta L \delta})$. Recall $x^{T_1}[1] \ge \frac{1}{1+\delta}$. Thus $x^{T_2}[1] > x^{T_1}[1]$ is much closer to $1$.
    \item \textbf{Stage III $[T_2, T_3]$:} Recall that $y^{T_2}[1] \ge \frac{1}{2(1+\delta)}$.  Moreover, $y^t[1]$ could keep increasing if $x^t[1] \ge \frac{1}{1+\delta}$ since that implies $e^t_y \le 0$. Now the question is how long would the $x$-player stay close to the boundary, i.e, $x^t[1] \ge \frac{1}{1+\delta}$. Since OFTRL-type algorithms are not forgetful, this happens only when $E^{t}_x \ge E^{T_1}_x$ (recall $x^{T_1}[1] \ge \frac{1}{1+\delta}$). But we have at the end of stage II, $E^{T_2}_x \le E^{T_1}_x - \Omega(\frac{1}{\eta L \delta})$. Since the per-iteration loss is bounded by $1$, it requires at least $\Omega(\frac{1}{\eta L \delta})$ iterations to cancel the cumulative loss of $\Omega(\frac{1}{\eta L \delta})$. Define $T_3 = T_2 + \Omega(\frac{1}{\eta L \delta})$. During $[T_2, T_3]$, the $y$-player always receives loss such that $e^t_y \le 0$ and we prove that in the end $y^{T_3}[1] \ge \frac{1}{2} + c_2$ for some constant $c_2$.
    \item \textbf{Conclusion:} Finally, we get one iteration $T_3 \ge \Omega(\frac{1}{\eta L \delta})$ with $x^{T_3}[1] \ge \frac{1}{1+\delta}$ and $y^{T_3}[1] \ge \frac{1}{2} + c_2$. Using \Cref{lem:duality gap is large}, the duality gap of $(x^{T_3}, y^{T_3})$ is at least $c_2 > 0$.
\end{itemize}

\Cref{theorem: main} immediately implies the following (proof deferred to \Cref{app: last-iterate}). %strong negative results on the last-iterate convergence rate of OFTRL. %\Cref{theorem: no last-iterate rate} 
\begin{theorem}
    \label{theorem: no last-iterate rate}
    For optimistic FTRL with any regularizer satisfying \Cref{assumption:standard} and \Cref{assumption:main} and constant steps size $\eta \le \frac{1}{4L}$ ($L$ is defined in \Cref{assumption:standard}), there is no function $f$ such that
    the corresponding learning dynamics $\{(x^t,y^t)\}_{t\geq 1}$ in two-player zero-sum games $[0,1]^{d_1 \times d_2}$ has a last-iterate convergence rate of $f(d_1,d_2, T)$. More specifically, no function $f$ can satisfy
    %there is no function $f$ such that the corresponding learning dynamics $\{(x^t,y^t)\}_{t\geq 1}$ in two-player zero-sum games has a last-iterate convergence rate of $f(n,m, \max_{i,j} |A_{i,j}|, T)$, where $A$ is the loss matrix, and $n$ and $m$ are the number of actions. More specifically, no function $f$ can satisfy
    \begin{itemize}
        \item[1.] $\DualGap(x^T, y^T) \le f(d_1,d_2, T)$ for all $[0,1]^{d_1 \times d_2}$ and for all $T \ge 1$.
        \item[2.] $\lim_{T\rightarrow \infty}f(d_1,d_2, T) \rightarrow 0$.
    \end{itemize}
\end{theorem}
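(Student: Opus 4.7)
The plan is to deduce \Cref{theorem: no last-iterate rate} almost immediately from \Cref{theorem: main} by a short contradiction argument. The key point to emphasize is that \Cref{theorem: main} produces, for \emph{every} sufficiently small $\delta > 0$, a $2\times 2$ game $A_\delta \in [0,1]^{2\times 2}$ and an iteration $t \geq \tfrac{c_1}{3\eta L \delta}$ at which the OFTRL duality gap is at least $c_2$; crucially, the constants $c_1, c_2, L, \eta, \hat\delta$ depend only on the regularizer and the step size, not on the game. Hence, by shrinking $\delta$, one can push the ``bad'' round arbitrarily far out while the lower bound $c_2$ on the duality gap stays the same fixed positive constant.

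Concretely, I would assume for contradiction that some $f$ meets both properties. Specializing to $d_1 = d_2 = 2$, property~2 yields a finite threshold $T^\ast$ such that $f(2,2,T) < c_2$ for all $T \geq T^\ast$. I would then choose any $\delta$ in the open interval
\[
    \Bigl(0,\ \min\Bigl\{\hat\delta,\ \tfrac{c_1}{3\eta L T^\ast}\Bigr\}\Bigr),
\]
so that simultaneously $\delta < \hat\delta$ and $\tfrac{c_1}{3\eta L \delta} \geq T^\ast$. Applying \Cref{theorem: main} to the game $A_\delta$ then produces an iteration $t \geq \tfrac{c_1}{3\eta L \delta} \geq T^\ast$ with $\DualGap(x^t, y^t) \geq c_2$. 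However, property~1 of $f$ applied to $A_\delta \in [0,1]^{2\times 2}$ at round $t$ gives $\DualGap(x^t, y^t) \leq f(2,2,t) < c_2$ (since $t \geq T^\ast$), the desired contradiction.

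There is essentially no obstacle here; the entire difficulty has been absorbed into \Cref{theorem: main}. The only subtlety worth stating explicitly is that $c_2$ is a universal constant (coming from \Cref{assumption:main}), and therefore the threshold $T^\ast$ extracted from $f$ depends only on $f$ itself, not on the game parameter $\delta$. This independence is precisely what allows us to pick $\delta$ \emph{after} $T^\ast$ is fixed and still invoke \Cref{theorem: main} to place the bad iteration past $T^\ast$, thereby ruling out any game-independent last-iterate rate.
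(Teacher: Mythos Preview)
Your proposal is correct and follows essentially the same argument as the paper's own proof: assume such an $f$ exists, use condition~2 at $d_1=d_2=2$ to extract a threshold $T^\ast$ with $f(2,2,T)<c_2$ beyond it, then pick $\delta$ small enough to invoke \Cref{theorem: main} and place the bad iteration past $T^\ast$, contradicting condition~1. The only cosmetic difference is notation ($T^\ast$ versus the paper's $T_0$) and your slightly cleaner use of a strict inequality $\delta<\hat\delta$, which matches the hypothesis of \Cref{theorem: main} exactly.
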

\Cref{theorem: main} and \Cref{theorem: no last-iterate rate} provide impossibility results for getting a last-iterate convergence rate for OFTRL that solely depends on the bounded parameters, even in two-player zero-sum games. Moreover, they show the necessity of forgetfulness for fast last-iterate convergence in games since OGDA has a last-iterate convergence rate of $O(\frac{\poly(d_1, d_2)}{\sqrt{T}})$~\citep{cai2022finite,gorbunov2022last}.

\section{Extension to Higher Dimensions}

In this section, we extend our negative results from $2\times2$ matrix games to games with higher dimensions.
We start by showing an equivalence result for a single player (say, the first player). We assume that a decision maker is using OFTRL with a 1-strongly convex (w.r.t. the $\ell_2$ norm) and separable regularizer $R(x) = R_1(x_1) + R_2(x_2)$ to choose decisions.
At a given time time $t$, they see a loss $\ell^t\in [0,1]^2$.

Now consider the following $2n$-dimensional decision problem:  The player uses OFTRL using the regularizer $\hat R(\hat x) = \sum_{i=1}^{n} R_1(\hat x_i) + \sum_{i=n+1}^{2n} R_2(\hat x_i)$, \textit{i.e.}, they use $R_1$ on the first half of actions, and $R_2$ on the second half. This is again a 1-strongly convex regularizer (w.r.t. the $\ell_2$ norm).
Suppose the decision maker sees the rescaled and \emph{duplicated} version of the losses $\ell^1,\ldots,\ell^T$ from the 2-dimensional case: $\hat \ell^t_i = \frac{1}{n^\alpha}\ell^t_1$ if $i \leq n$, and $\hat \ell^t_i = \frac{1}{n^\alpha} \ell^t_2$ if $i > n$. The parameter $\alpha$ will be chosen later based on the regularizer.

Now we wish to show that by choosing $\alpha$ in the right way, we get that the decisions for the $2$-dimensional and $2n$-dimensional OFTRL algorithms are equivalent.
Let $x^1,\ldots, x^T$ be the 2-dimensional OFTRL decisions, and let $\hat x^1,\ldots,\hat x^T$ be the $2n$-dimensional OFTRL decisions.
Then, we want to show that $\sum_{i=1}^{n} \hat x_i^t = x^t[1]$ and $\sum_{i=n+1}^{2n} \hat x_i^t = x^t[2]$ for all $t$.

\begin{lemma}
    \label{lem:symmetry of duplication}
    Let the losses $\hat \ell^1, \ldots, \hat \ell^T$ satisfy the duplication procedure given in the preceding paragraph.
    Then for any time $t$, we have $\hat x^t_1=\cdots = \hat x^t_{n}$ and $\hat x^t_{n+1} = \cdots = \hat x^t_{2n}$.
\end{lemma}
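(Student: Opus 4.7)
The plan is to exploit the block symmetry of the OFTRL minimization problem combined with uniqueness of the minimizer. Writing the update as
\[
    \hat{x}^t = \argmin_{\hat{x} \in \Delta^{2n}} \left\{ \InAngles{\hat{x},\, \hat{L}^{t-1} + \hat{\ell}^{t-1}} + \frac{1}{\eta}\hat{R}(\hat{x}) \right\},
\]
I would like to argue that the objective is invariant under every permutation $\sigma$ of $\{1,\dots,2n\}$ that permutes the two blocks $B_1=\{1,\dots,n\}$ and $B_2=\{n+1,\dots,2n\}$ within themselves. Then, by strict convexity coming from the $1$-strong convexity of $\hat{R}$, the unique minimizer must itself be $\sigma$-invariant, which is precisely the claimed equality inside each block.

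Three invariances need to be verified, each built in by construction. First, by the duplication rule, each loss vector $\hat{\ell}^s$ is constant $\tfrac{1}{n^\alpha}\ell^s_1$ on $B_1$ and constant $\tfrac{1}{n^\alpha}\ell^s_2$ on $B_2$, so $\hat{\ell}^s$ is fixed by every such $\sigma$; summing over $s$, so is $\hat{L}^{t-1}$, and consequently the linear term $\InAngles{\sigma\hat{x},\,\hat{L}^{t-1}+\hat{\ell}^{t-1}}$ equals $\InAngles{\hat{x},\,\hat{L}^{t-1}+\hat{\ell}^{t-1}}$. Second, the separable form $\hat{R}(\hat{x}) = \sum_{i\in B_1} R_1(\hat{x}_i) + \sum_{i\in B_2} R_2(\hat{x}_i)$ satisfies $\hat{R}(\sigma \hat{x}) = \hat{R}(\hat{x})$ because the same univariate function is applied uniformly within each block. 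Third, $\Delta^{2n}$ itself is invariant under any coordinate permutation. Combined, these three facts show that the full OFTRL objective takes the same value at $\hat{x}$ and $\sigma \hat{x}$ for every $\hat{x}\in\Delta^{2n}$.

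Because $\hat{R}$ is $1$-strongly convex, the objective is strictly convex on the compact convex set $\Delta^{2n}$ and hence admits a unique minimizer $\hat{x}^t$. Since $\sigma$ preserves both the objective and the feasible set, $\sigma \hat{x}^t$ is also a minimizer, and uniqueness forces $\sigma \hat{x}^t = \hat{x}^t$. Running $\sigma$ over the full group $S_n \times S_n$ of within-block permutations then yields $\hat{x}^t_1 = \cdots = \hat{x}^t_n$ and $\hat{x}^t_{n+1} = \cdots = \hat{x}^t_{2n}$, which is the statement of the lemma.

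The main obstacle is really none: this is a textbook symmetry-plus-uniqueness argument, and no induction on $t$ is needed, because the symmetry of the losses is postulated by hypothesis rather than produced dynamically by the iterates. The substantive follow-up step, which this lemma sets up, is the matching identity $\sum_{i\in B_1} \hat{x}^t_i = x^t[1]$ relating the $2n$-dimensional iterates to the $2$-dimensional ones; there the scaling exponent $\alpha$ will be chosen to make the reduced univariate first-order condition agree with the two-dimensional OFTRL update, but the present lemma is independent of that choice.
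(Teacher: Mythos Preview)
Your proof is correct and follows essentially the same symmetry-plus-uniqueness argument as the paper: both exploit that the OFTRL objective is invariant under within-block coordinate swaps and then use strong convexity to force the unique minimizer to be invariant. The only cosmetic difference is that the paper phrases it as a contradiction via a single transposition, whereas you argue directly with the full group $S_n\times S_n$; the content is the same.
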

\begin{proof}
    Suppose not and let $\hat x^t$ be the corresponding solution. Then the optimal solution is such that $\hat x^t_i \ne \hat x^t_k$ for some $i,k$ both less than $n$, or both greater than $n$. But then, by symmetry, we have that there is more than one optimal solution to the OFTRL optimization problem at time $t$: the objective is exactly the same if we create a new solution where we swap the values of $\hat x^t_i$ and $\hat x^t_k$. This is a contradiction due to strong convexity.
\end{proof}

From \cref{lem:symmetry of duplication}, we have that the OFTRL decision problem in $2n$ dimensions can equivalently be written as a $2$-dimensional decision problem:
Since the first $n$ entries must be the same, we can simply optimize over that one shared value, say $x^t[1]$, which we use for all $n$ entries, and similarly we use $x^t[2]$ for the second half of the entries.
% Then, our choice of $x^t[1]$ determines the first $n$ entries of $\hat x^t$, and $x^t[2]$ determines the remaining $n$ entries. 
Let $\mathrm{Dupl}: \Delta^2 \rightarrow \Delta^{2n}$ be a function that maps the two-dimensional solution into the corresponding duplicated $2n$-dimensional solution.
The equivalent $2$-dimensional problem is then:
\begin{align*}
    \hat x^t & = \mathrm{Dupl}\left[\argmin_{x\in \frac{1}{n} \cdot \Delta^{2}} \left\{ \frac{n}{n^\alpha}\left\langle x, \sum_{\tau = 1}^{t-1} \ell^\tau + \ell^{t-1} \right\rangle + \frac{n}{\eta} R_1(x[1]) + \frac{n}{\eta}R_2(x[2]) \right\} \right] \\
             & = \mathrm{Dupl}\left[\frac{1}{n}\cdot \argmin_{x\in \Delta^{2}} \left\{ \frac{n}{n^\alpha}\left\langle \frac{1}{n} x, \sum_{\tau = 1}^{t-1} \ell^\tau + \ell^{t-1} \right\rangle + \frac{n}{\eta} R(x/n) \right\} \right]                   \\
             & = \mathrm{Dupl}\left[\frac{1}{n}\cdot \argmin_{x\in \Delta^{2}} \left\{ \left\langle x, \sum_{\tau = 1}^{t-1} \ell^\tau + \ell^{t-1} \right\rangle + \frac{n^{\alpha+1}}{\eta} R(x/n)  \right\} \right].
\end{align*}

The next theorem shows that we can choose $\alpha$ for different regularizers and construct $2n\times 2n$ loss matrices whose learning dynamics are equivalent to the learning dynamics in $2\times 2$ games given in the preceding sections. We defer the proof to \Cref{app:high-dimension}.% This implies the following theorem.
\begin{theorem}
    \label{theorem:reduction from n to 2}
    For any loss matrix $A\in [0,1]^{2\times 2}$, there exists a loss matrix $\hat A\in [0,n^{-\alpha}]^{2n\times 2n}$ such that for the Euclidean ($\alpha = 1$), entropy ($\alpha = 0$), Tsallis ($\beta \in (0,1)$ and $\alpha = -1+\beta$),
    and log ($\alpha = -1$) regularizers, the resulting OFTRL learning dynamics are equivalent in the two games.
\end{theorem}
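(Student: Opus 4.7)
The plan is to define $\hat A$ as the $2n\times 2n$ block matrix with four constant $n\times n$ blocks: $\hat A_{ij} = n^{-\alpha} A_{ab}$, where $a=1$ if $i\le n$ and $a=2$ otherwise, and similarly $b$ is determined by whether $j\le n$. By construction $\hat A \in [0, n^{-\alpha}]^{2n\times 2n}$ whenever $A\in[0,1]^{2\times 2}$. The essential structural property is that if $\hat y\in\Delta^{2n}$ has the ``duplicated'' form $\hat y_j = y[1]/n$ for $j\le n$ and $\hat y_j = y[2]/n$ for $j>n$ (for some $y\in\Delta^2$), then a direct computation gives $(\hat A \hat y)_i = n^{-\alpha}(Ay)[a]$. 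In other words, the induced loss in the $2n$-dimensional game is precisely the rescaled duplication of the 2D loss used in the preceding discussion.

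I would then proceed by induction on $t$ to show that the OFTRL iterates $\hat x^t, \hat y^t$ on $\hat A$ are the duplications of the 2D OFTRL iterates $x^t, y^t$ on $A$. The base case follows from the unbiasedness of \Cref{assumption:standard}: both algorithms start at the uniform distribution on their respective simplices. For the inductive step, the inductive hypothesis and the structural property of $\hat A$ imply $\hat\ell^t_x = \hat A \hat y^t$ and $\hat\ell^t_y = -\hat A^\top \hat x^t$ are the rescaled duplicates of the 2D game losses. Then by \Cref{lem:symmetry of duplication} together with the derivation displayed just before the theorem statement, the $2n$-dimensional OFTRL update for $\hat x^{t+1}$ reduces exactly to
\begin{equation*}
\argmin_{x\in\Delta^2}\left\{\left\langle x, L^{t-1}_x + \ell^{t-1}_x\right\rangle + \frac{n^{\alpha+1}}{\eta}\, R(x/n)\right\},
\end{equation*}
with the same cumulative 2D loss that the 2D OFTRL on $A$ would see. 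So to close the induction, it suffices to verify, for each of the four regularizers with the stated $\alpha$, that $\frac{n^{\alpha+1}}{\eta} R(x/n)$ equals $\frac{1}{\eta} R(x)$ up to an additive constant on $\Delta^2$ (which does not affect the argmin).

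The latter is a short regularizer-by-regularizer algebraic check. For the squared Euclidean norm ($\alpha=1$), $R(x/n) = n^{-2} R(x)$, and the $n^{\alpha+1}=n^2$ factor cancels exactly. For negative entropy ($\alpha=0$), the identity $x[1]+x[2]=1$ gives $R(x/n)=\frac{1}{n}R(x) - \frac{\log n}{n}$, so $\frac{n}{\eta}R(x/n) = \frac{1}{\eta}R(x) - \frac{\log n}{\eta}$. For negative Tsallis entropy ($\alpha=\beta-1$), substituting $\sum_i x[i]^\beta = 1-(1-\beta)R(x)$ into $R(x/n) = \frac{1-n^{-\beta}\sum_i x[i]^\beta}{1-\beta}$ yields $R(x/n) = \frac{1-n^{-\beta}}{1-\beta} + n^{-\beta} R(x)$ and hence $\frac{n^\beta}{\eta}R(x/n) = \frac{1}{\eta}R(x) + \text{const}$. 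For the log barrier ($\alpha=-1$), $R(x/n) = R(x) + 2\log n$, so $\frac{1}{\eta}R(x/n)$ equals $\frac{1}{\eta}R(x)$ up to a constant. In each case the two 2D objectives differ by a quantity independent of $x$, so their minimizers over $\Delta^2$ coincide; combined with the induction above and the analogous argument for the $y$-player, this proves the equivalence of the dynamics.

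I do not anticipate a serious obstacle: the structural reduction was already carried out in the lead-in to the theorem, the induction is short given \Cref{lem:symmetry of duplication}, and each regularizer reduces to a one-line algebraic identity. The only mildly subtle point is remembering to exploit $x[1]+x[2]=1$ in the entropy computation in order to absorb the stray linear term into an additive constant, and tracking the correct power of $n$ when relating $R(x/n)$ to $R(x)$ in the Tsallis case.
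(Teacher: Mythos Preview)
Your proposal is correct and follows essentially the same approach as the paper: the paper's proof also reduces to checking, for each of the four regularizers with the stated $\alpha$, that $n^{\alpha+1}R(x/n)$ equals $R(x)$ up to an additive constant on $\Delta^2$, and the algebraic identities you give match theirs. You are somewhat more explicit than the paper in writing down the block structure of $\hat A$ and spelling out the induction on $t$ (the paper relegates this to the discussion and \Cref{lem:symmetry of duplication} preceding the theorem), but the substance is identical.
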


Combining \Cref{theorem: main} and \Cref{theorem:reduction from n to 2}, we have the following:
\begin{corollary}
    In the same setup as \Cref{theorem:reduction from n to 2}, under \Cref{assumption:standard} and \Cref{assumption:main}, there exists a game matrix $\hat A_\delta \in [0,n^{-\alpha}]^{2n\times 2n}$ such that the OFTRL learning dynamics  with any step size $\eta \le \frac{1}{4L}$ satisfies the following: there exists an iteration $t \ge \frac{c_1}{3\eta L \delta}$ with a duality gap at least $c_2 n^{-\alpha}$.
\end{corollary}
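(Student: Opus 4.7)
The plan is to pipeline the two preceding results: use \Cref{theorem:reduction from n to 2} to lift the hard $2 \times 2$ game $A_\delta$ (analyzed in \Cref{theorem: main}) to a $2n \times 2n$ game $\hat A_\delta$ on which the OFTRL dynamics are duplications of the $2$-dimensional dynamics, and then check that the duplicated iterates incur a duality gap equal to $n^{-\alpha}$ times the $2$-dimensional duality gap.

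Concretely, I would take $\hat A_\delta \in [0, n^{-\alpha}]^{2n\times 2n}$ to be the block matrix whose $(k,l)$-th $n\times n$ block equals $n^{-\alpha} (A_\delta)_{kl} \cdot \mathbf{1}_{n\times n}$. A direct calculation shows that if the $y$-player plays a duplicated strategy $\hat y$ with $\hat y_j = y[1]/n$ for $j \le n$ and $\hat y_j = y[2]/n$ for $j > n$, then the $x$-player's loss $\hat A_\delta \hat y$ is precisely the rescaled-and-duplicated version of $A_\delta y$: its $i$-th entry equals $n^{-\alpha}(A_\delta y)_1$ for $i \le n$ and $n^{-\alpha}(A_\delta y)_2$ otherwise (and symmetrically for the $y$-player). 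This matches the loss structure assumed in the setup preceding \Cref{theorem:reduction from n to 2}. Hence, by the equivalence established there for the appropriate $\alpha$ associated with each regularizer, the self-play OFTRL iterates on $\hat A_\delta$ with step size $\eta$ are the duplications of the self-play OFTRL iterates on $A_\delta$ with the same step size: $\hat x^t_i = x^t[1]/n$ for $i \le n$, $\hat x^t_i = x^t[2]/n$ for $i > n$, and analogously for $\hat y^t$.

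Next I would invoke \Cref{theorem: main} on the $2$-dimensional dynamics: for $\eta \le 1/(4L)$ and $\delta \in (0, \hat\delta)$, there is an iterate $t \ge c_1 / (3 \eta L \delta)$ with $\DualGap_{A_\delta}(x^t, y^t) \ge c_2$. To translate this to the $2n$-dimensional game, I would use the block structure to compute, for any $\hat y' \in \Delta^{2n}$,
\[
(\hat x^t)^\top \hat A_\delta \hat y' \;=\; n^{-\alpha}\sum_{k,l \in \{1,2\}} (A_\delta)_{kl}\, x^t[k]\, q[l] \;=\; n^{-\alpha}\, (x^t)^\top A_\delta q,
\]
where $I_1 := \{1,\dots,n\}$, $I_2 := \{n+1,\dots,2n\}$, and $q \in \Delta^2$ is defined by $q[l] := \sum_{j \in I_l} \hat y'_j$. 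Because the projection $\hat y' \mapsto q$ is surjective onto $\Delta^2$, maximizing over $\hat y' \in \Delta^{2n}$ yields $n^{-\alpha} \max_{y' \in \Delta^2} (x^t)^\top A_\delta y'$, and an identical identity holds for $\min_{\hat x' \in \Delta^{2n}} (\hat x')^\top \hat A_\delta \hat y^t$. Subtracting gives $\DualGap_{\hat A_\delta}(\hat x^t, \hat y^t) = n^{-\alpha}\, \DualGap_{A_\delta}(x^t, y^t) \ge c_2\, n^{-\alpha}$ at the same iterate $t$, which is the desired conclusion.

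There is no real obstacle; the corollary is essentially a bookkeeping composition of \Cref{theorem: main} and \Cref{theorem:reduction from n to 2}. The only subtleties worth verifying are (i) that the block construction of $\hat A_\delta$ really induces the rescaled-and-duplicated loss streams assumed in \Cref{theorem:reduction from n to 2}, and (ii) that the step-size condition $\eta \le 1/(4L)$ and the constants $c_1, c_2, L$ carry over unchanged. Both are clear because the equivalence in \Cref{theorem:reduction from n to 2} is stated with the same univariate regularizer $R$ and the same $\eta$ on the $2$-dimensional side.
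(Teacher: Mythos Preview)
Your proposal is correct and matches the paper's intended argument: the corollary is stated immediately after the sentence ``Combining \Cref{theorem: main} and \Cref{theorem:reduction from n to 2}, we have the following,'' with no further proof given, and your write-up simply makes that combination explicit. Your block-matrix construction of $\hat A_\delta$ is precisely the matrix that realizes the rescaled-and-duplicated loss streams of the setup preceding \Cref{theorem:reduction from n to 2}, and your duality-gap computation via the surjective aggregation map $\hat y' \mapsto q$ is the natural way to transfer the $c_2$ lower bound from \Cref{theorem: main} to the $2n$-dimensional game with the $n^{-\alpha}$ scaling.
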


Since $\alpha = 0$ for the entropy regularizer, the same results hold more generally for games where one player has more actions than the other. In particular, we can create a $2n\times 2m$ game such that the resulting dynamics are equivalent to those in a $2\times 2$ game. This does not work for the Euclidean and log regularizers because the rescaling factors would be different for the row and column players.

\section{Conclusion and Discussions}
In this paper, we study last-iterate convergence rates of OFTRL algorithms with various popular regularizers, including the popular OMWU algorithm. Our main results show that even in simple $2 \times 2$ two-player zero-sum games parametrized by $\delta>0$, the lack of forgetfulness of OFTRL leads to the duality gap remaining constant even after $1/\delta$ iterations (\Cref{theorem: main}). As a corollary, we show that the last-iterate convergence rate of OFTRL must depend on a problem-dependent constant that can be arbitrarily bad (\Cref{theorem: no last-iterate rate}). This highlights a stark contrast with OOMD algorithms: while OGDA with constant step size achieves a $O(\frac{1}{\sqrt{T}})$ last-iterate convergence rate, such a guarantee is impossible for OMWU or more generally OFTRL. We now discuss several interesting questions regarding the convergence guarantees of learning in games and leave them as future directions.
\paragraph{Best-Iterate Convergence Rates} While we focus on the last-iterate (\textit{i.e.}, $\DualGap(x^T, y^T)$), the weaker notion of best-iterate (\textit{i.e.}, $\min_{t \in [T]} \DualGap(x^t, y^t)$) is also of both practical and theoretical interest. By definition, we know the best-iterate convergence rate is at least as good as the last-iterate convergence rate and could be much faster. This raises the following question:
\begin{align*}
    \textit{What is the best-iterate convergence rate of OMWU/OFTRL?}
\end{align*}
To our knowledge, there are no concrete results on the best-iterate convergence rates of OMWU or other OFTRL algorithms. %Moreover, we note that, unlike the last-iterate convergence rate, our counterexample $A_\delta$ (defined in \eqref{eq:A delta}) is insufficient to give a strong lower-bound result for the best-iterate convergence. In the arXiv version of the paper\footnote{\href{https://arxiv.org/abs/2406.10631}{https://arxiv.org/abs/2406.10631}}, we show that OMWU has a $\delta$-independent $O(\frac{1}{\ln T})$ best-iterate convergence rate for $A_\delta$.
%Although the rate is slow, it does not depend on $\delta$. 
It is thus interesting to extend our negative results to the best-iterate convergence rates %(by finding a different hard game instance) 
or develop fast best-iterate convergence rates of OMWU/OFTRL.

\paragraph{Dynamic Step Sizes} Our negative results hold for OFTRL with \emph{fixed} step sizes. We conjecture that the slow last-iterate convergence of OFTRL persists even with \emph{dynamic} step sizes. In particular, we believe our counterexamples still work for OFTRL with decreasing step sizes. This is because decreasing the step size makes the players move even slower, and they may be trapped in the wrong direction for a longer time due to the lack of forgetfulness.  In \Cref{app:simu adaptive}, we include numerical results for OMWU with adaptive stepsize akin to Adagrad~\citep{duchi2011adaptive}, which supports our intuition. We observe the same cycling behavior as for fixed step size. While the cycle is smaller than that of fixed step sizes, the dynamics take more steps to finish each cycle. Investigating the effect of dynamic step sizes on last-iterate convergence rates is an interesting future direction.

\paragraph{Slow Convergence due to Lack of Forgetfulness} Our work shows that various OFTRL-type algorithms do not have fast last-iterate convergence rates for learning in games. Our proof and hard game instance build on the intuition that these algorithms lack forgetfulness: they do not forget the past quickly. This intuition is also utilized in~\citep{panageas2023exponential}. In particular, they give an $d \times d$  potential game where the last-iterate convergence rate of the Fictitious Play algorithm, which is equivalent to the Follow-the-Leader (FTL) algorithm, suffers exponential dependence in the dimension $d$. One natural future direction is to formalize the intuition of non-forgetfulness further and give a general condition for algorithms under which they suffer slow last-iterate convergence. It is also interesting to show other lower-bound results for learning in games.

\newpage
\subsubsection*{Acknowledgements}
We thank the anonymous reviewers for their constructive comments on improving the paper. Yang Cai was supported by the NSF Awards CCF-1942583 (CAREER) and CCF-2342642.
Christian Kroer was supported by the Office of Naval Research awards N00014-22-1-2530 and N00014-23-1-2374, and the National Science Foundation awards IIS-2147361 and IIS-2238960. Julien Grand-Cl{\'e}ment was supported by Hi! Paris and
Agence Nationale de la Recherche (Grant 11-LABX-0047).
Haipeng Luo was supported by NSF award IIS-1943607. Weiqiang Zheng was supported by the NSF Awards CCF-1942583 (CAREER), CCF-2342642, and a Research Fellowship
from the Center for Algorithms, Data, and Market Design at Yale (CADMY).
%%%%%%%%%%%%%%%%%%%%%%%%%%%%%%%%%%%%%%%%%%%%%%%%%%%%%%%%%%%%
\bibliographystyle{plainnat}
\bibliography{ref}
%%%%%%%%%%%%%%%%%%%%%%%%%%%%%%%%%%%%%%%%%%%%%%%%%%%%%%%%%%%%

\appendix
\tableofcontents

\section{Missing Proofs in \Cref{sec: pre}}
\label{app:pre}
\subsection{Proof of \Cref{lemma: F monotone}}
\begin{proof}
    Let $e_1 < e_2$. Denote $x_1 = F_{\eta, R}(e_1)$ and $x_2 = F_{\eta, R}(e_2)$. By definition, we have
    \begin{align*}
        e_2(x_2- x_1) \le \frac{1}{\eta}\InParentheses{R(x_1) - R(x_2)} \le e_1(x_2 - x_1).
    \end{align*}
    Since $e_1 < e_2$, we have $x_2 \le x_1$.
\end{proof}

\subsection{Proof of \Cref{prop: lip}}
\begin{proof}
    By definition,
    \begin{align*}
        F_{\eta, R}\InParentheses{\frac{E}{\eta}} & = \argmin_{x \in [0,1]}\left\{ x \cdot \frac{E}{\eta} + \frac{1}{\eta} R(x)
        \right\}
        =  \argmin_{x \in [0,1]}\left\{ x \cdot E + R(x) \right\}
        = F_{1, R}(E).
    \end{align*}
    The second claim on the Lipschitzness follows directly.
\end{proof}

\section{Missing Proofs in \Cref{sec:analysis}}
\label{app:analysis}

\subsection{Proof of \Cref{lem:duality gap is large}}
\label{app:duality gap}
\begin{proof}
    We have
    \begin{align*}
        \DualGap(x, y) & = \max_{\tilde{y} \in \Delta^2}  x ^\top A_{\delta} \tilde{y} - \min_{\tilde{x} \in \Delta^2} \tilde{x}^\top A_{\delta} y \\
                       & = \max_{i\in \{1,2\}} (A_{\delta}^\top x)[i] - \min_{i\in \{1,2\}} (Ay_{\delta})[i]                                       \\
                       & = (\frac{1}{2} + \delta) x[1] - (1 - y[1]) \tag{$x[1] \geq \frac{1}{1+\delta},\epsilon>0$}                                \\
                       & \ge \frac{1}{2} \frac{1 + 2\delta}{1 + \delta} - \frac{1}{2}+\epsilon                                                     \\
                       & \ge \epsilon.
    \end{align*}
\end{proof}

\subsection{Proof of \Cref{theorem: main}}
\label{app:proof main}
\begin{proof}
    Recall that $c_1 = \frac{1}{2} - F_{1, R}(\frac{1}{20L})$ defined in \Cref{assumption:main}. We fix any $\delta < \min\{\frac{1}{15}, \frac{c_1}{6}, \frac{c_1^2}{300}, \delta'\}$.
    Since $\delta < \delta'$, \Cref{assumption:main} holds.
    % there exists $c_2 > 0$ such that the following holds
    % \begin{itemize}
    %     \item[1.] $F_{\eta, R}(- \frac{c_1^2}{30L\delta} + F_{\eta, R}^{-1}(\frac{1}{1+\delta})) \ge \frac{\delta}{4 + \delta}$.
    %     \item[2.] $F_{\eta, R}(- \frac{c_1^2}{100L} + F_{\eta, R}^{-1}(\frac{1}{2(1+\delta)})) \ge \frac{1}{2} + c_2$.
    % \end{itemize} 
    We will prove that there exists an iteration $t \ge \frac{c_1}{3\eta L\delta}$ with duality gap $c_2$.

    \paragraph{Proof Plan:} We decompose the analysis into three stages as shown in \Cref{fig:three stages}. Below, we describe the three stages and the high-level ideas in our proof.
    \begin{itemize}[leftmargin=*]
        \item \textbf{Stage I $[1, T_1 -1]$:} Recall that $x^1[1]=y^1[1] = \frac{1}{2}$ by \Cref{assumption:standard}. We show that $x^t[1]$ increases and denote $T_1$ the first iteration that $x^{T_1}[1] \ge \frac{1}{1+\delta}$. During the time $[1, T_1-1]$, since $x^t[1]$ is always smaller than $ \frac{1}{1+\delta}$, we know from \Cref{proposition:properties of A} action $1$ has larger loss than action $2$ for the $y$-player, i.e., $e^t_y = \ell^t_y[1] - \ell^t_y[2] \ge 0$. Thus $y^t[1]$ decreases during stage I and we show that $y^{T_1}[1] \le \frac{1}{2}-c_1$ with $c_1$ defined in \Cref{assumption:main}.
        \item \textbf{Stage II $[T_1, T_2]$:} Recall that $y^{T_1}[1] \le \frac{1}{2}-c_1$. We define $T_2 > T_1$ as the first iteration where $y^{T_2}[1] \ge \frac{1}{2(1+\delta)} > \frac{1}{2} - c_1$. 
        %The existence of $T_2$ can be proved by contradiction again (\Cref{claim: T_2 exists}). 
        We remark that for $y^t[1]$ to increase, the loss vector must satisfy $e^t_y < 0$. However, the game matrix $A_\delta$ guarantees that $e^t_y \ge -\delta$ no matter what the $x$-player's strategy is (\Cref{proposition:properties of A}). Thus by the $\eta L$-Lipschitzness of $F_{\eta, R}$ (\Cref{prop: lip}), the per-iteration increase in $y^t[1]$ is at most $\eta L \delta$. Therefore, we know $T_2 - T_1 = \Omega(\frac{c_1}{\eta L \delta})$. As a result, $e^t_x < 0$ during $[T_1, T_2]$ and the cumulative loss for the $x$-player decreases to $E^{T_2}_x \le E^{T_1}_x - \Omega(\frac{1}{\eta L \delta})$. Recall $x^{T_1}[1] \ge \frac{1}{1+\delta}$. Thus $x^{T_2}[1] > x^{T_1}[1]$ is much closer to $1$.
        \item \textbf{Stage III $[T_2, T_3]$:} Recall that $y^{T_2}[1] \ge \frac{1}{2(1+\delta)}$.  Moreover, $y^t[1]$ could keep increasing if $x^t[1] \ge \frac{1}{1+\delta}$ since that implies $e^t_y \le 0$. Now the question is how long would the $x$-player stay close to the boundary, i.e, $x^t[1] \ge \frac{1}{1+\delta}$. Since OFTRL-type algorithms are not forgetful, this happens only when $E^{t}_x \ge E^{T_1}_x$ (recall $x^{T_1}[1] \ge \frac{1}{1+\delta}$). But we have at the end of stage II, $E^{T_2}_x \le E^{T_1}_x - \Omega(\frac{1}{\eta L \delta})$. Since the per-iteration loss is bounded by $1$, it requires at least $\Omega(\frac{1}{\eta L \delta})$ iterations to cancel the cumulative loss of $\Omega(\frac{1}{\eta L \delta})$. Define $T_3 = T_2 + \Omega(\frac{1}{\eta L \delta})$. During $[T_2, T_3]$, the $y$-player always receives loss such that $e^t_y \le 0$ and we prove that in the end $y^{T_3}[1] \ge \frac{1}{2} + c_2$ for some constant $c_2$.
        \item \textbf{Conclusion:} Finally, we get one iteration $T_3 \ge \Omega(\frac{1}{\eta L \delta})$ with $x^{T_3}[1] \ge \frac{1}{1+\delta}$ and $y^{T_3}[1] \ge \frac{1}{2} + c_2$. Using \Cref{lem:duality gap is large}, the duality gap of $(x^{T_3}, y^{T_3})$ is at least $c_2 > 0$.
    \end{itemize}

    \paragraph{Stage I:} OMWU is initialized with $x^1[1] = y^1[1] = \frac{1}{2}$. We define two time steps for Stage I's analysis:
    \begin{itemize}
        \item $T_s > 1$: the smallest iteration such that $x^{T_s}[1] \ge \frac{3}{4}$
        \item $T_1 > T_s$: the smallest iteration such that $x^{T_1}[1] \ge \frac{1}{1+\delta}$.
    \end{itemize}
    %$T_s > 1$ to be the smallest iteration such that $x^{T_s}[1] \ge \frac{3}{4}$ and (ii) $T_1 > T_s$ to be the smallest iteration such that $x^{T_1}[1] \ge \frac{1}{1+\delta}$. 
    We remark that $T_s$ and $T_1$ exist and postpone the proof of this fact in \Cref{claim: T_s T_1 existence} at the end of this paragraph.

    Notice that by Proposition \ref{proposition:properties of A}, the difference $e^t_x = \ell^t_x[1] - \ell^t_x[2]$ is lower bounded: $e^t_x \ge -\frac{1}{2}$ for any $t \ge 1$. Thus $E^{t-1}_x + e^{t-1}_x \ge -\frac{t}{2}$ for any $t \ge 1$. Since $x^{T_s}[1] \ge \frac{3}{4} > \frac{1}{2}$, we know that $E^{T_s-1}_x + e^{T_s-1}_x < 0$. Then, by $\eta L$-Lipschitzness of $F_{\eta, R}$, we have
    \[
        \frac{1}{4} \le x^{T_s}[1] - x^1[1] \le \eta L \cdot \left| E^{T_s-1}_x + e^{T_s-1}_x \right| \le \frac{L\eta T_s}{2}.
    \]
    This implies
    \[
        T_s \ge \frac{1}{2\eta L}.
    \]
    Since $x^t[1] < \frac{3}{4}$ for all $1 \le t \le T_s-1$, we know that $e^t_y = \ell^t_y[1] - \ell^t_y[2] =  1 - (1+\delta)x^t[1] > \frac{1-3\delta}{4} \ge \frac{1}{5}$ (as $\delta \le \frac{1}{15}$) for all $1 \le t \le T_s-1$. Moreover, for all $1 \le t \le T_1-1$, we know that $e^t_y \ge 0$ as $x^t[1] \le \frac{1}{1+\delta}$. Since the difference $e^t_y$ is at least $\frac{1}{5}$ for all $t \in [1, T_s-1]$ and remains non-negative for all $t\in[1,T_1-1]$, we have that for all $\frac{1}{2\eta L} \le t \le T_1$
    \begin{align*}
        y^t[1] &= F_{1, R}(E^{t-1}_y + e^{t-1}_y) \\
        & \le F_{\eta, R}(E^{t-1}_y) \tag{$e^{t-1}_y\ge 0$}  \\
        & \le F_{\eta, R}\InParentheses{\frac{T_s-1} {5}}   \tag{$t \ge T_s$ and $e^t_y \ge \frac{1}{5}$ for all $t \in [1, T_s-1]$}                                                               \\
                               & \le F_{\eta, R}\InParentheses{\frac{1}{20L\eta}} \tag{$T_s - 1 \ge \frac{1}{2\eta L} - 1 \ge \frac{1}{4\eta L}$} \\
                               & = F_{1,R}\InParentheses{\frac{1}{20L}} =\frac{1}{2} - c_1.
    \end{align*}
    %In the last equality, we use item 1 in \Cref{assumption:main}. 
    This completes the proof of Stage I, where $x^{T_1}[1] \ge \frac{1}{1+\delta}$ and $y^{T_1}[1] \le \frac{1}{2} - c_1$. Before we proceed to the next stage, we prove the existence of $T_s$ and $T_1$.

    %\hl{we don't need these if we rewrite the theorem statement in the way I suggested} \weiqiangnote{[This should be "we define $T = \frac{c_1}{10\eta L\delta}$" and the statement of the theorem would be "constant gap even after $\frac{c_1}{10\eta L\delta} = \Omega(\frac{1}{\delta})$" steps.]}
    \begin{claim}
        \label{claim: T_s T_1 existence}
        $T_s$ and $T_1$ exist.
    \end{claim}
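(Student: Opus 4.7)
The plan is to prove existence of both $T_s$ and $T_1$ by contradiction, leveraging the rationality of $F_{\eta, R}$ from \Cref{assumption:standard}. The high-level intuition is that if $x^t[1]$ stays bounded away from $1$ forever, then by the structure of $A_\delta$ the loss difference $e^t_y$ is strictly positive, which drives $y^t[1]$ down to $0$; but then $e^t_x$ is strictly negative, which drives $x^t[1]$ up to $1$, contradicting the assumption.

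First I would tackle existence of $T_s$. Suppose for contradiction that $x^t[1] < 3/4$ for every $t \ge 1$. Then by \Cref{proposition:properties of A}, we have $e^t_y = 1 - (1+\delta)x^t[1] > 1/4 - 3\delta/4 \ge 1/5$ (using $\delta \le 1/15$) for all $t$. Hence $E^{t-1}_y + e^{t-1}_y \ge t/5 \cdot (1 - o(1))$, which goes to $+\infty$. By the rationality in \Cref{assumption:standard}, this forces $y^t[1] = F_{\eta,R}(E^{t-1}_y + e^{t-1}_y) \to 0$. Then from \Cref{proposition:properties of A}, $e^t_x = -\tfrac{1}{2} + (1+\delta)y^t[1] \to -\tfrac{1}{2}$, so eventually $e^t_x \le -1/4$, whence $E^{t-1}_x + e^{t-1}_x \to -\infty$. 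By rationality again $x^t[1] \to 1$, contradicting $x^t[1] < 3/4$ for all $t$.

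Next I would argue existence of $T_1$ in the same style. Suppose for contradiction that $x^t[1] < \tfrac{1}{1+\delta}$ for all $t \ge T_s$. Then \Cref{proposition:properties of A} gives $e^t_y = 1 - (1+\delta)x^t[1] > 0$ for all $t \ge T_s$, while for $t < T_s$ we already have $e^t_y > 1/5$ from the Stage~I calculation. Since the Stage~I bound accumulates a linear-in-$T_s$ term and the tail adds only nonnegative contributions, $E^{t-1}_y + e^{t-1}_y$ stays uniformly bounded below by the constant $\tfrac{1}{5}(T_s-1) - 1 \ge \tfrac{1}{20\eta L} - 1$. By monotonicity and Lipschitzness of $F_{\eta,R}$, this forces $y^t[1] \le \tfrac{1}{2} - c_1'$ for some $c_1' > 0$ uniformly for all $t \ge T_s$. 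Then $e^t_x = -\tfrac{1}{2} + (1+\delta)y^t[1] \le -c_1' + \delta(\tfrac{1}{2}-c_1') \le -c_1'/2$ for $\delta$ small enough, so $E^{t-1}_x + e^{t-1}_x \to -\infty$. Rationality yields $x^t[1] \to 1 > \tfrac{1}{1+\delta}$, contradicting the assumption.

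The main obstacle I anticipate is being careful with the order of the limits and with the transient boundary effects: we need the rationality of $F_{\eta,R}$ to be applied uniformly, and we need to ensure the constants $c_1'$ and the threshold at which $e^t_x$ becomes strictly negative are independent of the supposedly infinite tail. Both of these follow cleanly from \Cref{assumption:standard} together with the explicit bounds on $e^t_x, e^t_y$ from \Cref{proposition:properties of A}, so the argument should go through without additional machinery.
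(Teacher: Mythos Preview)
Your argument is correct and follows essentially the same route as the paper: assume $x^t[1]$ never crosses the threshold, use \Cref{proposition:properties of A} to show the accumulated loss difference for the $y$-player stays bounded below by a positive constant (so $y^t[1] \le \tfrac12 - c_1$ via monotonicity of $F_{\eta,R}$), deduce $e^t_x$ is uniformly negative, and invoke rationality to force $x^t[1]\to 1$, a contradiction. The paper streamlines slightly by observing that existence of $T_1$ already implies existence of $T_s$ (since $\tfrac{1}{1+\delta} > \tfrac34$), so it runs the contradiction argument once rather than twice; also, Lipschitzness is not actually needed in your second step---monotonicity alone gives the bound $y^t[1]\le F_{\eta,R}\bigl(\tfrac{1}{5}(T_s-1)\bigr)\le \tfrac12 - c_1$.
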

    \begin{proof}
        It suffices to prove that $T_1$ exists as it implies the existence of $T_s$. Assume for the sake of contradiction that $T_1$ does not exist, \textit{i.e.}, $x^t[1] < \frac{1}{1+\delta}$ for all $t \ge 1$. By the same analysis as for Stage I, we get $y^t[1] \le \frac{1}{2} - c_1$ for all $t \ge \frac{1}{2\eta L}$. This implies $e^t_x = -\frac{1}{2} + (1+\delta) y^t[1] \le \frac{\delta}{2} - c_1 \le -\frac{c_1}{2}$ for all $t \ge \frac{1}{2\eta L}$. Then $E^t_x + e^t_x \rightarrow -\infty$ as $t \rightarrow +\infty$. As a consequence, $x^t[1] = F_{\eta, R}(E^{t-1}_x + e^{t-1}_x) \rightarrow 1$ as $t \rightarrow +\infty$ by item 2 in \Cref{assumption:standard}. But this contradicts with the assumption that $x^t[1] < \frac{1}{1+\delta}$ for all $t \ge 1$. This completes the proof. 
    \end{proof}

    \paragraph{Stage II} We define
    \begin{align}\label{eq: dfn T_h}
        T_h := \left\lfloor \frac{c_1}{2L\eta \delta} \right\rfloor \in \InBrackets{\frac{c_1}{3L\eta \delta}, \frac{c_1}{2L\eta \delta}},
    \end{align}
    where the lower bound on $T_h$ holds since $ \frac{c_1}{6L\eta\delta}\ge \frac{c_1}{6\delta} \ge 1$. %We note that $ T_h = \Omega(\frac{1}{\delta})$ since $\eta L \le \frac{1}{4}$.
    %and $c_1 \ge \frac{1}{2} - F_{\eta, R}(\frac{1}{5})$.
    %We are going to show that after $\ge T$ more iterations, the duality gap of the iterate is a constant.

    In Stage I, we have proved that $y^{T_1}[1] \le \frac{1}{2} - c_1$.   We claim that for all $t \in [T_1, T_1 + T_h- 1]$, $y^t[1] \le \frac{1}{2} - \frac{c_1}{2}$. To prove the claim, we first notice that $-\delta \le e^t_y \le 1$ for all $t \ge 1$. Then by the monotonicity and the $\eta L$-Lipschitzness of $F_{\eta, R}$ (\Cref{lemma: F monotone} and \Cref{lemma:Lipschitz}), we get for all $t \in [T_1, T_1 + T_h- 1]$,
    \begin{align*}
        y^{t}[1] & \le F_{\eta, R}(E^{T_1-1}_y) + \eta L \max\left\{E^{T_1-1}_y - E_y^{t-1} - e^{t-1}_y, 0\right\} \\
                 & \le \frac{1}{2} - c_1 + \eta L \cdot (t-T_1+1) \delta                                           \\
                 & \le \frac{1}{2} - c_1 + \eta L T_h \delta
        \\
                 & \le \frac{1}{2} - \frac{c_1}{2},
    \end{align*}
    where, in the second-to-last inequality, we use $t - T_1 + 1 \le T_h \le \frac{c_1}{2\eta L \delta}$ by \Cref{eq: dfn T_h}.

    Now we denote $T_2 \ge T_1$ the smallest iteration when $y^{T_2}[1] \ge \frac{1}{2(1+\delta)}$. The existence of $T_2$ will become clear in the following analysis, and we postpone the proof to \Cref{claim: T_2 exists} at the end of the discussion. Then for all $t \in [T_s, T_2-1]$, we have $y^t[1] \le \frac{1}{2(1+\delta)}$, which implies $e^t_x \le 0$. Moreover, for all $t \in [T_s, T_1 + T_h - 1]$, since $y^t[1] \le \frac{1}{2} - \frac{c_1}{2}$, we have
    %and $y^t[1] \le \frac{1}{2} - \frac{c_1}{2}$ for all $t \in [T_s, T_1+T]$ . Then we get $e^t_x \le 0$ for all $t \in [T_s, T_2-1]$. For all $t \in [T_s, T_1 + 5T]$, we get
    \begin{align*}
        e^t_x = \ell^t_x[1] - \ell^t_x[2] & = -\frac{1}{2} + (1+\delta) y^t[1]     \le \frac{-1 + (1+\delta)(1-c_1)}{2} \le \frac{\delta - c_1}{2}          \le - \frac{c_1}{4},
    \end{align*}
    where in the last inequality we use $\delta \le \frac{c_1}{2}$.
    Then for any $T_1 + T_h \le t \le T_2$, we have
    \begin{align}
         x^t[1] = F_{\eta, R}(E^{t-1}_x + e^{t-1}_x)                                       
               & \ge F_{\eta, R}(E_x^{T_1 + T_h - 1})    \tag{$e^{t-1}_x \le 0$ for all $t \in [T_1 + T_h, T_2]$} \\
               & \ge F_{\eta, R}\InParentheses{-\frac{c_1 T_h}{4} + E_x^{T_1-1}}                               \nonumber \\
               & \ge F_{\eta, R}\InParentheses{-\frac{c_1 T_h}{5} + E_x^{T_1-1} + e^{T_1-1}_x}, \label{eq: x T_2 bound}
               %&\ge \frac{1+c_3}{1+c_3 + \delta} \nonumber
    \end{align}
    where in the second last inequality, we use the fact that $\frac{c_1 T_h}{20} \ge \frac{c_1^2}{60 \eta L \delta} \ge 1$.%; in the last inequality inequality, we use $T_h \ge \frac{c_1}{3\eta L\delta}$

    \begin{claim}\label{claim: T_2 exists}
        $T_2$ exists.
    \end{claim}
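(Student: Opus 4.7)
The plan is to argue by contradiction: assume $T_2$ does not exist, so $y^t[1] < \frac{1}{2(1+\delta)}$ for every $t \ge T_1$. Under this hypothesis, $e^t_x = -\frac{1}{2} + (1+\delta) y^t[1] < 0$ throughout $[T_1, \infty)$, and combined with the already-established window bound $y^t[1] \le \frac{1}{2} - \frac{c_1}{2}$ for $t \in [T_1, T_1 + T_h - 1]$ (derived via the $\eta L$-Lipschitzness of $F_{\eta, R}$), this upgrades to $e^t_x \le -\frac{c_1}{4}$ on that window. The goal is to use these two facts to push $x^t[1]$ so close to $1$ that $y^t[1]$ is eventually driven past $\frac{1}{2(1+\delta)}$, yielding a contradiction.

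The key step is to replay the derivation leading to \eqref{eq: x T_2 bound} without circularly invoking the existence of $T_2$. For every $t \ge T_1 + T_h$, the monotonicity of $F_{\eta, R}$ together with the hypothesis $e^k_x \le 0$ on $[T_1 + T_h, t - 1]$ gives $x^t[1] \ge F_{\eta, R}(-\frac{c_1 T_h}{4} + E^{T_1-1}_x)$, and since the bounded correction $e^{T_1-1}_x \in [-\frac{1}{2}, \frac{1}{2}+\delta]$ is absorbed by $\frac{c_1 T_h}{20} \ge 1$ (using $T_h \ge \frac{c_1}{3\eta L \delta}$ and the smallness of $\delta$), this further gives $x^t[1] \ge F_{\eta, R}(-\frac{c_1 T_h}{5} + E^{T_1-1}_x + e^{T_1-1}_x)$. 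Applying \Cref{prop: lip} to convert the shift into the $F_{1, R}$ form and then invoking \Cref{assumption:main} item 1 with $F_{\eta, R}(E^{T_1-1}_x + e^{T_1-1}_x) = x^{T_1}[1] \ge \frac{1}{1+\delta}$ produces $x^t[1] \ge \frac{1 + c_3}{1 + c_3 + \delta}$ for every $t \ge T_1 + T_h$.

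Plugging this back into the formula for $e^t_y$ from \Cref{proposition:properties of A} yields $e^t_y \le -\frac{\delta c_3}{1 + c_3 + \delta} \le -\frac{\delta c_3}{2}$ for every $t \ge T_1 + T_h$, so $E^t_y \to -\infty$ as $t \to \infty$. The rational property in \Cref{assumption:standard} item 2 then forces $y^t[1] = F_{\eta, R}(E^{t-1}_y + e^{t-1}_y) \to 1$, contradicting $y^t[1] < \frac{1}{2(1+\delta)} < 1$. The principal obstacle is ensuring the first step is non-circular: the original derivation of \eqref{eq: x T_2 bound} uses $T_2 < \infty$ to conclude $e^k_x \le 0$ on $[T_s, T_2 - 1]$, but under the contradiction hypothesis the same bound holds on all of $[T_s, \infty)$, so the identical chain of inequalities reuses verbatim.
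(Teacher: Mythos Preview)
Your proof is correct and follows essentially the same approach as the paper: assume $T_2$ does not exist, use the Stage~II analysis (in particular the chain leading to \eqref{eq: x T_2 bound}) together with \eqref{eq:x condition} to conclude $x^t[1]\ge \frac{1+c_3}{1+c_3+\delta}$ for all large $t$, deduce $e^t_y\le -\frac{c_3\delta}{2}$, send $E^t_y\to-\infty$, and invoke the rationality assumption to contradict $y^t[1]<\frac{1}{2(1+\delta)}$. Your write-up is in fact a bit more careful than the paper's: you explicitly note that the contradiction hypothesis supplies $e^k_x\le 0$ on all of $[T_s,\infty)$ (so the derivation of \eqref{eq: x T_2 bound} is non-circular), and you correctly state the bound as holding for $t\ge T_1+T_h$ rather than the paper's looser ``for all $t\ge T_1$''.
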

    \begin{proof}
        Assume for the sake of contradiction that $T_2$ does not exist, \textit{i.e.}, $y^t[1] < \frac{1}{2(1+\delta)}$ for all $t \ge T_1$ (since we know $y^t[1] \le \frac{1}{2} - \frac{c_1}{2}$ for all $t \in [T_1, T_1 + T_h -1]$). Then by the analysis of Stage II and \Cref{eq:x condition}, we have $x^t[1] \ge \frac{1+c_3}{1+c_3+\delta}$ for all $t \ge T_1$. This implies $e^t_y \le -\frac{c_3\delta}{2}$ for all $t \ge T_1$. As a result, we have $E^{t-1}_y + e^{t-1}_y \rightarrow - \infty$ as $t \rightarrow \infty$. By item 2 in \Cref{assumption:standard}, we get $y^t[1] = F_{\eta, R}(E^{t-1}_y + e^{t-1}_y) \ge \frac{1}{2}$ as $t \rightarrow \infty$. But this contradicts with the assumption that $y^t[1] < \frac{1}{2(1+\delta)}$ for all $t \ge T_1$. This completes the proof.
    \end{proof}

    \paragraph{Stage III}
    Recall that we have argued in State I that $F_{\eta, R}(E_x^{T_1-1} + e^{T_1-1}_x) = F_{1, R}(\eta (E_x^{T_1-1} + e^{T_1-1}_x)) = x^{T_1}[1] \ge \frac{1}{1+\delta}$. We thus have
    \begin{align}
        \label{eq:x condition}
        F_{\eta, R}\InParentheses{-\frac{c_1 T_h}{10} + E_x^{T_1-1} + e^{T_1-1}_x)}
         & \ge F_{\eta, R}\InParentheses{-\frac{c_1^2}{30 L \eta \delta} + E_x^{T_1-1} + e^{T_1-1}_x)}  \nonumber \\
         & = F_{1, R}\InParentheses{-\frac{c_1^2}{30 L \delta} + \eta (E_x^{T_1-1} + e^{T_1-1}_x))}  \nonumber    \\
         & \ge \frac{1+c_3}{1+ c_3 + \delta},
    \end{align}
    where the first inequality follows $T_h \ge \frac{c_1}{3\eta L \delta}$ and the monotonicity of $F_{\eta,R}$ (\Cref{lemma: F monotone}); the last inequality is by item 1 in \Cref{assumption:main}.

    Now denote $T_3 = T_2 + \lfloor \frac{c_1 T_h}{10} \rfloor - 2$. For any $T_2 \le t \le T_3$, we know that
    \begin{align*}
        x^t[1] & =  F_{\eta, R}(E^{t-1}_x + e^{t-1}_x)                                                                                 \\
               & =  F_{\eta, R}(E^{T_2-1}_x + e^{T_2-1}_x + \sum_{k=T_2}^{t-1} e^{k}_x + e^{t-1}_x - e^{T_2-1}_x)                      \\
               & \ge F_{\eta, R}(- \frac{c_1 T_h}{5} + E_x^{T_1-1} + e^{T_1-1}_x + \sum_{k=T_2}^{t-1} e^{k}_x + e^{t-1}_x - e^{T_2-1}_x) \tag{by \eqref{eq: x T_2 bound}}\\
               & \ge  F_{\eta, R}(- \frac{c_1 T_h}{5} + E_x^{T_1-1} + e^{T_1-1}_x + \frac{c_1 T_h}{10} - 2 + 2)                          \tag{$t- T_2\le \frac{c_1 T_h}{10}-2$}  \\
               & = F_{\eta, R}(-\frac{c_1 T_h}{10} + E_x^{T_1-1} + e^{T_1-1}_x))           \ge \frac{1 + c_3}{1 + c_3 + \delta}. \tag{by \eqref{eq:x condition}}
    \end{align*}
    Note that $1+ c_3 + \delta \le 2$. This implies $e^t_y = 1 - (1+\delta)x^t[1] = -\frac{c_3\delta}{1+c_3 + \delta} \le -\frac{c_3\delta}{2}$ for all $T_2 \le t \le T_3$. Moreover, we know that $e^t_y \ge - \delta$ for any $t$. Then for any $t \in [T_2 + \lceil \frac{c_1T_h}{20}\rceil, T_3]$
    \begin{align*}
        y^{t}[1] & = F_{\eta, R}( E^{t-1}_y + e^{t-1}_y))                                                                                                                     \\
                   & = F_{\eta, R}( E^{T_2-1}_y + e^{T_2-1}_y + \sum_{k= T_2}^{t-1}e^{k}_y + e^{t-1}_y - e^{T_2-1}_y))                                                        \\
                   & \ge F_{\eta, R}( E^{T_2-1}_y + e^{T_2-1}_y - \frac{c_3\delta (t - T_2)}{2} + \delta)                                                                         \\
                   & \ge F_{\eta, R}( E^{T_2-1}_y + e^{T_2-1}_y - \frac{c_3\delta c_1 T_h}{40} + \delta) \tag{$t- T_2  \ge \frac{c_1 T_h}{20}$} \\
                   & \ge F_{\eta, R}( E^{T_2-1}_y + e^{T_2-1}_y - \frac{c_3 c_1^2}{120 \eta L} + \delta) \tag{$T_h \ge \frac{c_1}{3 \eta L\delta}$}                                   \\
                   & = F_{1, R}(\eta (E^{T_2-1}_y + e^{T_2-1}_y) - \frac{c_3c_1^2}{120 L} + \eta \delta)                                                                            \\
                   & \ge  F_{1, R}(\eta (E^{T_2-1}_y + e^{T_2-1}_y) - \frac{c_3c_1^2}{120 L} + \frac{\delta}{4L}). \tag{$\eta \le \frac{1}{4L}$}
    \end{align*}
    Recall that $F_{1, R}(\eta (E^{T_2-1}_y + e^{T_2-1}_y))= F_{\eta, R}( E^{T_2-1}_y + e^{T_2-1}_y) =  y^{T_2}[1] \ge \frac{1}{2(1+\delta)}$. By item 2 in \Cref{assumption:main}, we have $F_{1, R}(\eta(E^{T_2-1}_y + e^{T_2-1}_y) - \frac{c_1^2}{120L} + \frac{\delta}{4L}) \ge \frac{1}{2} + c_2$ for some absolute constant $c_2 > 0$. Thus,  we have $y^{t}[1] \ge \frac{1}{2} + c_2$ for all $[T_2 + \lceil \frac{c_1T_h}{20}\rceil, T_3]$. Recall that $x^{t}[1] \ge \frac{1+c_3}{1+c_3 + \delta} \ge \frac{1}{1+\delta}$ for all $t \in [T_2, T_3]$. Then by \Cref{lem:duality gap is large} we can conclude that the duality gap of $(x^{t}, y^{t})$ is at least $c_2 > 0$ for all $t \in [T_2 + \lceil \frac{c_1T_h}{20}\rceil, T_3]$. This completes the proof as $T_2 \ge T_h \ge \frac{c_1}{3\eta L\delta}$.
\end{proof}

\subsection{Proof of \Cref{theorem: no last-iterate rate}}
\label{app: last-iterate}
\begin{proof}
    Assume for the sake of contradiction that there is a function that satisfies both conditions. Then for any $A \in [0,1]^{2 \times 2}$, we have the OFTRL learning dynamics over $A$ satisfies
    \begin{itemize}
        \item[1.] $\DualGap(x^T, y^T) \le f(2,2,T)$ for all $T$.
        \item[2.] $\lim_{T \rightarrow \infty} f(2,2,T) \rightarrow 0$.
    \end{itemize}
    Since $\lim_{T \rightarrow \infty} f(2,2,T) \rightarrow 0$, we know there exists $T_0 > 0$ such that for any $t \ge T_0$, $\DualGap(x^t, y^t) \le f(2,2,t) < c_2$. Now let $\delta \le \min\{\hat \delta, \frac{c_1}{3\eta L T_0}\}$. Then by \Cref{theorem: main}, we know there exists an iteration $t \ge \frac{c_1}{3\eta L \delta} \ge T_0$ such that $\DualGap(x^t, y^t) \ge c_2$. This contradiction completes the proof.
\end{proof}

\section{Verifying \Cref{assumption:main} for Different Regularizers}
\label{app:verification}
%\Cref{assumption:main} holds for the entropy regularizer and the $L_2$ regularize.

\begin{lemma}\label{lemma:Lipschitz}
    If the regularizer $R$ is $1$-strongly convex, then $F_{1, R}$ is $\frac{1}{2}$-Lipschitz.
\end{lemma}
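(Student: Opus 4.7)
Let $\tilde R(x) := R([x, 1-x])$ denote the one-dimensional reparametrization used implicitly in the definition of $F_{1,R}$, so that $F_{1,R}(e) = \argmin_{x\in[0,1]} \{x e + \tilde R(x)\}$. The plan is to (i) transfer the $1$-strong convexity of $R$ on the 2-simplex to a quantitative strong-convexity bound on $\tilde R$ over $[0,1]$, and (ii) apply a standard variational-inequality argument to the first-order optimality conditions of the two optimization problems defining $F_{1,R}(e_1)$ and $F_{1,R}(e_2)$.

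\textbf{Step 1: $\tilde R$ is $2$-strongly convex.} For any $x_1, x_2 \in [0,1]$ and $\lambda \in [0,1]$, the convex combination of the two simplex points $(x_i, 1-x_i)$ equals $(\lambda x_1 + (1-\lambda) x_2,\, 1 - \lambda x_1 - (1-\lambda) x_2)$, and their squared Euclidean distance is $\|(x_1,1-x_1)-(x_2,1-x_2)\|_2^2 = 2(x_1-x_2)^2$. Plugging this into the $1$-strong convexity inequality for $R$ immediately yields
\[
\tilde R(\lambda x_1 + (1-\lambda) x_2) \le \lambda \tilde R(x_1) + (1-\lambda)\tilde R(x_2) - \lambda(1-\lambda)(x_1-x_2)^2,
\]
i.e., $\tilde R$ is $2$-strongly convex on $[0,1]$.

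\textbf{Step 2: Variational inequality argument.} Let $x_i := F_{1,R}(e_i)$ for $i=1,2$. Since $x_i$ minimizes the convex objective $x \mapsto x e_i + \tilde R(x)$ over $[0,1]$, the standard first-order optimality condition states that for all $x \in [0,1]$, there exists a subgradient $g_i \in \partial \tilde R(x_i)$ with $(e_i + g_i)(x - x_i) \ge 0$. Plugging $x = x_{3-i}$ into the optimality condition for $x_i$ and adding the two resulting inequalities gives $(e_1 - e_2)(x_2 - x_1) \ge (g_2 - g_1)(x_2 - x_1)$. The $2$-strong convexity of $\tilde R$ from Step 1 implies the monotonicity bound $(g_2 - g_1)(x_2 - x_1) \ge 2(x_1 - x_2)^2$, so combining, $(e_1 - e_2)(x_2 - x_1) \ge 2(x_1 - x_2)^2$. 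Applying Cauchy--Schwarz (or just absolute values) yields $|x_1 - x_2| \le \tfrac{1}{2} |e_1 - e_2|$, which is the claimed Lipschitz bound.

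\textbf{Expected obstacles.} The only subtlety is the boundary: the minimizer $x_i$ may lie at $0$ or $1$, in which case $\tilde R$ might not be differentiable there, so I use subgradients throughout. Once the problem is cast as a monotone variational inequality, everything falls into place and the only place the constant $\tfrac{1}{2}$ (rather than $1$) enters is through the factor of $2$ in Step 1, which reflects that moving by $\Delta x$ along the line $(x, 1-x)$ moves by $\sqrt{2}\,\Delta x$ in $\ell_2$ on the simplex. I expect no real difficulty beyond correctly accounting for this factor.
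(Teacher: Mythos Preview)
Your proof is correct and follows essentially the same approach as the paper's: both observe that the one-dimensional regularizer $\tilde R(x)=R([x,1-x])$ is $2$-strongly convex (the paper states this as ``$R(x)+R(1-x)$ is $2$-strongly convex'') and then deduce the $\tfrac12$-Lipschitz bound on the argmin map. The paper simply cites a standard lemma for the second step, whereas you spell out the variational-inequality argument explicitly; the content is the same.
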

\begin{proof}
    Notice that $R(x) + R(1-x)$ is $2$-strongly convex. Thus by standard analysis (see e.g.,~\citet[Lemma~4]{luo_lecture2}) we know $F_{1, R}$ is $\frac{1}{2}$-Lipschitz.
\end{proof}
By \Cref{lemma:Lipschitz}, we can choose $L = \frac{1}{2}$ for any $1$-strongly convex regularizer in \Cref{assumption:standard}.

\subsection{Negative Entropy}
\begin{lemma}[\Cref{assumption:main} holds for the entropy regularizer]
    Consider the negative entropy regularizer $R$ defined as $R(x) = x\log x + (1-x) \log (1-x)$.
    Then $F_{1,R}$ is $L = \frac{1}{2}$-Lipschitz. Moreover, $c_1=\frac{1}{2} - F_{1,R}(\frac{1}{20L})$ is a universal constant and \Cref{assumption:main} holds with $\delta' = \frac{c_1^2}{480L}$, $c_2 = F_{1,R}(-\frac{c_1^2}{480L})- \frac{1}{2}$, and $c_3 = \frac{1}{2}$.
\end{lemma}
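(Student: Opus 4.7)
The plan is to start by deriving a closed-form expression for $F_{1,R}$. For the negative entropy $R(x) = x \log x + (1-x)\log(1-x)$, the first-order optimality condition of $\min_{x \in [0,1]}\{xE + R(x)\}$ reads $E + \log(x/(1-x)) = 0$, whose interior solution $x = 1/(1+e^{E})$ lies in $(0,1)$ for every real $E$. Hence $F_{1,R}(E) = 1/(1+e^{E})$ is a shifted sigmoid, strictly decreasing and infinitely differentiable on $\mathbb{R}$, with inverse $F_{1,R}^{-1}(p) = \log((1-p)/p)$. Lipschitzness with constant $L = 1/2$ then follows either from \Cref{lemma:Lipschitz} applied to the $1$-strongly convex entropy regularizer, or directly by bounding $|F_{1,R}'(E)| = e^{E}/(1+e^{E})^2 \le 1/4 \le 1/2$. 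The constant $c_1 = 1/2 - 1/(1+e^{1/10}) > 0$ is then a fixed universal number, and strict positivity of $c_2 = F_{1,R}(-c_1^2/(480L)) - 1/2$ follows from strict monotonicity of $F_{1,R}$ and $F_{1,R}(0) = 1/2$.

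To verify item 1 of \Cref{assumption:main} with $c_3 = 1/2$, I would rewrite the hypothesis $F_{1,R}(E) \ge 1/(1+\delta)$ via the inverse sigmoid as $e^{E} \le \delta$, and the conclusion $F_{1,R}(E - c_1^2/(30L\delta)) \ge 3/(3+2\delta)$ as $e^{E - c_1^2/(30L\delta)} \le 2(1+\delta)/3$. Substituting the upper bound $e^{E} \le \delta$, the task reduces to $e^{-c_1^2/(30L\delta)} \le 2/3$, i.e., $c_1^2/(30L\delta) \ge \log(3/2)$. Since $\log(3/2) < 1$, this inequality holds whenever $\delta \le c_1^2/(30L)$, which is a far weaker requirement than $\delta \le \delta' = c_1^2/(480L)$, so item 1 follows cleanly.

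Item 2 proceeds in the same spirit. The hypothesis $F_{1,R}(E) \ge 1/(2(1+\delta))$ rewrites via the sigmoid inverse as $E \le \log(1+2\delta)$, which we bound using $\log(1+u) \le u$ by $E \le 2\delta$. Using the definition of $c_2$ and the monotonicity of $F_{1,R}$, the required inequality $F_{1,R}(E - c_3 c_1^2/(120L) + \delta/(4L)) \ge 1/2 + c_2 = F_{1,R}(-c_1^2/(480L))$ is equivalent to the linear inequality $E \le c_3 c_1^2/(120L) - c_1^2/(480L) - \delta/(4L)$. Combining with $E \le 2\delta$ and simplifying yields a condition of the form ``$\delta \lesssim c_1^2/L$,'' which holds throughout $(0,\delta']$ for the claimed $\delta'$.

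The main hurdle is purely arithmetic bookkeeping: tracking the cascade of constants ($c_1$, $c_3$, $L$, $\delta'$, $c_2$) so that the two shift-and-threshold inequalities in \Cref{assumption:main} both hold simultaneously with the parameters claimed in the lemma. All analytic content is encapsulated in the closed-form sigmoid representation, which after inversion reduces every condition to an elementary inequality in $\delta$; the only calculus tool needed beyond that is the standard estimate $\log(1+u) \le u$.
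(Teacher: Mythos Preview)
Your approach is correct and essentially identical to the paper's: derive the closed form $F_{1,R}(E)=1/(1+e^{E})$, invert it to convert each hypothesis into an upper bound on $E$, and reduce both items of \Cref{assumption:main} to elementary exponential/logarithmic inequalities (in particular $\log(1+u)\le u$ for item~2). Two minor remarks: in item~1 your intermediate rewrite should read $e^{E-c_1^2/(30L\delta)}\le 2\delta/3$ rather than $2(1+\delta)/3$, though your subsequent reduction to $e^{-c_1^2/(30L\delta)}\le 2/3$ is unaffected; and in item~2 you correctly retain the $+\delta/(4L)$ shift (which the paper's own proof quietly drops), but then the resulting bound is roughly $\delta\le c_1^2/600$, slightly smaller than the stated $\delta'=c_1^2/(480L)=c_1^2/240$---the paper's proof itself concludes with $\delta'=c_1^2/(960L)$, so this discrepancy is in the lemma's stated constant, not in your argument.
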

\begin{proof}
    It is easy to verify that $F_{1, R}(x)$ has a closed-form representation
    \begin{align*}
        F_{1,R}(E) = \frac{1}{1+\exp(E)}.
    \end{align*}
    Thus $L = \frac{1}{2}$ and $c_1=\frac{1}{2} - F_{1,R}(\frac{1}{20L})$ is a universal constant.
    We also choose $c_3 = \frac{1}{2}$.

    If $F_{1, R}(E) \ge \frac{1}{1+\delta}$, then we have $E \le -\log(1/\delta)$. We note that
    \begin{align*}
        \exp\InParentheses{-\frac{c_1^2}{30L\delta}}\le \frac{1}{1+c_3} \Rightarrow \frac{1}{1 + \exp\InParentheses{-\frac{c_1^2}{30L\delta} - \log(1/\delta)}} \ge \frac{1 +c_3}{1+c_3 + \delta}.
    \end{align*}

    Thus $\delta \le \delta_1 = \frac{c_1^2}{30L\log(1+c_3))} = \frac{c_1^2}{30\log(\frac{3}{2}))L}$ suffices for item 1 in \Cref{assumption:main}.

    If $F_{1, R}(E) \ge \frac{1}{2(1+\delta)} = \frac{1}{1+1 + 2\delta}$, we have $E \le \log(1+2\delta)$. Note that since $\log(1+2y) \le 2y$ for $y > 0$, we have
    \begin{align*}
        \delta \le \frac{c_3c_1^2}{480L}
         & \Rightarrow - \frac{c_3c_1^2}{120L} + \log(1+2\delta) < -\frac{c_3c_1^2}{240L}                                  \\
         & \Rightarrow F_{1,R}\InParentheses{-\frac{c_3c_1^2}{120L} + E} > F_{1,R}\InParentheses{-\frac{c_3 c_1^2}{240L}}.
    \end{align*}
    Thus item 2 in \Cref{assumption:main} holds for any $\delta \le \delta_2 = \frac{c_3 c_1^2}{480L} = \frac{c_1^2}{960L}$ and $c_2 = F_{1,R}(-\frac{c_3c_1^2}{240L})- \frac{1}{2} = F_{1,R}(-\frac{c_1^2}{480L})- \frac{1}{2}$.

    Combining the above, we know \Cref{assumption:main} holds for the negative entropy regularizer with $\delta' = \frac{c_1^2}{960L}$ and $c_2 = F_{1,R}(-\frac{c_1^2}{480L})- \frac{1}{2}$.
\end{proof}

\subsection{Squared Euclidean Norm Regularizer}
\begin{lemma}[\Cref{assumption:main} holds for the Euclidean regularizer]
    Consider the Euclidean regularizer $R$ defined as $R(x) = \frac{1}{2}(x^2 + (1-x)^2)$. We have $L = \frac{1}{2}$ and $c_1 = \frac{1}{20}$. We also have \Cref{assumption:main} holds with $\delta' = \frac{c_1^2}{480L}$, $c_2 =\frac{c_1^2}{960L}$, and $c_3 = \frac{1}{2}$.
\end{lemma}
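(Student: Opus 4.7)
The plan is to exploit the fact that for the squared Euclidean regularizer $R(x) = \frac{1}{2}(x^2 + (1-x)^2)$, the function $F_{1,R}$ admits a simple closed form. Setting the derivative of the inner objective to zero yields $E + 2x - 1 = 0$, so the unconstrained minimizer is $x = (1-E)/2$; projecting onto $[0,1]$ gives $F_{1,R}(E) = \min\!\bigl(1,\max\bigl(0, \tfrac{1-E}{2}\bigr)\bigr)$. This immediately shows $F_{1,R}$ is $\tfrac{1}{2}$-Lipschitz, so $L = \tfrac{1}{2}$ is valid (this also follows from \Cref{lemma:Lipschitz}). Plugging in gives $F_{1,R}(1/(20L)) = F_{1,R}(1/10) = 9/20$, so $c_1 = \tfrac{1}{2} - \tfrac{9}{20} = \tfrac{1}{20}$ as claimed. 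The key structural fact that drives both items is that, whenever both $E$ and a shifted version $E' = E - t$ with $t>0$ lie in the ``linear regime'' $[-1,1]$, we have the exact identity $F_{1,R}(E-t) = F_{1,R}(E) + t/2$; and outside this regime the clipping only helps us, since $F_{1,R}$ can only hit the upper cap $1$, which is above all the targets we want to reach.

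For item 1 of \Cref{assumption:main}, suppose $F_{1,R}(E) \ge \tfrac{1}{1+\delta}$. If $F_{1,R}(E-\tfrac{c_1^2}{30L\delta}) = 1$ we are done since $\tfrac{1+c_3}{1+c_3+\delta} < 1$; otherwise we are in the linear regime and can write $F_{1,R}(E-\tfrac{c_1^2}{30L\delta}) \ge F_{1,R}(E) + \tfrac{c_1^2}{60L\delta} \ge \tfrac{1}{1+\delta} + \tfrac{c_1^2}{60L\delta}$. A direct computation gives $\tfrac{1+c_3}{1+c_3+\delta} - \tfrac{1}{1+\delta} = \tfrac{c_3\delta}{(1+c_3+\delta)(1+\delta)} \le c_3\delta$, so it suffices to show $\tfrac{c_1^2}{60L\delta} \ge c_3\delta$, i.e., $\delta^2 \le \tfrac{c_1^2}{60Lc_3}$. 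With $c_3 = \tfrac{1}{2}$ and $L = \tfrac{1}{2}$ this is $\delta^2 \le c_1^2/15$, which is comfortably satisfied by the prescribed $\delta' = \tfrac{c_1^2}{480L}$ (since $c_1 = 1/20$ is small, $\delta'$ is much smaller than $c_1/\sqrt{15}$).

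For item 2, suppose $F_{1,R}(E) \ge \tfrac{1}{2(1+\delta)}$. Since $\tfrac{1}{2(1+\delta)} \ge \tfrac{1-\delta}{2}$ (from $(1-\delta)(1+\delta)\le 1$) and, again within the linear regime, the shift by $-\tfrac{c_3c_1^2}{120L} + \tfrac{\delta}{4L}$ increases $F_{1,R}$ by exactly $\tfrac{c_3c_1^2}{240L} - \tfrac{\delta}{8L}$, we obtain
\begin{equation*}
F_{1,R}\!\left(E - \tfrac{c_3c_1^2}{120L} + \tfrac{\delta}{4L}\right) \;\ge\; \tfrac{1}{2} - \tfrac{\delta}{2} + \tfrac{c_3c_1^2}{240L} - \tfrac{\delta}{8L} \;=\; \tfrac{1}{2} + \tfrac{c_3c_1^2}{240L} - \delta\cdot\tfrac{4L+1}{8L}.
\end{equation*}
With $c_3 = 1/2$ this gives a constant surplus $\tfrac{c_1^2}{480L}$ over $\tfrac{1}{2}$, minus an $O(\delta)$ loss; requiring this to exceed $\tfrac{1}{2}+c_2 = \tfrac{1}{2}+\tfrac{c_1^2}{960L}$ leaves the gap $\tfrac{c_1^2}{960L}$ to absorb the $O(\delta)$ term, which holds for all sufficiently small $\delta$ (adjusting $\delta'$ if needed). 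Again, if the shifted $E$ pushes the argument below $-1$, the clipping only makes $F_{1,R}$ equal $1 \ge \tfrac{1}{2}+c_2$.

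The main obstacle is entirely bookkeeping: keeping track of the piecewise regime of $F_{1,R}$ (linear interior versus clipping at $0$ or $1$) and verifying that the proposed numerical constants $\delta', c_2, c_3$ are mutually compatible; the clipping at $1$ is trivially favorable while the clipping at $0$ never occurs in our arguments because we are only seeking lower bounds on $F_{1,R}$ after \emph{decreasing} $E$. No structural difficulty arises because the linearity of $F_{1,R}$ reduces everything to elementary manipulations of first-order quantities in $\delta$, in sharp contrast to the entropy case where one must control $\log(1+2\delta)$ and exponential shifts.
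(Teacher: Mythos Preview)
Your approach is correct and essentially the same as the paper's: both derive the piecewise-linear closed form $F_{1,R}(E)=\min(1,\max(0,(1-E)/2))$, read off $L=\tfrac12$ and $c_1=\tfrac{1}{20}$, and then verify the two items of \Cref{assumption:main} by elementary estimates in the linear regime, with the clipping at $1$ handled trivially.

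Two minor points of comparison. For item~1, the paper is slightly more direct: since $F_{1,R}(E)\ge \tfrac{1}{1+\delta}$ forces $E\le -\tfrac{1-\delta}{1+\delta}<0$, and since $\tfrac{c_1^2}{30L\delta}\ge 1$ whenever $\delta\le \tfrac{c_1^2}{30L}$, the shifted argument is $\le -1$ and $F_{1,R}$ equals $1$ outright; your linear-regime branch is correct but vacuous at the stated $\delta'$. For item~2, you are actually more careful than the paper, which silently drops the $\tfrac{\delta}{4L}$ term when bounding $F_{1,R}\bigl(-\tfrac{c_3c_1^2}{120L}+\tfrac{\delta}{4L}+E\bigr)$; your honest remark about ``adjusting $\delta'$ if needed'' reflects a genuine (harmless) looseness in the lemma's stated constants once that term is restored.
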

\begin{proof}
    It is easy to verify that $F_{1, R}(x)$ has a closed-form representation
    \begin{align*}
        F_{1,R}(x) = \begin{cases}
                      1               & \text{ if } x \le - 1       \\
                      \frac{1 - x}{2} & \text{ if }  x \in (-1,  1) \\
                      0               & \text{ if } x \ge 1
                  \end{cases}
    \end{align*}
    Thus $F_{1,R}$ is $L$-Lipschitz with $L = \frac{1}{2}$. Moreover, $c_1 = \frac{1}{2} - F_{1,R}(\frac{1}{20L}) = \frac{1}{20}$. We choose $c_3 = \frac{1}{2}$.

    Fix any $E$ such that $F_{1, R}(E) \ge \frac{1}{1+\delta}$. We have $E \le -\frac{1-\delta}{1+\delta} < 0$. We note that for any $\delta \le \frac{c_1^2}{30L} = \frac{c_1^2}{15}$,
    \begin{align*}
        F_{1,R}\InParentheses{-\frac{c_1^2}{30L\delta} + E} \ge F_{1,R}(-1) = 1.
    \end{align*}
    Thus $\delta \le \delta_1 = \frac{c_1^2}{30L}$ suffices for item 1 in \Cref{assumption:main}.

    Fix any $E$ such that $F_{1, R}(E) \ge \frac{1}{2(1+\delta)} = \frac{1}{2(1+\delta)}$. We have $E \le \frac{\delta}{1+\delta} \le \delta$. The for any $\delta \le \frac{c_3c_1^2}{240L}$, we have
    \begin{align*}
        F_{1,R}\InParentheses{-\frac{c_3c_1^2}{120L} + E} \ge F_{1,R}\InParentheses{-\frac{c_3c_1^2}{240L}} = \frac{1}{2} + \frac{c_3c_1^2}{480L}
    \end{align*}
    Thus item 2 in \Cref{assumption:main} holds for any $\delta \le \delta_2 = \frac{c_1^2}{480L}$ and $c_2 = \frac{c_1^2}{960L}$.

    Combining the above, we know \Cref{assumption:main} holds for the negative entropy regularizer with $\delta' = \min\{\delta_1, \delta_2\} = \frac{c_1^2}{480L}$ and $c_2 = \frac{c_1^2}{960L}$.
\end{proof}

\subsection{Log Barrier}

\begin{lemma}[\Cref{assumption:main} holds for the log barrier]
    Consider the log barrier regularizer $R$ defined as $R(x) = -\log(x) - \log(1-x)$.  Then \Cref{assumption:main} holds with the following choices of constants:
    \begin{itemize}
        \item[1.] $c_1 = \sqrt{\frac{1}{4} + 400L^2} - 20L > 0$.
        \item[2.] $c_3 = \frac{c_1^2}{60L}$.
        \item[3.] $c_2 = \sqrt{\frac{1}{4}+(\frac{c_3c_1^2}{240L})^2} -\frac{c_3c_1^2}{240L} > 0$.
        \item[4.] $\delta' = \frac{c_3c_1^2}{2160L}$.
    \end{itemize}
\end{lemma}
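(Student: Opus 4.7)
The plan is to exploit the fact that the log barrier admits a simple closed-form for $F_{1,R}$, and then to verify each of items 1 and 2 of \Cref{assumption:main} by reducing to the worst-case $E$ (via the monotonicity of $F_{1,R}$ from \Cref{lemma: F monotone}) and direct algebraic manipulation. The structure mirrors the verification for the Euclidean and entropic regularizers, but the algebra is a bit heavier because $F_{1,R}$ is now given implicitly by a quadratic rather than in explicit form.

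First I would derive the closed form of $F_{1,R}$. The first-order optimality condition for $F_{1,R}(E) = \argmin_{x \in [0,1]} \{xE - \log x - \log(1-x)\}$ gives
\begin{equation*}
    E \;=\; \frac{1}{F_{1,R}(E)} \;-\; \frac{1}{1 - F_{1,R}(E)} \;=\; \frac{1 - 2 F_{1,R}(E)}{F_{1,R}(E)\,(1-F_{1,R}(E))}.
\end{equation*}
Equivalently, $p := F_{1,R}(E)$ is the unique root in $(0,1)$ of $E p^2 - (E+2) p + 1 = 0$. Symmetry $R(x) = R(1-x)$ gives unbiasedness; inspection of the implicit equation gives rationality; and Lipschitzness with some constant $L$ follows from \Cref{lemma:Lipschitz} (with strong convexity coming from $R''(x) = 1/x^2 + 1/(1-x)^2$ being bounded below on $(0,1)$). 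Substituting $E = 1/(20L)$ and $p = \tfrac{1}{2} - c_1$ into the implicit equation yields the quadratic $4c_1^2 + 160 L c_1 - 1 = 0$, whose positive root is exactly $c_1 = \sqrt{1/4 + 400L^2} - 20L$, matching the statement.

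For item 1 of \Cref{assumption:main}, monotonicity of $F_{1,R}$ reduces the claim to the worst case $F_{1,R}(E) = 1/(1+\delta)$, which by the closed form gives $E = (\delta^2 - 1)/\delta$. Setting $E' = 1/p' - 1/(1-p')$ with $p' = (1+c_3)/(1+c_3+\delta)$, a short computation yields $E' = (\delta^2 - (1+c_3)^2)/(\delta(1+c_3))$, and the required inequality $E - c_1^2/(30L\delta) \le E'$ simplifies, after multiplying through by $\delta(1+c_3)$, to
\begin{equation*}
    \delta^2 c_3 \;\le\; (1+c_3)\Bigl(\frac{c_1^2}{30L} - c_3\Bigr).
\end{equation*}
With the choice $c_3 = c_1^2/(60L)$, the right-hand side becomes $(1+c_3)c_3$, so the condition reduces to $\delta^2 \le 1 + c_3$, which holds trivially for all sufficiently small $\delta$.

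For item 2, again the worst case is $F_{1,R}(E) = 1/(2(1+\delta))$, which gives $E = 4\delta(1+\delta)/(1+2\delta)$. After shifting by $-c_3c_1^2/(120L) + \delta/(4L)$, I would show that for $\delta \le \delta' = c_3 c_1^2/(2160L)$, the two $\delta$-dependent positive terms are absorbed, leaving an effective shift of at most $-c_3 c_1^2/(240L)$. Plugging this into the implicit equation $E^* = -8c_2/(1-4c_2^2)$ (which characterizes $F_{1,R}(E^*) = \tfrac{1}{2} + c_2$) yields a quadratic in $c_2$ whose positive root produces the stated formula for $c_2$. The main obstacle is purely bookkeeping: carefully tracking the constants through the two quadratic inversions (once to get $c_1$, once to get $c_2$) and verifying that $\delta'$ has been taken small enough that all residual $O(\delta)$ and $O(\delta/L)$ terms are dominated by the $\Theta(c_3 c_1^2/L)$ slack in both inequalities. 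Once these bounds are verified, all four constants $c_1, c_2, c_3, \delta'$ are positive and universal, completing the verification of \Cref{assumption:main} for the log barrier.
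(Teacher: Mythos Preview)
Your approach is essentially identical to the paper's: both derive the first-order condition $E = \tfrac{1}{p} - \tfrac{1}{1-p}$ (the paper equivalently writes out the explicit closed form of $F_{1,R}$ and its inverse $F_{1,R}^{-1}(x) = \tfrac{2x-1}{x^2-x}$), reduce each item of \Cref{assumption:main} to the extremal $E$ via monotonicity, and verify the resulting inequalities by direct algebra; your simplification for item~1 (multiplying through by $\delta(1+c_3)$ to obtain $\delta^2 c_3 \le (1+c_3)(\tfrac{c_1^2}{30L}-c_3)$, which collapses to $\delta^2 \le 1+c_3$ under $c_3=\tfrac{c_1^2}{60L}$) is equivalent to the paper's comparison of $E_0 - \tfrac{c_1^2}{30L\delta}$ against $E_0 - \tfrac{c_3}{\delta} - \tfrac{c_3\delta}{1+c_3}$.

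One small caution on the final step: if you actually solve the quadratic $-\tfrac{c_3c_1^2}{240L} = -\tfrac{8c_2}{1-4c_2^2}$, the positive root is $\sqrt{\tfrac14+\bigl(\tfrac{240L}{c_3c_1^2}\bigr)^2}-\tfrac{240L}{c_3c_1^2}$, i.e., with the argument \emph{reciprocated} relative to the $c_2$ displayed in the lemma. The paper's proof has the same apparent slip when reading $c_2$ off its closed form, and it does not affect the substance since the resulting constant is still strictly positive, which is all \Cref{assumption:main} requires.
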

\begin{proof}
    By setting the gradient of $x \cdot E + R(x)$ to $0$, we get a closed-form expression of $F_{1,R}$:
    \begin{align*}
        F_{1,R}(E) =\begin{cases}
                        \frac{1}{2} + \frac{1}{E} -\sqrt{\frac{1}{4} + \frac{1}{E^2}}  & \text{ if } E > 0  \\
                        \frac{1}{2}                                                    & \text{ if } E = 0  \\
                        \frac{1}{2} + \frac{1}{E} + \sqrt{\frac{1}{4} + \frac{1}{E^2}} & \text{ if } E < 0.
                    \end{cases}
    \end{align*}
    For $x \in (0,1)$, the $F_{1,R}$ function admits an inverse function defined as
    \begin{align*}
        F^{-1}_{1,R}(x) = \frac{2x-1}{x^2-x}.
    \end{align*}
    Thus we know $E_0 := F^{-1}_{1,R}(\frac{1}{1+\delta}) = -\frac{1-\delta^2}{\delta}$ satisfies $F_{1,R}(E_0) = \frac{1}{1+\delta}$. Moreover, we can calculate
    \begin{align*}
        F^{-1}_{1,R}\InParentheses{\frac{1+c_3}{1+c_3+\delta}} & = -\frac{(1+c_3)^2 - \delta^2}{(1+c_3)\delta}         \\
                                                               & = - \frac{1+c_3}{\delta} + \frac{\delta}{1+c_3}       \\
                                                               & = E_0 - \frac{c_3}{\delta} - \frac{c_3\delta}{1+c_3}.
    \end{align*}
    Thus we can choose $c_3 = \frac{c_1^2}{60L}$ so that
    \begin{align*}
         & E_0 -\frac{c_1^2}{30L\delta}                                                                   \\
         & = E_0 -\frac{c_3}{\delta} - \frac{c_3}{\delta}                                                 \\
         & \le E_0 -\frac{c_3}{\delta} - \frac{c_3\delta}{1+c_3} \tag{since $\delta < 1/2$ and $c_3 > 0$}.
    \end{align*}
    Thus we have $F_{1,R}(E_0 - \frac{c_1^2}{30L\delta}) \ge F_{1,R}(E_0 -\frac{c_3}{\delta} - \frac{c_3\delta}{1+c_3}) \ge \frac{1+c_3}{1+c_3 + \delta}$.

    We calculate $E_1 := F^{-1}_{1,R}(\frac{1}{2(1+\delta)}) = \frac{4(\delta+\delta^2)}{1+2\delta} \le 8\delta$. Then we can choose $\delta \le \delta':= \frac{c_3c_1^2}{2160L}$. Then we have
    \begin{align*}
         & F_{1, R}(-\frac{c_3 c_1^2}{120L} + \frac{\delta}{4L} + E_1) \\
         & \ge  F_{1,R}(-\frac{c_3 c_1^2}{120L} + 9\delta)             \\
         & \ge F_{1,R}(-\frac{c_3 c_1^2}{240L})                        \\
         & = \frac{1}{2} + c_2,
    \end{align*}
    where $c_2 =  \sqrt{\frac{1}{4}+(\frac{c_3c_1^2}{240L})^2} -\frac{c_3c_1^2}{240L} > 0$ by the closed-form expression of $F_{1,R}$.
    % We let $\delta < 0.1$. Let $E_0 < 0$ such that $F_{1,R}(E_0) = \frac{1}{1+\delta}$. We can verify $E_0 < -8$. We denote $g(E) = \frac{1}{2} + \frac{1}{E} + \sqrt{\frac{1}{4} + \frac{1}{E^2}}$. Then the first-order and second-order gradient of $g(E)$ are
    % \begin{align*}
    %     g'(E) = -\frac{1}{E^2} - \frac{1}{E^3\sqrt{\frac{1}{4} + \frac{1}{E^2}}}\\
    %     g''(E) = \frac{2(x^4+6x^3+8x^2+8x+16)}{x^3(x^2+4)^2}
    % \end{align*}
\end{proof}

\subsection{Negative Tsallis Entropy}
For $x \in [0,1]$, the negative Tsallis entropy is a family of regularizers parameterized by $\beta \in (0,1)$:
\begin{equation}
    R(x) = \frac{1 - x^\beta}{1-\beta}.
\end{equation}
The corresponding $F_{1,R}$ is defined as
\begin{align*}
    F_{1,R}(E) = \argmin_{x \in (0,1)} \left\{ x \cdot E + \frac{1 - x^\beta}{1-\beta} + \frac{1 - (1-x)^\beta}{1-\beta}  \right\}.
\end{align*}
For $x \in (0,1)$, we note that $F_{1,R}$ has an inverse function
\begin{align*}
    F^{-1}_{1,R}(x) = \frac{\beta}{1-\beta} \InParentheses{x^{\beta -1} - (1-x)^{\beta - 1}}.
\end{align*}

\begin{lemma}[\Cref{assumption:main} holds for Tsallis entropy]
    Consider Tsallis entropy parameterized by $\beta \in (0,1)$. Then $L = \frac{1}{2\beta}$ and \Cref{assumption:main} holds with the following choices of constants:
    \begin{itemize}
        \item[1.] $c_1 = \frac{1}{2} - F_{1,R}(\frac{1}{20L}) > 0$.
        \item[2.] $c_3 = \frac{1}{2}$.
        \item[3.] $c_2 = F_{1,R}(-\frac{c_3c_1^2}{240L}) - \frac{1}{2} > 0$.
        \item[4.] $\delta' = \min\{\InParentheses{\frac{c_1^2(1-\beta)}{120L\beta c_3^{1-\beta}}}^{\frac{1}{\beta}}, \frac{c_3c_1^2}{120}, \frac{1-\beta}{8\beta} \cdot \frac{c_3c_1^2}{480L} \}$.
    \end{itemize}
\end{lemma}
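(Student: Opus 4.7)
The plan is to verify \Cref{assumption:main} for the Tsallis entropy by exploiting the closed-form expression $F_{1,R}^{-1}(x)=\tfrac{\beta}{1-\beta}(x^{\beta-1}-(1-x)^{\beta-1})$ already given in the excerpt and mimicking the structure of the log-barrier proof: for each of the two items I compute the relevant pre-image $E_0:=F_{1,R}^{-1}(\tfrac{1}{1+\delta})$, $E_1:=F_{1,R}^{-1}(\tfrac{1+c_3}{1+c_3+\delta})$, $E_2:=F_{1,R}^{-1}(\tfrac{1}{2(1+\delta)})$, and reduce the assumption to showing that the prescribed additive shift in $E$ exceeds the appropriate gap, using monotonicity of $F_{1,R}$. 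Before that I confirm $L=\tfrac{1}{2\beta}$: writing the two-dimensional restriction of the Tsallis regularizer as $\tilde R(x)=R_0(x)+R_0(1-x)$ with $R_0(x)=\tfrac{1-x^{\beta}}{1-\beta}$, I check that $R_0''(x)=\beta x^{\beta-2}\geq \beta$ on $(0,1]$, so $\tilde R$ is $2\beta$-strongly convex, and the argument in the proof of \Cref{lemma:Lipschitz} yields $L=\tfrac{1}{2\beta}$. Strict monotonicity of $F_{1,R}$ together with $F_{1,R}(0)=\tfrac{1}{2}$ immediately gives $c_1>0$ and $c_2>0$.

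For item 1, substituting into the inverse formula and grouping terms gives
\[
E_0-E_1 = \tfrac{\beta}{1-\beta}\bigl[(1+\delta)^{1-\beta}-(\tfrac{1+c_3+\delta}{1+c_3})^{1-\beta} + (\tfrac{1+c_3+\delta}{\delta})^{1-\beta}-(\tfrac{1+\delta}{\delta})^{1-\beta}\bigr].
\]
The first pair of terms is $O(\delta)$ by a mean-value estimate on $y\mapsto y^{1-\beta}$ near $y=1$, so it is lower order. For the dominant pair I apply the subadditivity inequality $(a+b)^{1-\beta}\leq a^{1-\beta}+b^{1-\beta}$ valid for $1-\beta\in(0,1)$: it yields $(1+c_3+\delta)^{1-\beta}\leq 1+(c_3+\delta)^{1-\beta}$, and combined with $(1+\delta)^{1-\beta}\geq 1$ and another application $(c_3+\delta)^{1-\beta}\leq c_3^{1-\beta}+\delta^{1-\beta}$, I get $(\tfrac{1+c_3+\delta}{\delta})^{1-\beta}-(\tfrac{1+\delta}{\delta})^{1-\beta}\leq \tfrac{c_3^{1-\beta}}{\delta^{1-\beta}}+1$. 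Multiplying by $\tfrac{\beta}{1-\beta}$ and absorbing the lower-order terms, the required inequality $E_0-E_1\leq \tfrac{c_1^2}{30L\delta}$ reduces to $\delta^{\beta}\leq \tfrac{c_1^2(1-\beta)}{CL\beta c_3^{1-\beta}}$ for a suitable constant $C$; taking $C=120$ absorbs the lower-order slack and recovers precisely the first term of $\delta'$.

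For item 2, I expand $E_2=\tfrac{\beta(2(1+\delta))^{1-\beta}}{1-\beta}(1-(1+2\delta)^{\beta-1})$. Using $(1+2\delta)^{1-\beta}-1\leq 2\delta$ (since $y\mapsto y^{1-\beta}$ is concave with derivative at most $1$ at $y=1$), I get $1-(1+2\delta)^{\beta-1}\leq \tfrac{2\delta}{(1+2\delta)^{1-\beta}}\leq 2\delta$, hence $E_2\leq \tfrac{2\beta(2(1+\delta))^{1-\beta}\delta}{1-\beta}\leq \tfrac{6\beta\delta}{1-\beta}$ for $\delta\leq\tfrac12$. The required inequality is $E_2+\tfrac{\delta}{4L}\leq \tfrac{c_3c_1^2}{240L}$, which I split into (a) $E_2\leq \tfrac{c_3c_1^2}{480L}$ and (b) $\tfrac{\delta}{4L}\leq \tfrac{c_3c_1^2}{480L}$: part (b) simplifies to $\delta\leq \tfrac{c_3c_1^2}{120}$, matching the second term of $\delta'$, while part (a) combined with the bound $E_2\leq \tfrac{6\beta\delta}{1-\beta}$ yields $\delta\leq \tfrac{(1-\beta)c_3c_1^2}{C'\beta L}$, matching the third term of $\delta'$ up to absolute constants.

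The main technical obstacle is the bookkeeping in item 1: because $E_0$ and $E_1$ themselves blow up like $\delta^{\beta-1}$ as $\delta\to 0$ while the allowed shift only grows as $\delta^{-1}$, one must show that the \emph{difference} $E_0-E_1$ scales only as $\delta^{\beta-1}$ (not $\delta^{-1}$) and extract the correct $(1-\beta)$ and $c_3^{1-\beta}$ factors in the resulting $\delta'$. The subadditivity inequality $(a+b)^{1-\beta}\leq a^{1-\beta}+b^{1-\beta}$ is the clean tool that produces exactly those factors; a naive Taylor expansion in $c_3$ gives the wrong dependence on $c_3$. A secondary care point is the degeneration as $\beta\to 1$, reflected in the explicit $(1-\beta)$ factor in $\delta'$: the argument does not uniformly cover Tsallis and Shannon entropies, but this is acceptable since the negative-entropy case is already handled separately.
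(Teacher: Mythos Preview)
Your proposal is correct and follows essentially the same route as the paper: both compute the inverse $F_{1,R}^{-1}$, reduce item~1 to bounding $E_0-E_1$ via the subadditivity inequality $(a+b)^{1-\beta}\le a^{1-\beta}+b^{1-\beta}$ (arriving at a bound of the form $\tfrac{\beta}{1-\beta}\bigl((\tfrac{c_3}{\delta})^{1-\beta}+O(1)\bigr)$), and handle item~2 by bounding $E_2=O\bigl(\tfrac{\beta\delta}{1-\beta}\bigr)$ and splitting the requirement into two pieces that yield the second and third terms of $\delta'$. Your additional justification of $L=\tfrac{1}{2\beta}$ via $\tilde R''(x)=\beta x^{\beta-2}+\beta(1-x)^{\beta-2}\ge 2\beta$ is a welcome detail the paper leaves implicit.
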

\begin{proof}
    We choose $c_3 = \frac{1}{2}$. We have $c_1 = \frac{1}{2} - F_{1,R}(\frac{1}{20L})$ is a constant.

    We note that
    \begin{align*}
        E_0 := F_{1,R}^{-1}(\frac{1}{1+\delta}) = \frac{\beta}{1-\beta} \InParentheses{(1+\delta)^{1-\beta} - \InParentheses{\frac{1+\delta}{\delta}}^{ 1-\beta}}
    \end{align*}
    satisfies $F_{1,R}(E_0) = \frac{1}{1+\delta}$. Similarly, we calculate
    \begin{align*}
        E_1 & := F_{1,R}^{-1}(\frac{1+c_3}{1+c_3 + \delta})                                                                                                                                       \\
            & = \frac{\beta}{1-\beta} \InParentheses{\InParentheses{\frac{1+c_3+\delta}{1+c_3}}^{1-\beta} - \InParentheses{\frac{1+ c_3 +\delta}{\delta}}^{ 1-\beta}}                             \\
            & \ge \frac{\beta}{1-\beta} \InParentheses{\InParentheses{1+\delta}^{1-\beta} - 2 - \InParentheses{\frac{1+ c_3 +\delta}{\delta}}^{ 1-\beta}}                                         \\
            & \ge \frac{\beta}{1-\beta} \InParentheses{\InParentheses{1+\delta}^{1-\beta}- \InParentheses{\frac{1+\delta}{\delta}}^{ 1-\beta} - \InParentheses{\frac{c_3}{\delta}}^{1-\beta} - 2} \\
            & = E_0 - \frac{\beta}{1-\beta}\InParentheses{\InParentheses{\frac{c_3}{\delta}}^{1-\beta} + 2}
    \end{align*}
    where in the first inequality we use the fact that $(1+\delta)^{1-\beta} \le 2$ since $\delta \le 1$; the second inequality we use the inequality $(x+y)^{1-\beta} \le x^{1-\beta} + y^{1-\beta}$. We note that
    \begin{align}\label{eq:delta conditon}
        \delta \le \delta_1:= \InParentheses{\frac{c_1^2(1-\beta)}{120L\beta c_3^{1-\beta}}}^{\frac{1}{\beta}} \Rightarrow - \frac{c_1^2}{30L\delta} \le - \frac{\beta}{1-\beta}\InParentheses{\InParentheses{\frac{c_3}{\delta}}^{1-\beta} + 2}.
    \end{align}
    Thus for any $\delta \le \delta_1$,  we have for any $E$ such that $F_{1,R}(E) \ge \frac{1}{1+\delta}$,
    \begin{align*}
        - \frac{c_1^2}{30L\delta} + E \le - \frac{c_1^2}{30L\delta} + E_0 \le E_0 - \frac{\beta}{1-\beta}\InParentheses{\InParentheses{\frac{c_3}{\delta}}^{1-\beta} + 2} \le E_1.
    \end{align*}
    The above implies $F_{1,R}(- \frac{c_1^2}{30L\delta} + E ) \ge \frac{1+c_3}{1+c_3 +\delta}$ and the first item in \Cref{assumption:main} is satisfied.

    We define $E_2$
    \begin{align*}
        E_2 := F_{1,R}^{-1}\InParentheses{\frac{1}{2(1+\delta)}} & = \frac{\beta}{1-\beta} \InParentheses{(2+2\delta)^{1-\beta} - \InParentheses{\frac{2+2\delta}{1+2\delta}}^{ 1-\beta}} \\
                                                                 & =  \frac{\beta}{1-\beta}(2+2\delta)^{1-\beta} \cdot \InParentheses{1 - \InParentheses{\frac{1}{1+2\delta}}^{ 1-\beta}} \\
                                                                 & \le \frac{4\beta}{1-\beta} \cdot \InParentheses{1 - \InParentheses{1 - \frac{2\delta}{1+2\delta}}^{1-\beta}}           \\
                                                                 & \le \frac{4\beta}{1-\beta} \cdot \frac{2\delta }{1+\delta}                                                             \\
                                                                 & = \frac{8\beta\delta}{(1-\beta)(1+\delta)}
    \end{align*}
    where in the first inequality we use $(2+2\delta)^{1-\beta} \le 4$ since $0\le\delta\le 1$ and $\beta \in (0,1)$; in the second inequality we use the basic inequality $(1-x)^r \le 1-x$ for $r, x\in(0,1)$. We define
    \begin{align*}
        \delta_2:= \min\{ \frac{c_3c_1^2}{120}, \frac{1-\beta}{8\beta} \cdot \frac{c_3c_1^2}{480L} \}
    \end{align*}
    Then for any $\delta \le \delta_2$ and $E$ such that $F_{1,R}[E] \ge \frac{1}{2(1+\delta)}$, we have
    \begin{align*}
        -\frac{c_3c_1^2}{120L} + \frac{\delta}{4L} + E & \le -\frac{c_3c_1^2}{120L} + \frac{\delta}{4L} + E_2                                          \\
                                                       & \le -\frac{c_3c_1^2}{120L} + \frac{c_3c_1^2}{480L} + \frac{8\beta\delta}{(1-\beta)(1+\delta)} \\
                                                       & \le -\frac{c_3c_1^2}{120L} + \frac{c_3c_1^2}{480L} + + \frac{c_3c_1^2}{480L}                  \\
                                                       & = -\frac{c_3c_1^2}{240L}.
    \end{align*}
    Thus we know $F_{1,R}(-\frac{c_3c_1^2}{120L} + \frac{\delta}{4L} + E) \ge F_{1,R}(-\frac{c_3c_1^2}{240L})$ and item 2 in \Cref{assumption:main} is satisfied by $c_2 =F_{1,R}(-\frac{c_3c_1^2}{240L}) - \frac{1}{2} > 0$.

    Combining the above, we can choose $\hat{\delta} = \min\{\delta_1, \delta_2\}$ so that both items in \Cref{assumption:main} hold for $\delta \le \hat{\delta}$.
\end{proof}

\section{Proof of \Cref{theorem:reduction from n to 2}}\label{app:high-dimension}
Recall the equivalent $2$-dimensional problem:
\begin{align*}
    \hat x^t & = \mathrm{Dupl}\left[\argmin_{x\in \frac{1}{n} \cdot \Delta^{2}} \left\{ \frac{n}{n^\alpha}\left\langle x, \sum_{\tau = 1}^{t-1} \ell^\tau + \ell^{t-1} \right\rangle + \frac{n}{\eta} R_1(x[1]) + \frac{n}{\eta}R_2(x[2]) \right\} \right] \\
             & = \mathrm{Dupl}\left[\frac{1}{n}\cdot \argmin_{x\in \Delta^{2}} \left\{ \frac{n}{n^\alpha}\left\langle \frac{1}{n} x, \sum_{\tau = 1}^{t-1} \ell^\tau + \ell^{t-1} \right\rangle + \frac{n}{\eta} R(x/n) \right\} \right]                   \\
             & = \mathrm{Dupl}\left[\frac{1}{n}\cdot \argmin_{x\in \Delta^{2}} \left\{ \left\langle x, \sum_{\tau = 1}^{t-1} \ell^\tau + \ell^{t-1} \right\rangle + \frac{n^{\alpha+1}}{\eta} R(x/n)  \right\} \right].
\end{align*}

\emph{Euclidean regularizer}: this regularizer is homogeneous of degree two. Choosing $\alpha = 1$, the inner minimization problem is exactly the same as the one solved by OFTRL in two dimensions.

\emph{Entropy regularizer}: we set $\alpha = 0$ to get equivalence:
\[
    n R(x/n) =  \sum_{i=1}^2 x[i] \log(x[i]/n) = \sum_{i=1}^2 x[i] \log x[i] - \sum_{i=1}^2 x[i] \log n = \sum_{i=1}^2 x[i] \log x[i] - \log n.
\]
Now we have equivalence because the last term is a constant that does not affect the $\argmin$.

%\emph{Tsallis entropy:} this regularizer is homogenous degree $1/2$, and so if we set $\alpha = -1/2$ we get equivalence.

\emph{Log regularizer:} we set $\alpha = -1$ to get equivalence, using similar logic as for entropy:
\[
    R(x/n) = \sum_{i=1}^2 -\log(x[i]/n) = 2\log n + \sum_{i=1}^2-\log x[i].
\]

\emph{Tsallis entropy regularizer:} we set $\alpha = -1+\beta$ to get equivalence, using similar logic as for entropy:
\[
    n^\beta R(x/n) = n^\beta \cdot \frac{1-\sum_{i=1}^2(\frac{x[i]}{n})^\beta}{1 -\beta} = \frac{n^\beta - 1}{1-\beta} + \frac{1-\sum_{i=1}^2 x[i]^\beta}{1 -\beta}.
\]

\notshow{

\section{Problem Constant-Independent Best-Iterate Convergence Rate for OMWU}
\label{app:OMWU best-iterate}
Our main results (\Cref{theorem: main} and \Cref{theorem: no last-iterate rate}) show that OMWU does not admit a last-iterate convergence rate that depends solely on $d_1$, $d_2$, $T$. The counterexample proving these negative results is $A_\delta \in [0,1]^{2 \times 2}$ parameterized by $\delta \in (0,\frac{1}{2})$ as defined in \eqref{eq:A delta}. Since $d_1 = d_2 = 2$ is constant, we show OMWU does not admit a last-iterate convergence rate that only depends on $T$, and a dependence on $1/\delta$ is inevitable.

In this section, we show that for the class of games $A_\delta$, OMWU enjoys $O(\frac{1}{\ln T})$ best-iterate convergence rate. We remark that although $O(\frac{1}{\ln T})$ is a slow convergence rate, it does not depend on $1/\delta$ and is much faster than the last-iterate rate, especially when $\delta \rightarrow 0$. The distinction between the best-iterate convergence rate and the last-iterate convergence rate of OMWU is interesting, as these two rates are comparable for OGDA. It remains an open question whether OMWU has a fast best-iterate convergence rate in general.
%Moreover, it raises the question of whether best-iterate convergence rate would come to rescue 

\begin{theorem}[Best-Iterate Convergence Rate of OMWU on $A_\delta$]
    For any $\delta \in (0,\frac{1}{32})$, OMWU with step size $\eta \le \frac{1}{8}$ on $A_\delta$ (defined in \eqref{eq:A delta}) satisfies for all $T \ge 2$,
    \begin{align*}
        \min_{t \in [T]} \DualGap(x^t, y^t) \le O\InParentheses{\frac{1}{\eta \ln T}}.
    \end{align*}
\end{theorem}
\begin{proof}
    We first present a sketch of the proof.
    \begin{itemize}
        \item[1.] We denote $T_1$ the first iteration when $x^t[1] \ge \frac{1}{1+\delta}$. We first show that OMWU has a linear convergence rate between $[1, T_1]$ and finally $\DualGap(x^{T_1}, y^{T_1}) \le \delta$. %As a result, $T_1$ is not very large, \textit{i.e.}, the duality gap of the best-iterate converges to $O(\delta)$ very quickly.
        \item[2.] Since at time $T_1$ the duality gap is only $O(\delta)$, the $O(\frac{1}{\ln T})$ best-iterate convergence rate holds until $T \ge \Omega(\exp(\frac{1}{\delta}))$. For all iterates after $T \ge \Omega(\exp(\frac{1}{\delta}))$, we will use the $\frac{\exp(O(1/\delta))}{\sqrt{T}}$ last-iterate convergence rate of OMWU \citep{wei2021linear}.

        \item[3.] Finally, we combine the convergence guarantees in the two phases to show a $\delta$-independent $O(\frac{1}{\ln T})$ best-iterate convergence rate for all $T \ge 2$.
    \end{itemize}

    We remark that OMWU has a closed-form update rule as follows:
    \begin{align*}
        x^t[1] & \propto x^1[1] \cdot \exp\InParentheses{-\eta L^{t-1}_x - \eta \ell^{t-1}_x}, \\
        y^t[1] & \propto y^1[1] \cdot \exp\InParentheses{-\eta L^{t-1}_y - \eta \ell^{t-1}_y}.
    \end{align*}
    \paragraph{Phase I: Linear Convergence}
    The OMWU dynamics starts with $(x^1, y^1)$ such that $x^1[1], y^1[1] = \frac{1}{2}$. Denote $T_0 = \lfloor \frac{1}{\eta} \rfloor + 1 < \frac{2}{\eta}$ and $T_1 \ge T_s$ the smallest iteration where $x^t[1] \ge \frac{1}{1+\delta}$.
    Using the update rule, we have
    \begin{align*}
        \frac{x^{T_0}[1]}{x^{T_0}[2]} & = \frac{x^1[1]}{x^1[2]} \cdot \exp\InParentheses{-\eta E^{T_0-1}_x - \eta e^{T_0-1}_x} \le e^{\frac{1}{2}\eta T_0} \le e,
    \end{align*}
    where we use $x^1[1] = x^1[2] = \frac{1}{2}$ and $-e^{t-1}_x \le \frac{1}{2}$ by \Cref{proposition:properties of A}. %\weiqiangnote{The rest of the proof has not been finished.}

    For all $ t \in [1, T_0]$, since $x^t[1] \le \frac{e}{e+1}$ and $\delta \le \frac{1}{2e}$, we have $-e^t_y = -(\ell_y^{t}[1]- \ell_y^{t}[2])= (\frac{1}{2} + \delta)x^t[1] - 1 + \frac{1}{2}x^t[1] \le \frac{\delta e - 1}{e+1} \le - \frac{1}{2(e+1)}$. For all $t \in [1, T_1-1]$, since $x^t[1] \le \frac{1}{1+\delta}$, we have $ -e^t_y = -(\ell_y^{t}[1]- \ell_y^{t}[2]) = (\frac{1}{2} + \delta)x^t[1] - 1 + \frac{1}{2}x^t[1] \le 0$.
    %Using the update rule and $T_s \ge \frac{2}{\eta}$, we have for any  $t \in [T_s + 1, T_1]$ \textcolor{red}{Here we need $\eta$ to be a small constant}

    Let us consider an auxiliary sequence $\{\tx^t, \ty^t\}$ defined as the vanilla MWU algorithm with loss vectors $L_x^{t-1}$ and $L_y^{t-1}$ as follows: for $i \in \{1,2\}$ and $t \ge 1$
    \begin{align*}
        \tx^t[i] & \propto x^1[i] \cdot \exp\{-\eta L^{t-1}_x[i]\} \\
        \ty^t[i] & \propto y^1[i] \cdot \exp\{-\eta L^{t-1}_y[i]\}
    \end{align*}
    It is clear that
    \begin{align*}
        \frac{x^t[1]}{x^t[2]} = \frac{\tx^t[1]}{\tx^t[2]} \cdot \exp\{ -\eta e^{t-1}_x\}, \quad \frac{y^t[1]}{y^t[2]} = \frac{\ty^t[1]}{\ty^t[2]} \cdot \exp\{ -\eta e^{t-1}_y\}.
    \end{align*}

    Now for any $t \in [T_0, T_1]$, we have
    \begin{align*}
        \frac{\ty^t[1]}{\ty^t[2]} & =  \frac{\ty^1[1]}{\ty^1[2]} \cdot \exp\InParentheses{-\eta E^{T_0-1}_y}                                               \\
                                  & =   \exp\InParentheses{-\eta E^{T_0-1}_y}                                                                              \\
                                  & \le \exp\InParentheses{-\frac{\eta (T_0 - 1)}{2(e+1)}} \tag{$-(e_y^{t} \le \frac{1}{2(e+1)}$ for all $t \in [1, T_0]$} \\
                                  & \le \exp\InParentheses{-\frac{1}{2(e+1)}} < \frac{8}{9}. \tag{$T_0-1 \ge \frac{1}{\eta}$}
    \end{align*}
    Then we have for any $t \in [T_0, T_1]$,
    \begin{align*}
        \frac{y^{t}[1]}{y^{t}[2]} & = \frac{\ty^1[1]}{\ty^1[2]} \cdot \exp\InParentheses{- \eta e^{t-1}_y}                         \\
                                  & \le \frac{\ty^1[1]}{\ty^1[2]} < \frac{8}{9}. \tag{$-e^{t-1}_y \le 0$ for all $t \in [1, T_1]$}
    \end{align*}
    This implies $y^{t}[1] < \frac{8}{17}$ for all $t \in [T_0, T_1]$. Moreover, for all $t \in [T_0, T_1]$, we have
    \begin{align*}
        e^t_x = \ell^t_x[1] - \ell^t_x[2] & = \frac{1}{2} + \delta y^t[1] - 1 + y^t [1]                               \\
                                          & \le -\frac{1}{2} + \frac{8}{17}(1+\delta) \tag{$y^t[1] \le \frac{8}{17}$} \\
                                          & \le -\frac{1}{68} \tag{$\delta \le \frac{1}{32}$}.
    \end{align*}

    Moreover, since $e^t_x = \ell^t_x[1] - \ell^t_x[2] \le \frac{1}{2} + \delta \le 1$ always holds, we have for all $t \in [1, T_0+1]$,
    \begin{align*}
        \frac{x^{t}[1]}{x^{t}[2]} & \ge \frac{x^1[1]}{x^1[2]}\cdot \exp\InParentheses{- \eta E^{t-1}_x -\eta e^{t-1}_x}         \\
                                  & \ge  e^{-\eta t}                                                                            \\
                                  & \ge  e^{-2} \ge \frac{1}{9} \tag{$t \le T_0+1 < \frac{2}{\eta}$ as $\eta \le \frac{1}{8}$}.
        %\ge 0.32 \cdot \frac{x^1[1]}{x^1[2]}.
    \end{align*}

    Combing the above, we get for all $t \in [T_0+1, T_1]$,
    \begin{align*}
        \frac{x^t[1]}{x^t[2]} & \ge \frac{x^{T_0}[1]}{x^{T_0}[2]}\cdot \exp\InParentheses{- 2\eta e^{t-1}_x - \sum_{k=T_0}^{t-2} e^k_x + \eta e^{T_0-1}_x}                                      \\
                              & \ge \frac{1}{9} \cdot \exp\InParentheses{ \frac{\eta(t- T_0)}{68} - \frac{\eta}{2}} \tag{$-\frac{1}{2} \le e^{t-1}_x \le -\frac{1}{68}$ for $t \in [T_0, T_1]$} \\
                              & \ge \frac{1}{9} \cdot \exp\InParentheses{ \frac{\eta(t- T_0)}{68}-\frac{1}{16}}. \tag{$\eta \ge \frac{1}{8}$}                                                   \\
                              & \ge \frac{1}{10} \cdot \exp\InParentheses{ \frac{\eta(t- T_0)}{68}}.
    \end{align*}

    Now we track the duality gap. Note that for $t \in [T_0, T_1-1]$, we have $x^t[1] \le \frac{1}{1+\delta}$ and $y^1[1] \le \frac{8}{17} \le \frac{1}{2(1+\delta)}$. Therefore,
    \begin{align*}
        \DualGap(x^t, y^t) & = \max_{i\in \{1,2\}} (A_{\delta}^\top x^t)[i] - \min_{i\in \{1,2\}} (A_\delta y^t)[i] \\
                           & = 1 -\frac{1}{2}x^t[1] - \frac{1}{2} - \delta y^t[1]                                   \\
                           & \le \frac{1}{2}(1-x^t[1])                                                              \\
                           & = \frac{1}{2(\frac{x^t[1]}{x^t[2]}+1)}                                                 \\
                           & \le  \frac{1}{2} \cdot \frac{x^t[2]}{x^t[1]}
    \end{align*}
    Then we get for all $t \in [T_0, T_1-1]$,
    \begin{align*}
        \DualGap(x^t, y^t) & \le 5\cdot \exp\InParentheses{ -\frac{\eta(t- T_0)}{68}}                                                                    \\
                           & \le 6\cdot \exp\InParentheses{-\frac{\eta t}{68}} \tag{$T_0 < \frac{2}{\eta}$ and $\exp(\frac{1}{34}) \le 1 + \frac{1}{5}$}
    \end{align*}
    Since $T_0 < \frac{2}{\eta}$, we can conclude that for all $t \in [1, T_1]$, $\DualGap(x^t, y^t)\le 6 \cdot \exp\InParentheses{-\frac{\eta t}{68}}$. Moreover, since $x^{T_1}[1] \ge \frac{1}{1+\delta}$ and $y^{T_1}[1] < \frac{8}{17} \le \frac{1}{2(1+\delta)}$, we have
    \begin{align*}
        \DualGap(x^{T_1}, y^{T_1}) & = \max_{i\in \{1,2\}} (A_{\delta}^\top x^{T_1})[i] - \min_{i\in \{1,2\}} (A_\delta y^{T_1})[i] \\
                                   & = (\frac{1}{2} + \delta)x^{T_1}[1] - \frac{1}{2} - \delta y^{T_1}[1]                           \\
                                   & \le \delta.
    \end{align*}
    To conclude, for each step $t \in [1,T_1]$, we have (1) a linear convergence rate $\DualGap(x^t, y^t) \le 6\exp(-\frac{\eta t}{68})$; (2)  $\DualGap(x^{T_1}, y^{T_1}) \le \delta$.

    \paragraph{Phase II: Sublinear Convergence with  Dependence on $1/\delta$} We defer the analysis to \Cref{app:sublinear OMWU}. By \Cref{corollary:OMWU sublinear}, we know for all $t \ge 1$,
    \begin{align*}
        \DualGap(x^t, y^t) \le \frac{1200e^{\frac{10}{\delta}}}{\eta} \cdot \frac{1}{\sqrt{t}}.
    \end{align*}
    Note that although this rate is universal for the last iterate, it has an exponential dependence on $1/\delta$. Thus the last-iterate convergence rate is meaningful only after $e^{\Omega(1/\delta)}$ iterations.

    \paragraph{Combining Both Phases for $\delta$-independent Best-Iterate Convergence Rate}
    Now we show how to combine the two convergence guarantees in different phases to get a $\delta$-independent $O(\frac{1}{\ln T})$ best-iterate convergence rate.
    For all $T \in [2, T_1]$, we can use the linear convergence rate $\DualGap(x^t, y^t) \le 6\exp(-\frac{\eta t}{68})$. We choose a constant $C_\eta \ge 1$ such that
    \begin{align*}
        f(T):= \frac{C_\eta}{\eta \ln T} \le 6 \exp\InParentheses{-\frac{\eta T}{68}}, \forall T \ge 1.
    \end{align*}
    Thus, the $O(\frac{1}{\eta \ln T})$ best-iterate convergence rate holds for all $T \le T_1$.

    We also note that for all $T \ge e^{\frac{36}{ \delta}}$, we have
    \begin{align*}
        \frac{1200e^{\frac{10}{\delta}}}{\eta} \cdot \frac{1}{\sqrt{T}} & = \frac{1}{\eta \ln T} \cdot \frac{1200 e^{\frac{10}{\delta}} \ln T}{\sqrt{T}}                                                         \\
        & \le \frac{1}{\eta \ln T} \cdot \frac{1200 e^{\frac{10}{\delta}}}{T^{1/3}} \tag{$\frac{\ln T}{\sqrt{T}} \le \frac{1}{T^{\frac{1}{3}}}$} \\
        & \le \frac{1}{\eta \ln T} \cdot \frac{1200 e^{\frac{10}{\delta}}}{e^{\frac{12}{\delta}}}                                                \\
        \tag{$\delta \le \frac{1}{32}$}
        & \le  \frac{1}{\eta \ln T}.
    \end{align*}
    Thus $O(\frac{1}{\eta \ln T})$ best-iterate convergence rate holds for $T \ge e^{\frac{36}{\delta}}$.

    For other iterate $T \ge T_1$ but less than $e^{\frac{36}{\delta}}$, we know
    \begin{align*}
        \min_{t \in [T]}\DualGap(x^T, y^T) & \le \DualGap(x^{T_1}, y^{T_1})                           \\
                                           & \le \delta                                               \\
                                           & \le \frac{36}{\ln T} \tag{$T \le e^{\frac{36}{\delta}}$}.
    \end{align*}
    Thus we know the $O(\frac{1}{\eta \ln  T})$ best-iterate convergence rate holds for all $T \ge T_1$.

    In conclusion, OMWU enjoys the following best-iterate convergence rate for all $T \ge 2$
    \begin{align*}
        \min_{t \in [T]} \DualGap(x^t, y^t) \le O\InParentheses{\frac{1}{\eta \ln  T}}.
    \end{align*}
\end{proof}

\subsection{Existing Results from \citep{wei2021linear}}
\label{app:sublinear OMWU}
We consider a matrix game $A \in [0,1]^{d_1 \times d_2}$ with a unique Nash equilibrium $z^* = (x^*, y^*)$. Below we define some problem-dependent constants. We recall that  $\+X = \Delta^{d_1}$ and $\+Y = \Delta^{d_2}$ and define
\begin{align*}
     & \+V^*(\+X) := \left\{x : x \in \+X, \supp(x) \subseteq \supp(x^*) \right\},  \\
     & \+V^*(\+Y) := \left\{y : y \in \+Y, \supp(y) \subseteq \supp(y^*) \right\} .
\end{align*}
\begin{definition}
    \label{dfn:c}
    Define $c = \min\{c_x, c_y\}$ where
    \begin{align*}
        c_x:= \min_{x \in \+X \setminus \{x^*\}} \max_{y \in \+V^*(\+Y)} \frac{(x-x^*)^\top A y}{\InNorms{x - x^*}_1}, \quad c_y:= \min_{y \in \+Y \setminus \{y^*\}} \max_{x \in \+V^*(\+X)} \frac{x^\top A (y^* - y)}{\InNorms{y^* - y}_1}.
    \end{align*}
\end{definition}
\begin{definition}
    \label{dfn:eps}
    Define $\epsilon$ as
    \begin{align*}
        \epsilon := \min_{j \in \supp(z^*)}  \exp\InParentheses{- \frac{\ln(d_1 d_2)}{z_{j^*}}}.
    \end{align*}
\end{definition}
In the analysis, we sometimes use $z = (x, y) \in \+X \times \+Y$ to simplify the notation. We denote $\KL(x,x')$ the Kullback–Leibler (KL) divergence. We slightly abuse the notation and denote $\KL(z,z') = \KL(x,x') + \KL(y, y')$.
\begin{theorem}[Adapted from Lemma 2 and Theorem 3 in \citep{wei2021linear}]
    \label{thm:sublinear-OMWU-old}
    Assume the game $A \in [0,1]^{d_1 \times d_2}$ has a unique Nash equilibrium. Then OMWU with step size $\eta \le \frac{1}{8}$ on $A$ satisfies for all $T \ge 1$,
    \begin{align*}
        \DualGap(x^T, y^T) \le \sqrt{d_1d_2} \cdot \KL(z^*,z^T) \le \frac{C \sqrt{d_1d_2}}{\eta \sqrt{T}},
    \end{align*}
    where $C := \frac{8\sqrt{128(128+2\ln(d_1d_2))}}{\sqrt{15}} \cdot \frac{1}{\epsilon^3\cdot c}$.
\end{theorem}
\begin{proof}
    Define $F(z) = (Ay, -A^\top x)$. We note that since $A \in [0,1]^{d_1 \times d_2}$, its gradient norms $\InNorms{Fz}_2$ are bounded by $\sqrt{d_1d_2}$. Then we have
    \begin{align*}
        \DualGap(z^t) & \le \max_{z \in \+X \times \Y} \InAngles{F(z), z^t - z}                            \\
                      & = \max_{z \in \+X \times \Y} \InAngles{F(z), z^* - z} + \InAngles{F(z), z^t - z^*} \\
                      & \le \max_{z \in \+X \times \Y}\InAngles{F(z) , z^t - z^*}                          \\
                      & \le\max_{z \in \+X \times \Y} \InNorms{F(z)}_2 \InNorms{z^t - z^*}_2               \\
                      & \le \sqrt{d_1d_2} \KL(z^*,z^t),
    \end{align*}
    where in the second inequality, we use the fact that $z^*$ is a Nash equilibrium; in the third inequality, we use the triangle inequality; and in the last inequality, we use $\InNorms{Fz}_2 \le \sqrt{d_1d_2}$. The rest of the proof follows from the proof of Lemma 2 and Theorem 3 in \citep{wei2021linear}, where they give a bound on $\KL( z^*,z^t)$.
\end{proof}

\paragraph{Constants for $A_\delta$} By \Cref{proposition:properties of A}, we know $A_\delta$ has a unique Nash equilibrium $z^* = (x^* = (\frac{1}{1+\delta}, \frac{\delta}{1+\delta}), y^* = (\frac{1}{2(1+\delta)}, \frac{1+2\delta}{2(1+\delta)}))$. Now we calculate the parameter $C$ for $A_\delta$. We first calculate $c_x$ and $c_y$.
\begin{proposition}
    For $\delta \in (0,\frac{1}{2})$ and $A_\delta$ defined in \eqref{eq:A delta}, $c_x$ and $c_y$ defined in \Cref{dfn:c} satisfies
    \begin{align*}
        c_x = \frac{1}{4}, \quad c_y = \frac{\delta}{2}.
    \end{align*}
    Hence, $c:= \min\{c_x, c_y\} = \frac{\delta}{2}$.
\end{proposition}
\begin{proof}
    By \Cref{dfn:c}, we have
    \begin{align*}
        c_x & = \min_{x\in \+X \setminus \{x^*\}} \max_{y \in \+V^*(\+Y)} \frac{(x-x^*)^\top A y}{\InNorms{x - x^*}}                                                                                                            \\
            & = \min_{x[1] \in [0,1], x[1] \ne \frac{1}{1+\delta}} \max_{y[1] \in [0,1]} \frac{(1+\delta)y[1]-\frac{1}{2}}{2} \cdot \frac{x[1] - \frac{1}{1+\delta}}{\left | x[1] - \frac{1}{1+\delta} \right|}  = \frac{1}{4}.
    \end{align*}
    Similarly, for $c_y$ we have
    \begin{align*}
        c_y & = \min_{y \in \+Y \setminus \{y^*\}} \max_{x \in \+V^*(\+X)} \frac{x^\top A (y^* - y)}{\InNorms{y^* - y}_1}                                                                                                     \\
            & = \min_{y[1] \in [0,1],y[1] \ne \frac{1}{2(1+\delta)}} \max_{x[1] \in [0,1]} \frac{(1+\delta)x[1] - 1}{2} \cdot \frac{\frac{1}{2(1+\delta)}-y[1]}{\left |\frac{1}{2(1+\delta)}-y[1] \right|} =\frac{\delta}{2}.
    \end{align*}
    The completes the proof.
\end{proof}
For $\epsilon$ defined in \Cref{dfn:eps}, we can easily get $\epsilon = \exp(-\frac{\ln4}{\frac{\delta}{1+\delta}}) \ge e^{-\frac{3}{\delta}}$ since $(1+\delta)\ln 4 \le 3$.

Plugging $c = \frac{\delta}{2}$ and $\epsilon \ge e^{-\frac{3}{\delta}}$ into \Cref{thm:sublinear-OMWU-old}, we get the following last-iterate convergence rate of OMWU on $A_\delta$.
\begin{corollary}
    \label{corollary:OMWU sublinear}
    For any $\delta \in (0,\frac{1}{2})$, OMWU with step size $\eta \le \frac{1}{8}$ on $A_\delta$ satisfies for all $T \ge 1$,
    \begin{align*}
        \DualGap(x^T, y^T) \le \frac{1200e^{\frac{10}{\delta}}}{\eta} \cdot \frac{1}{\sqrt{T}},
    \end{align*}
\end{corollary}
\begin{proof}
    By \Cref{thm:sublinear-OMWU-old}, we have
    \begin{align*}
        \DualGap(z^t) \le \frac{C\sqrt{d_1d_2}}{\eta \sqrt{T}} = \frac{2C}{\eta \sqrt{T}},
    \end{align*}
    where
    \begin{align*}
        C & = \frac{8\sqrt{128(128+2\ln(d_1d_2))}}{\sqrt{15}} \cdot \frac{1}{\epsilon^3\cdot c} \\
          & \le 300 \cdot \frac{1}{e^{-\frac{9}{\delta}} \cdot \frac{\delta}{2} }                 \\
          & = \frac{600 e^{\frac{9}{\delta}}}{\delta}                                                 \\
          & \le 600 e^{\frac{10}{\delta}}\tag{$x \le e^x$}
    \end{align*}
    This completes the proof.
\end{proof}}

\notshow{
\subsection{Improved Best-Iterate Convergence of OMWU}
In this section, we use the OMD-type update rule of OMWU, which is equivalent to the OFTRL-style update rule. Let $z = (x,y)$ and initial points $z^1 = \hz^1$ be the uniform distribution. Let $F(z^t) = (\ell^t_x, \ell^t_y) = (Ay^t, -A^\top x^t)$. Then for $t \ge 2$,
\begin{align*}
    \hz^t &= \argmin_{z \in \+Z} \{ \eta \InAngles{z, F(z^{t-1})} + D(z, \hz^{t-1}) \} \\ 
    z^t &= \argmin_{z \in \+Z} \{ \eta \InAngles{z, F(z^{t-1})} + D(z, \hz^t) \} \\ 
    \hz^{t+1} &= \argmin_{z \in \+Z} \{ \eta \InAngles{z, F(z^t)} + D(z, \hz^t) \} 
\end{align*}
Let $z^*$ be the Nash equilibrium. Define $\Theta^t = \KL(z^*, \hz^t) + \frac{1}{16} \KL(\hz^t, z^{t-1})$ and $\zeta^t = \KL(\hz^{t+1}, z^t) + \KL(z^t, \hz^t)$.
\begin{lemma}[Adapted from \citep{wei2021linear}] Consider OWMU for a zero-sum game $A \in [0,1]^{d\times d}$ with $\eta \le \frac{1}{8}$. Then, for any $t \ge 2$, it holds that.
    \begin{align*}
        \Theta^{t+1} \le \Theta^t - \frac{15}{16}\zeta^t.
    \end{align*}
\end{lemma}
Define $C_1 = \frac{16}{15}\Theta^2$. Then we have $\Theta^t \le C_1$ for all $t \ge 2$ and $\sum_{t=2}^\infty \zeta^t \le C_1$.

\begin{claim}
    For all $t \ge 2$, it holds that $y^t[1], \hy^t[1] \in [c, 1-c]$ for some constant $c$ that dose not depend on $\delta$.
\end{claim}
\begin{proof}
    Since $\KL(y^*, \hy^t) \le C_1$ and $y^* \approx (\frac{1}{2}, \frac{1}{2})$. Plus stability between $y^t$ and $\hy^t$.
\end{proof}
\begin{claim}
    For all $t \ge 2$, it holds that $\hx^t[1] \ge c > 0$ for some constant $c$ that dose not depend on $\delta$.
\end{claim}
\begin{proof}
    Since $\KL(x^*, \hx^t) \le C_1$ and $x^*[1] \ge \frac{2}{3}$.
\end{proof}
The only bad case is that $\hx^t[1]$ could be very close to $1$ (equivalently, $\hx^t[2] = 1 - \hx^t[1]$ could be very small). We will show that $\hx^t[1] \ge 1 -\frac{\delta}{3}$ happens for at most $O(\frac{1}{\delta})$ times.

\begin{proposition}
    Let $p \in (0,1)$ and $x \in [0,1]$, then it holds that 
    \begin{align*}
        \frac{p \cdot e^x}{p \cdot e^x + 1 - p} - p \ge \frac{x p}{8}.
    \end{align*}
\end{proposition}
\begin{proof}
    Using inequality $e^x -1 \ge x$ and the fact $e^x \le 3$, we have 
    \begin{align*}
        \frac{p \cdot e^x}{p \cdot e^x + 1 - p} - p = \frac{ p(1-p)(e^x - 1)}{p \cdot e^x + 1 - p} \ge x \cdot \frac{ p(1-p)}{p \cdot e^x + 1 - p} \ge x \cdot \frac{ p(1-p)}{4} \ge \frac{xp}{8}.
    \end{align*}
\end{proof}

\begin{proposition}
    Let $\ell \in \-R^2$ such that $q:= |\ell[1] - \ell[2]| \le 1$, and $v \in \Delta^2$ with $p := \min_i v[i]$. Let $u = \argmin_{x \in \Delta^2}\{\ell^\top x + D(x, v)\}$. Then $\KL(u,v), \KL(v,u) \ge \frac{q p}{4}$.
\end{proposition}
\begin{proof}
    Without loss of generality, we assume $\ell[1] \le \ell[2]$. Then
    \begin{align*}
        u[1] - v[1] = \frac{v[1] \cdot e^{q}}{v[1]\cdot e^{q} + 1 - v[1]} - v[1] \ge \frac{q \cdot v[1] \cdot (1- v[1])}{4} \ge \frac{qp(1-p)}{4} \ge \frac{qp}{8}.
    \end{align*}
    Then $\KL(u,v), \KL(v,u) \ge \InNorms{u-v}_1 = 2(u[1] - v[1]) \ge \frac{qp}{4}$.
\end{proof}

\begin{lemma}
    If $x^t[1] \ge 1- \frac{\delta}{3}$, then $\KL(y^{t+1}, \hy^{t+1}) \ge \frac{c\eta \delta}{12}$.
\end{lemma}
\begin{proof}
    Since $x^t[1] \ge 1- \frac{\delta}{3}$, we can use \Cref{proposition:properties of A} to verify that $e^t_y = \ell^t_y[1] - \ell^t_y[2] \le -\frac{\delta}{3}$. Since $z^{t+1}$ is updated from $\hz^{t+1}$ using $\ell^t_y$, we have $\KL(y^{t+1}, \hy^{t+1}) \ge \frac{\eta\delta}{3} \cdot \frac{c}{4}  = \frac{c\eta \delta}{12}$.
\end{proof}

\begin{theorem}
    There are at most $\frac{12C_1}{c \eta \delta}$ iterations where $x^t[1] \ge 1 - \frac{\delta}{6}$ or $\hx^t[1]\ge 1 - \frac{\delta}{6}$. 
\end{theorem}
\begin{proof}
    Using the above lemma and the stability between $x^t$ and $\hx^t$.
\end{proof}

Then we can conclude a $O(\frac{1}{\delta\sqrt{T}})$ best-iterate convergence rate for $T \ge \Omega(\frac{1}{\delta})$. Combining this rate with the initial phase of linear convergence to an $O(\delta)$-approximate Nash equilibrium, we get a $O(T^{-\frac{1}{4}})$ global best-iterate convergence rate of OMWU.}

\section{Numerical Experiments with Adaptive Stepsizes}\label{app:simu adaptive}
In this section we present our numerical results when \ref{OFTRL} and \ref{OOMD} are instantiated with adaptive stepsize~\citep{duchi2011adaptive}: $\eta_{t} = 1 / \sqrt{\epsilon+\sum_{k=1}^{t-1} \| \bm{\ell}_k\|_{k}^{2}}$ with some constant $\epsilon>0$. We present our numerical experiments in Figure \ref{fig:plots with adaptive step size}, where we choose $\epsilon = 0.1$.

\begin{figure}[t]
    \includegraphics[scale=.72]{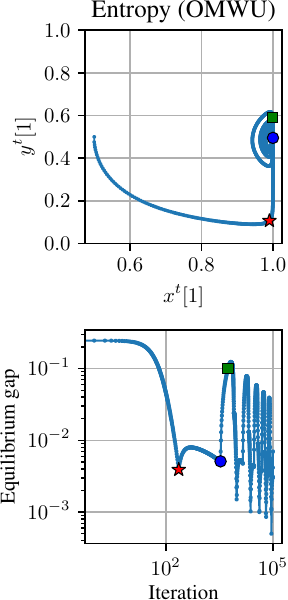}\hfill%
    \includegraphics[scale=.72]{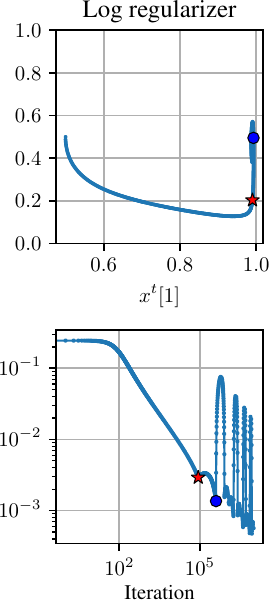}\hfill%
    \includegraphics[scale=.72]{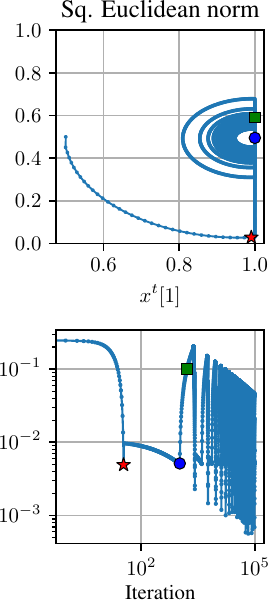}\hfill%
    \includegraphics[scale=.72]{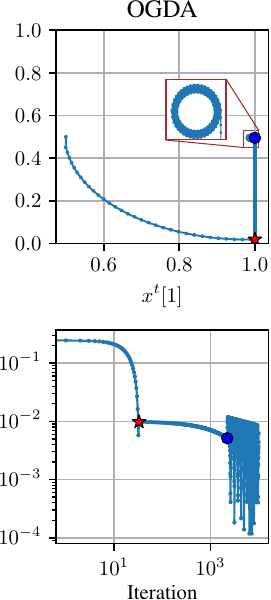}%
    \caption{
        Comparison of the dynamics produced by three variants of OFTRL with different regularizers (negative entropy, logarithmic regularizer, and squared Euclidean norm) and OGDA in the same game $A_\delta$ defined in \eqref{eq:A delta} for $\delta := 10^{-2}$ and adaptive step size with $\epsilon = 0.1$. The bottom row shows the duality gap achieved by the iterates.
        % The OFTRL variants exhibit poor performance due to their lack of \emph{forgetfulness}, while OGDA converges quickly to the Nash equilibrium.
        % Since the regularizers in the first two plots are Legendre, the dynamics are equivalent to the ones produced by optimistic OMD with the respective Bregman divergences.
        % In the plot for OMWU we observe that $x^t[1]$ can get extremely close to the boundary (\emph{e.g.,} in the range $1 - e^{-50} < x^t[1] < 1$).
        % To correctly simulate the dynamics, we used
        % 1000 digits of precision. The red star, blue dot, and green square denote the key times $T_1$, $T_2$, $T_3$ defined in our analysis in Section~\ref{sec:analysis}.
    }
    \label{fig:plots with adaptive step size}
\end{figure}

%%%%%%%%%%%%%%%%%%%%%%%%%%%%%%%%%%%%%%%%%%%%%%%%%%%%%%%%%%%%
\iftrue
    \newpage
    \section*{NeurIPS Paper Checklist}

    \begin{enumerate}

        \item {\bf Claims}
        \item[] Question: Do the main claims made in the abstract and introduction accurately reflect the paper's contributions and scope?
        \item[] Answer: \answerYes{} % Replace by \answerYes{}, \answerNo{}, or \answerNA{}.
        \item[] Justification:  The claims made in the abstract and introduction match both the theoretical and experimental results of the paper.
        \item[] Guidelines:
            \begin{itemize}
                \item The answer NA means that the abstract and introduction do not include the claims made in the paper.
                \item The abstract and/or introduction should clearly state the claims made, including the contributions made in the paper and important assumptions and limitations. A No or NA answer to this question will not be perceived well by the reviewers.
                \item The claims made should match theoretical and experimental results, and reflect how much the results can be expected to generalize to other settings.
                \item It is fine to include aspirational goals as motivation as long as it is clear that these goals are not attained by the paper.
            \end{itemize}

        \item {\bf Limitations}
        \item[] Question: Does the paper discuss the limitations of the work performed by the authors?
        \item[] Answer: \answerYes{} % Replace by \answerYes{}, \answerNo{}, or \answerNA{}.
        \item[] Justification:  The extension to the general $m \times n$ game does not work for the Euclidean and log regularizers because the rescaling factors would be different for the row and column players.
        \item[] Guidelines:
            \begin{itemize}
                \item The answer NA means that the paper has no limitation while the answer No means that the paper has limitations, but those are not discussed in the paper.
                \item The authors are encouraged to create a separate "Limitations" section in their paper.
                \item The paper should point out any strong assumptions and how robust the results are to violations of these assumptions (e.g., independence assumptions, noiseless settings, model well-specification, asymptotic approximations only holding locally). The authors should reflect on how these assumptions might be violated in practice and what the implications would be.
                \item The authors should reflect on the scope of the claims made, e.g., if the approach was only tested on a few datasets or with a few runs. In general, empirical results often depend on implicit assumptions, which should be articulated.
                \item The authors should reflect on the factors that influence the performance of the approach. For example, a facial recognition algorithm may perform poorly when image resolution is low or images are taken in low lighting. Or a speech-to-text system might not be used reliably to provide closed captions for online lectures because it fails to handle technical jargon.
                \item The authors should discuss the computational efficiency of the proposed algorithms and how they scale with dataset size.
                \item If applicable, the authors should discuss possible limitations of their approach to address problems of privacy and fairness.
                \item While the authors might fear that complete honesty about limitations might be used by reviewers as grounds for rejection, a worse outcome might be that reviewers discover limitations that aren't acknowledged in the paper. The authors should use their best judgment and recognize that individual actions in favor of transparency play an important role in developing norms that preserve the integrity of the community. Reviewers will be specifically instructed to not penalize honesty concerning limitations.
            \end{itemize}

        \item {\bf Theory Assumptions and Proofs}
        \item[] Question: For each theoretical result, does the paper provide the full set of assumptions and a complete (and correct) proof?
        \item[] Answer: \answerYes{} % Replace by \answerYes{}, \answerNo{}, or \answerNA{}.
        \item[] Justification: See \Cref{sec: pre}, \Cref{sec:analysis} and the appendix.
        \item[] Guidelines:
            \begin{itemize}
                \item The answer NA means that the paper does not include theoretical results.
                \item All the theorems, formulas, and proofs in the paper should be numbered and cross-referenced.
                \item All assumptions should be clearly stated or referenced in the statement of any theorems.
                \item The proofs can either appear in the main paper or the supplemental material, but if they appear in the supplemental material, the authors are encouraged to provide a short proof sketch to provide intuition.
                \item Inversely, any informal proof provided in the core of the paper should be complemented by formal proofs provided in appendix or supplemental material.
                \item Theorems and Lemmas that the proof relies upon should be properly referenced.
            \end{itemize}

        \item {\bf Experimental Result Reproducibility}
        \item[] Question: Does the paper fully disclose all the information needed to reproduce the main experimental results of the paper to the extent that it affects the main claims and/or conclusions of the paper (regardless of whether the code and data are provided or not)?
        \item[] Answer: \answerYes{} % Replace by \answerYes{}, \answerNo{}, or \answerNA{}.
        \item[] Justification: See \Cref{fig:intro plots} and Python codes in the supplementary materials.
        \item[] Guidelines:
            \begin{itemize}
                \item The answer NA means that the paper does not include experiments.
                \item If the paper includes experiments, a No answer to this question will not be perceived well by the reviewers: Making the paper reproducible is important, regardless of whether the code and data are provided or not.
                \item If the contribution is a dataset and/or model, the authors should describe the steps taken to make their results reproducible or verifiable.
                \item Depending on the contribution, reproducibility can be accomplished in various ways. For example, if the contribution is a novel architecture, describing the architecture fully might suffice, or if the contribution is a specific model and empirical evaluation, it may be necessary to either make it possible for others to replicate the model with the same dataset, or provide access to the model. In general. releasing code and data is often one good way to accomplish this, but reproducibility can also be provided via detailed instructions for how to replicate the results, access to a hosted model (e.g., in the case of a large language model), releasing of a model checkpoint, or other means that are appropriate to the research performed.
                \item While NeurIPS does not require releasing code, the conference does require all submissions to provide some reasonable avenue for reproducibility, which may depend on the nature of the contribution. For example
                      \begin{enumerate}
                          \item If the contribution is primarily a new algorithm, the paper should make it clear how to reproduce that algorithm.
                          \item If the contribution is primarily a new model architecture, the paper should describe the architecture clearly and fully.
                          \item If the contribution is a new model (e.g., a large language model), then there should either be a way to access this model for reproducing the results or a way to reproduce the model (e.g., with an open-source dataset or instructions for how to construct the dataset).
                          \item We recognize that reproducibility may be tricky in some cases, in which case authors are welcome to describe the particular way they provide for reproducibility. In the case of closed-source models, it may be that access to the model is limited in some way (e.g., to registered users), but it should be possible for other researchers to have some path to reproducing or verifying the results.
                      \end{enumerate}
            \end{itemize}

        \item {\bf Open access to data and code}
        \item[] Question: Does the paper provide open access to the data and code, with sufficient instructions to faithfully reproduce the main experimental results, as described in supplemental material?
        \item[] Answer: \answerYes{} % Replace by \answerYes{}, \answerNo{}, or \answerNA{}.
        \item[] Justification: See Python code in the supplementary materials.
        \item[] Guidelines:
            \begin{itemize}
                \item The answer NA means that paper does not include experiments requiring code.
                \item Please see the NeurIPS code and data submission guidelines (\url{https://nips.cc/public/guides/CodeSubmissionPolicy}) for more details.
                \item While we encourage the release of code and data, we understand that this might not be possible, so “No” is an acceptable answer. Papers cannot be rejected simply for not including code, unless this is central to the contribution (e.g., for a new open-source benchmark).
                \item The instructions should contain the exact command and environment needed to run to reproduce the results. See the NeurIPS code and data submission guidelines (\url{https://nips.cc/public/guides/CodeSubmissionPolicy}) for more details.
                \item The authors should provide instructions on data access and preparation, including how to access the raw data, preprocessed data, intermediate data, and generated data, etc.
                \item The authors should provide scripts to reproduce all experimental results for the new proposed method and baselines. If only a subset of experiments are reproducible, they should state which ones are omitted from the script and why.
                \item At submission time, to preserve anonymity, the authors should release anonymized versions (if applicable).
                \item Providing as much information as possible in supplemental material (appended to the paper) is recommended, but including URLs to data and code is permitted.
            \end{itemize}

        \item {\bf Experimental Setting/Details}
        \item[] Question: Does the paper specify all the training and test details (e.g., data splits, hyperparameters, how they were chosen, type of optimizer, etc.) necessary to understand the results?
        \item[] Answer: \answerYes{} % Replace by \answerYes{}, \answerNo{}, or \answerNA{}.
        \item[] Justification: See Python code in the supplementary materials.
        \item[] Guidelines:
            \begin{itemize}
                \item The answer NA means that the paper does not include experiments.
                \item The experimental setting should be presented in the core of the paper to a level of detail that is necessary to appreciate the results and make sense of them.
                \item The full details can be provided either with the code, in appendix, or as supplemental material.
            \end{itemize}

        \item {\bf Experiment Statistical Significance}
        \item[] Question: Does the paper report error bars suitably and correctly defined or other appropriate information about the statistical significance of the experiments?
        \item[] Answer: \answerNA{} % Replace by \answerYes{}, \answerNo{}, or \answerNA{}.
        \item[] Justification: There is no randomness in our experiments.
        \item[] Guidelines:
            \begin{itemize}
                \item The answer NA means that the paper does not include experiments.
                \item The authors should answer "Yes" if the results are accompanied by error bars, confidence intervals, or statistical significance tests, at least for the experiments that support the main claims of the paper.
                \item The factors of variability that the error bars are capturing should be clearly stated (for example, train/test split, initialization, random drawing of some parameter, or overall run with given experimental conditions).
                \item The method for calculating the error bars should be explained (closed form formula, call to a library function, bootstrap, etc.)
                \item The assumptions made should be given (e.g., Normally distributed errors).
                \item It should be clear whether the error bar is the standard deviation or the standard error of the mean.
                \item It is OK to report 1-sigma error bars, but one should state it. The authors should preferably report a 2-sigma error bar than state that they have a 96\% CI, if the hypothesis of Normality of errors is not verified.
                \item For asymmetric distributions, the authors should be careful not to show in tables or figures symmetric error bars that would yield results that are out of range (e.g. negative error rates).
                \item If error bars are reported in tables or plots, The authors should explain in the text how they were calculated and reference the corresponding figures or tables in the text.
            \end{itemize}

        \item {\bf Experiments Compute Resources}
        \item[] Question: For each experiment, does the paper provide sufficient information on the computer resources (type of compute workers, memory, time of execution) needed to reproduce the experiments?
        \item[] Answer: \answerYes{} % Replace by \answerYes{}, \answerNo{}, or \answerNA{}.
        \item[] Justification: Our simulations were done within a few hours on an average consumer laptop.
        \item[] Guidelines:
            \begin{itemize}
                \item The answer NA means that the paper does not include experiments.
                \item The paper should indicate the type of compute workers CPU or GPU, internal cluster, or cloud provider, including relevant memory and storage.
                \item The paper should provide the amount of compute required for each of the individual experimental runs as well as estimate the total compute.
                \item The paper should disclose whether the full research project required more compute than the experiments reported in the paper (e.g., preliminary or failed experiments that didn't make it into the paper).
            \end{itemize}

        \item {\bf Code Of Ethics}
        \item[] Question: Does the research conducted in the paper conform, in every respect, with the NeurIPS Code of Ethics \url{https://neurips.cc/public/EthicsGuidelines}?
        \item[] Answer: \answerYes{} % Replace by \answerYes{}, \answerNo{}, or \answerNA{}.
        \item[] Justification: The authors have reviewed the NeurIPS Code of Ethics. The research conducted in this paper conforms with it in every respect.
        \item[] Guidelines:
            \begin{itemize}
                \item The answer NA means that the authors have not reviewed the NeurIPS Code of Ethics.
                \item If the authors answer No, they should explain the special circumstances that require a deviation from the Code of Ethics.
                \item The authors should make sure to preserve anonymity (e.g., if there is a special consideration due to laws or regulations in their jurisdiction).
            \end{itemize}

        \item {\bf Broader Impacts}
        \item[] Question: Does the paper discuss both potential positive societal impacts and negative societal impacts of the work performed?
        \item[] Answer: \answerNA{} % Replace by \answerYes{}, \answerNo{}, or \answerNA{}.
        \item[] Justification: This paper is mostly theoretical.
        \item[] Guidelines:
            \begin{itemize}
                \item The answer NA means that there is no societal impact of the work performed.
                \item If the authors answer NA or No, they should explain why their work has no societal impact or why the paper does not address societal impact.
                \item Examples of negative societal impacts include potential malicious or unintended uses (e.g., disinformation, generating fake profiles, surveillance), fairness considerations (e.g., deployment of technologies that could make decisions that unfairly impact specific groups), privacy considerations, and security considerations.
                \item The conference expects that many papers will be foundational research and not tied to particular applications, let alone deployments. However, if there is a direct path to any negative applications, the authors should point it out. For example, it is legitimate to point out that an improvement in the quality of generative models could be used to generate deepfakes for disinformation. On the other hand, it is not needed to point out that a generic algorithm for optimizing neural networks could enable people to train models that generate Deepfakes faster.
                \item The authors should consider possible harms that could arise when the technology is being used as intended and functioning correctly, harms that could arise when the technology is being used as intended but gives incorrect results, and harms following from (intentional or unintentional) misuse of the technology.
                \item If there are negative societal impacts, the authors could also discuss possible mitigation strategies (e.g., gated release of models, providing defenses in addition to attacks, mechanisms for monitoring misuse, mechanisms to monitor how a system learns from feedback over time, improving the efficiency and accessibility of ML).
            \end{itemize}

        \item {\bf Safeguards}
        \item[] Question: Does the paper describe safeguards that have been put in place for responsible release of data or models that have a high risk for misuse (e.g., pretrained language models, image generators, or scraped datasets)?
        \item[] Answer: \answerNA{} % Replace by \answerYes{}, \answerNo{}, or \answerNA{}.
        \item[] Justification: This paper poses no such risks.
        \item[] Guidelines:
            \begin{itemize}
                \item The answer NA means that the paper poses no such risks.
                \item Released models that have a high risk for misuse or dual-use should be released with necessary safeguards to allow for controlled use of the model, for example by requiring that users adhere to usage guidelines or restrictions to access the model or implementing safety filters.
                \item Datasets that have been scraped from the Internet could pose safety risks. The authors should describe how they avoided releasing unsafe images.
                \item We recognize that providing effective safeguards is challenging, and many papers do not require this, but we encourage authors to take this into account and make a best faith effort.
            \end{itemize}

        \item {\bf Licenses for existing assets}
        \item[] Question: Are the creators or original owners of assets (e.g., code, data, models), used in the paper, properly credited and are the license and terms of use explicitly mentioned and properly respected?
        \item[] Answer: \answerNA{} % Replace by \answerYes{}, \answerNo{}, or \answerNA{}.
        \item[] Justification: This paper does not use existing assets.
        \item[] Guidelines:
            \begin{itemize}
                \item The answer NA means that the paper does not use existing assets.
                \item The authors should cite the original paper that produced the code package or dataset.
                \item The authors should state which version of the asset is used and, if possible, include a URL.
                \item The name of the license (e.g., CC-BY 4.0) should be included for each asset.
                \item For scraped data from a particular source (e.g., website), the copyright and terms of service of that source should be provided.
                \item If assets are released, the license, copyright information, and terms of use in the package should be provided. For popular datasets, \url{paperswithcode.com/datasets} has curated licenses for some datasets. Their licensing guide can help determine the license of a dataset.
                \item For existing datasets that are re-packaged, both the original license and the license of the derived asset (if it has changed) should be provided.
                \item If this information is not available online, the authors are encouraged to reach out to the asset's creators.
            \end{itemize}

        \item {\bf New Assets}
        \item[] Question: Are new assets introduced in the paper well documented and is the documentation provided alongside the assets?
        \item[] Answer: \answerNA{} % Replace by \answerYes{}, \answerNo{}, or \answerNA{}.
        \item[] Justification: This paper does not release new assets.
        \item[] Guidelines:
            \begin{itemize}
                \item The answer NA means that the paper does not release new assets.
                \item Researchers should communicate the details of the dataset/code/model as part of their submissions via structured templates. This includes details about training, license, limitations, etc.
                \item The paper should discuss whether and how consent was obtained from people whose asset is used.
                \item At submission time, remember to anonymize your assets (if applicable). You can either create an anonymized URL or include an anonymized zip file.
            \end{itemize}

        \item {\bf Crowdsourcing and Research with Human Subjects}
        \item[] Question: For crowdsourcing experiments and research with human subjects, does the paper include the full text of instructions given to participants and screenshots, if applicable, as well as details about compensation (if any)?
        \item[] Answer: \answerNA{} % Replace by \answerYes{}, \answerNo{}, or \answerNA{}.
        \item[] Justification: This paper does not involve crowdsourcing nor research with human subjects.
        \item[] Guidelines:
            \begin{itemize}
                \item The answer NA means that the paper does not involve crowdsourcing nor research with human subjects.
                \item Including this information in the supplemental material is fine, but if the main contribution of the paper involves human subjects, then as much detail as possible should be included in the main paper.
                \item According to the NeurIPS Code of Ethics, workers involved in data collection, curation, or other labor should be paid at least the minimum wage in the country of the data collector.
            \end{itemize}

        \item {\bf Institutional Review Board (IRB) Approvals or Equivalent for Research with Human Subjects}
        \item[] Question: Does the paper describe potential risks incurred by study participants, whether such risks were disclosed to the subjects, and whether Institutional Review Board (IRB) approvals (or an equivalent approval/review based on the requirements of your country or institution) were obtained?
        \item[] Answer: \answerNA{} % Replace by \answerYes{}, \answerNo{}, or \answerNA{}.
        \item[] Justification: This paper does not involve crowdsourcing nor research with human subjects.
        \item[] Guidelines:
            \begin{itemize}
                \item The answer NA means that the paper does not involve crowdsourcing nor research with human subjects.
                \item Depending on the country in which research is conducted, IRB approval (or equivalent) may be required for any human subjects research. If you obtained IRB approval, you should clearly state this in the paper.
                \item We recognize that the procedures for this may vary significantly between institutions and locations, and we expect authors to adhere to the NeurIPS Code of Ethics and the guidelines for their institution.
                \item For initial submissions, do not include any information that would break anonymity (if applicable), such as the institution conducting the review.
            \end{itemize}

    \end{enumerate}
\fi

\end{document}